\pgfplotsset{compat=1.15}
\newtheorem{theorem}{Theorem}[section]    
\newtheorem{claim}[theorem]{Claim}    
\newtheorem{corollary}[theorem]{Corollary}    
\newtheorem{proposition}[theorem]{Proposition}    
\newtheorem{lemma}[theorem]{Lemma}    
\renewcommand{\qed}{\hfill{$\rule{6pt}{6pt}$}} 
\renewenvironment{proof}{\noindent{\bf Proof:}}{\qed\\}
\newenvironment{proofof}[1]{\noindent{\bf Proof of #1:}}{\qed\\}
\newenvironment{remark}{\noindent{\bf Remark:}}
\numberwithin{equation}{section}
\newcommand{\complex}{{\mathbb C}}
\newcommand{\integers}{{\mathbb Z}}
\newcommand{\ket}[1]{\left| #1 \right\rangle}
\newcommand{\ketbra}[2]{| #1 \rangle\!\langle #2 |}
\newcommand{\trnorm}[1]{\left\| #1 \right\|_{\mathrm{1}}}
\newcommand{\size}[1]{\left| #1 \right|}
\newcommand{\set}[1]{\left\{ #1 \right\}}
\newcommand{\floor}[1]{{\lfloor #1 \rfloor}}
\newcommand{\ceil}[1]{{\left \lceil #1 \right \rceil}}
\newcommand{\density}[1]{\ketbra{#1}{#1}}
\newcommand{\eqdef}{\coloneqq}
\newcommand{\complexi}{{\mathrm{i}}}
\newcommand{\abs}[1]{\left| #1 \right|}
\newcommand{\tpmod}[1]{{\@displayfalse\pmod{#1}}}
\newcommand{\br}[1]{\mleft(#1\mright)}
\newcommand*\diff{\mathop{}\!\mathrm{d}}
\DeclareMathOperator{\trace}{Tr}
\DeclareMathOperator{\support}{supp}
\DeclareMathOperator{\Order}{{\mathrm{O}}}
\newcommand{\id}{{\mathbb 1}}
\newcommand{\linear}{{\mathsf L}}
\newcommand{\unitary}{{\mathsf U}}
\newcommand{\qstate}{{\mathsf D}}
\newcommand{\tensor}{\otimes}
\newcommand{\suppress}[1]{}
\newcommand{\comment}[1]{}
\DeclareMathOperator{\rS}{S}
\DeclareMathOperator{\rI}{I}
\DeclareMathOperator{\rF}{F}
\DeclareMathOperator{\rD}{D}
\DeclareMathOperator{\rP}{P}
\DeclareMathOperator{\rV}{V}
\DeclareMathOperator{\Normal}{\Phi}
\newcommand{\ri}{{\mathrm i}}
\newcommand{\rii}{{\mathrm{ii}}}
\DeclareMathOperator{\markdist}{\Delta}
\DeclareMathOperator{\maxmarkdist}{\Delta_{\max}}
\newcommand{\Dh}{{\mathrm D}_{\mathrm H}}
\newcommand{\Imax}{{\mathrm I}_{\max}}
\newcommand{\Dmax}{{\mathrm D}_{\max}}
\newcommand{\Pos}{{\mathsf{Pos}}}
\newcommand{\sB}{{\mathsf B}}
\newcommand{\ME}{{\mathsf {ME}}_{R - B - C}}
\newcommand{\MC}{{\mathsf {MC}}_{R - B - C}}
\newcommand{\QMC}{{\mathsf {QMC}}_{R - B - C}}
\newcommand{\cD}{{\mathcal D}}
\newcommand{\cE}{{\mathcal E}}
\newcommand{\cH}{{\mathcal H}}
\newcommand{\cK}{{\mathcal K}}
\newcommand{\Pitilde}{{\widetilde \Pi}}
\newcommand{\ttau}{{\widetilde \tau}}
\newcommand{\tsigma}{{\widetilde \sigma}}
\newcommand{\barrho}{{\overline \rho}}
\newcommand {\email} [1] {\href{mailto:#1}{\texttt{#1}}}
\begin{document}

\definecolor{yqqqqq}{rgb}{0.5019607843137255,0.,0.}

\title{
\textbf{One-shot quantum state redistribution and quantum Markov chains}\footnote{This paper was presented at the sixteenth conference on the Theory of Quantum Computation, Communication, and Cryptography (TQC 2021) and 2021 IEEE International Symposium on Information Theory (ISIT 2021). An extended abstract of it is published in the proceedings of ISIT 2021\cite{ABJNT21-QSR}. This work was also part of S.B.'s PhD thesis. }
}

\date{}

\author{
Anurag Anshu~\thanks{School of Engineering and Applied Sciences, Harvard University, Cambridge, MA. Email: \email{anuraganshu@seas.harvard.edu}~. Part of the work was done when the author was affiliated to the Department of Combinatorics and Optimization $\&$ the Institute for Quantum Computing, University of Waterloo and the Perimeter Institute for Theoretical Physics, Waterloo, Canada.} \\
Harvard University
\and
Shima Bab Hadiashar~\thanks{Department of Combinatorics and Optimization,
and Institute for Quantum Computing, University
of Waterloo, 200 University Ave.\ W., Waterloo, ON,
N2L~3G1, Canada. Email: \texttt{\{sbabhadi,ashwin.nayak\}@uwaterloo.ca}~.
Research supported in part by NSERC Canada. SBH is also supported 
by an Ontario Graduate Scholarship.} \\
U.\ Waterloo
\and
Rahul Jain~\thanks{Department of Computer Science, and Center for Quantum Technologies, National University of Singapore, 21 Lower Kent Ridge Rd, Singapore 119077. Email: \email{rahul@comp.nus.edu.sg}~.} \\
National U.\ Singapore
\and
Ashwin Nayak~\footnotemark[2] \\
U.\ Waterloo
\and
Dave Touchette~\thanks{Department of Computer Science,
and Institut Quantique, Universit\'e
de Sherbrooke, 2500 Boulevard de l'Universit\'e, Sherbrooke, QC J1K 2R1, Canada. Email: \email{dave.touchette@usherbrooke.ca}~.}\\
U.\ Sherbrooke
}

\maketitle

\begin{abstract}
 We revisit the task of quantum state redistribution in the one-shot setting, and design a protocol for this task with communication cost in terms of a measure of distance from quantum Markov chains. More precisely, the distance is defined in terms of quantum max-relative entropy and quantum hypothesis testing entropy. 
 
 Our result is the first to operationally connect quantum state redistribution and quantum Markov chains, and can be interpreted as an operational interpretation for a possible one-shot analogue of quantum conditional mutual information. The communication cost of our protocol is lower than all previously known ones and asymptotically achieves the well-known rate of quantum conditional mutual information. Thus, our work takes a step towards an optimal characterization of the resources required for one-shot quantum state redistribution, an important open problem in quantum Shannon theory.
\end{abstract}

\section{Introduction}
\label{sec:QSR-Intro}

\subsection{Background and result}

The connection between conditional mutual information and Markov chains has led to a rich body of results in classical computer science and information theory. It is well known that for any tripartite distribution $P^{RBC}$ over registers~$RBC$, the conditional mutual information
$$ \rI(R:C\,|\,B)_P \quad = \quad \min_{Q^{RBC} \;\in\;  \MC} \rD \!\big( P^{RBC} \big\| Q^{RBC} \big) \enspace,$$
where $\MC$ is the set of Markov distributions~$Q$, i.e., those that satisfy $\rI(R:C\,|\,B)_Q = 0$, and~$\rD(\cdot \| \cdot)$ is the relative entropy function. In fact, one can choose a distribution~$Q$ achieving the minimum above with $Q^{RB}=P^{RB}$, $Q^{BC}= P^{BC}$. In the quantum case, the above identity fails drastically. For an example presented in ref.~\cite{CSW12-antisymmetric-state-entanglement} (see also ref.~\cite[Section VI]{Ibinson2008}), the right-hand side is a constant, whereas the left-hand side approaches zero as the system size increases. Given this, it is natural to ask if there is an extension of the classical identity to the quantum case. This has been shown to be true in  a sense that for any tripartite quantum state $\psi^{RBC}$, it holds that
\begin{equation}
\label{eq:condmutrelent}
\rI(R:C \,|\, B)_{\psi} \quad = \quad \min_{\sigma^{RBC}\in \QMC}\br{\rD\br{\psi^{RBC}\|\sigma^{RBC}}-\rD\br{\psi^{BC}\|\sigma^{BC}}} \enspace,
\end{equation}
where $\QMC$ is the set of quantum states $\sigma$ satisfying $\rI(R:C \,|\, B)_{\sigma}=0$, $\psi^{RB}=\sigma^{RB}$~\cite{BertaSW15}.
(For completeness, we provide a proof in Section \ref{sec:QMarkov}, Lemma~\ref{lem:condmutrelent}.) The difference between the quantum and the classical expressions can now be understood as follows. For the classical case, the closest Markov chain $Q$ to a distribution $P$ (in relative entropy) satisfies the aforementioned relations~$Q^{RB}=P^{RB}$ and~$Q^{BC}=P^{BC}$. Thus, the second relative entropy term in Eq.~(\ref{eq:condmutrelent}) vanishes. In the quantum case, due to monogamy of entanglement we cannot in general ensure that $\sigma^{BC}=\psi^{BC}$. Thus, the quantum relative entropy distance to quantum Markov chains can be bounded away from the quantum conditional mutual information.

In this work, we prove a one-shot analogue of Eq.~\eqref{eq:condmutrelent}. This is achieved in an operational manner, by showing that a one-shot analogue of the right-hand side in Eq.~(\ref{eq:condmutrelent}) is the achievable communication cost of the \emph{quantum state redistribution\/} of~$\ket{\psi}^{RABC}$, a purification of~$\psi^{RBC}$. In the task of quantum state redistribution, the pure quantum state~$\ket{\psi}^{RABC}$ is known to two parties, Alice and Bob, and is shared between Alice (who has registers~$AC$), Bob (who has~$B$), and a reference party, Ref (who has~$R$). Additionally, Alice and Bob may share an arbitrary pure entangled state. The goal is to transmit the content of register~$C$ to Bob using a communication protocol involving only Alice and Bob, in such a way that all correlations, including those with Ref, are approximately preserved. (See Figure~\ref{fig:QSR} for an illustration of state redistribution.) Given a quantum state~$\phi^{RBC}$, we identify a natural subset of Markov extensions of~$\phi^{RB}$, which we denote by~$\ME^{\epsilon,\phi}$ and define formally at the end of Section~\ref{sec:QMarkov}, in Eq.~\ref{eq:ME-def}. We establish the following result in terms of the max-relative entropy~($\Dmax$) and~$\epsilon$-hypothesis testing relative entropy~($\Dh^\epsilon$) functions.

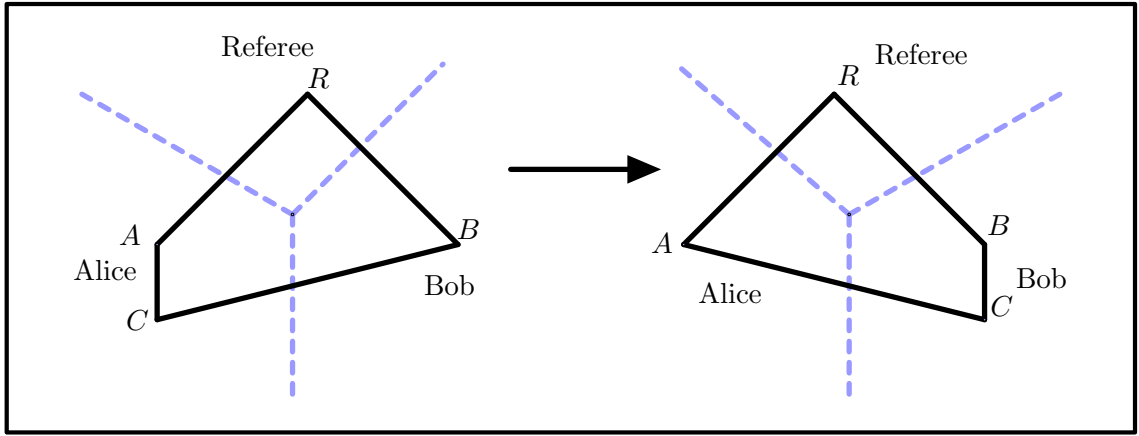
\begin{figure}[htbt]
\centering
\begin{tikzpicture}[ line cap=round,line join=round,>=triangle 45,x=1.0cm,y=1.0cm]

\draw[ ultra thick, fill= blue!10!white] (-4.7, 0) rectangle (9.7,5);

\draw [->,line width=1pt] (1.7,3.) -- (3.7,3.);

\draw [thick, fill=blue!30!white] (-1.,4.) -- (-3.,2.) -- (-3.1,1.) -- (1.,2.) -- (-1.,4.);


\draw [ thick, fill=blue!30!white]  (6.,4.) -- (4.,2.) -- (8.,1.) -- (8.,2.) -- (6.,4.);


\draw [line width=1pt,dash pattern=on 4pt off 4pt,color= orange!80!red] (-1.2,2.4)-- (0.8,4.4);
\draw [line width=1pt,dash pattern=on 4pt off 4pt,color=orange!80!red] (-1.2,2.4)-- (-4.,4.);
\draw [line width=1pt,dash pattern=on 4pt off 4pt,color=orange!80!red] (-1.2,2.4)-- (-1.2,0.8);
\draw[color=black] (-1.2,0.5) node {$\ket{\psi}^{RABC}$};

\draw [line width=1pt,dash pattern=on 4pt off 4pt,color=orange!80!red] (6.2,2.4)-- (6.2,0.8);
\draw [line width=1pt,dash pattern=on 4pt off 4pt,color=orange!80!red] (6.2,2.4)-- (9.,4.);
\draw [line width=1pt,dash pattern=on 4pt off 4pt,color=orange!80!red] (6.2,2.4)-- (3.9,4.4);
\draw[color=black] (6.2,0.5) node {$\phi^{RABC} \approx^{\epsilon} \ket{\psi}^{RABC}$};

\draw[color=black] (-0.86,4.21) node {$R$};
\draw[color=black] (-3.36,2.13) node {$A$};
\draw[color=black] (-3.4,0.99) node {$C$};
\draw[color=black] (1.34,2.21) node {$B$};
\draw[color=black] (6.19,4.26) node {$R$};
\draw[color=black] (3.71,2.02) node {$A$};
\draw[color=black] (8.23,1.18) node {$C$};
\draw[color=black] (8.19,2.26) node {$B$};
\draw[color=orange!50!red] (-2.5,4.4) node {The referee};
\draw[color=orange!50!red] (-4,1.66) node {Alice};
\draw[color=orange!50!red] (0.89,1.2) node { Bob};
\draw[color=orange!50!red] (4.62,1.35) node {Alice};
\draw[color=orange!50!red] (9,1.55) node {Bob};
\draw[color=orange!50!red] (7.86,4.4) node {The referee};
\end{tikzpicture}

   \caption{An illustration of quantum state redistribution.}
   \label{fig:QSR}
\end{figure}

\begin{theorem}
\label{thm:oneshotcondmutrelent}
For any~$\epsilon \in (0, 1/100)$ and pure quantum state $\ket{\psi}^{RABC}$, the quantum communication cost of redistributing the register $C$ from Alice (who initially holds $AC$) to Bob (who initially holds $B$) with error~$10 \sqrt{\epsilon}$ is at most
$$\frac{1}{2} \min_{\psi' \in \sB^{\epsilon}(\psi^{RBC})} \min_{\sigma^{RBC}\in \ME^{\epsilon^2/4,\psi'}} \mleft[ \Dmax \br{\psi'^{RBC} \big\| \sigma^{RBC}}-\Dh^{\epsilon}\br{\psi'^{BC} \big\|\sigma^{BC}} \mright]+ \Order \!\left( \log\frac{1}{\epsilon} \right) \enspace.$$
\end{theorem}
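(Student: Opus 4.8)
The plan is to strip off the two outer minimizations by a smoothing argument, reduce to building a redistribution protocol \emph{relative to a fixed quantum Markov chain}, and then exploit the block structure of Markov chains together with a convex-split/decoupling argument, with Uhlmann's theorem supplying correctness.

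First I would remove the minimization over $\psi'$. Fix a minimizing $\psi'$, so its reduction $(\psi')^{RBC}$ lies in $\sB^\epsilon(\psi^{RBC})$. By Uhlmann's theorem there is a purification $\ket{\psi'}^{RABC}$, on the same (or a harmlessly enlarged) Alice register, within $\Order(\sqrt\epsilon)$ of $\ket\psi^{RABC}$ in purified distance; since any fixed redistribution protocol is a channel and purified distance is monotone, a protocol redistributing $\ket{\psi'}$ with error $\delta$ redistributes $\ket\psi$ with error $\delta+\Order(\sqrt\epsilon)$. So it suffices, writing $\psi$ for $\psi'$ from now on and fixing the minimizing $\sigma^{RBC}\in\ME^{\epsilon^2/4,\psi}$, to exhibit a protocol with quantum communication $\tfrac12\!\left[\Dmax(\psi^{RBC}\|\sigma^{RBC})-\Dh^\epsilon(\psi^{BC}\|\sigma^{BC})\right]+\Order(\log\tfrac1\epsilon)$ and error $\Order(\sqrt\epsilon)$.

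Second, I would use that $\sigma^{RBC}$, being in $\ME^{\epsilon^2/4,\psi}$ (Section~\ref{sec-background}), is an exact quantum Markov chain for $R-B-C$ whose $RB$-marginal is within $\epsilon^2/4$ of $\psi^{RB}$: by the structure theorem, $B\cong\bigoplus_j B_j^{\mathrm L}\otimes B_j^{\mathrm R}$ and $\sigma^{RBC}=\bigoplus_j q_j\,\sigma_j^{RB_j^{\mathrm L}}\otimes\sigma_j^{B_j^{\mathrm R}C}$. Set $k\eqdef\Dmax(\psi^{RBC}\|\sigma^{RBC})$ (so $\psi^{RBC}\le 2^k\sigma^{RBC}$) and $h\eqdef\Dh^\epsilon(\psi^{BC}\|\sigma^{BC})$. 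The crucial observation is that \emph{inside each block~$j$}, under $\sigma$ the register $C$ is correlated with Bob's $B_j^{\mathrm R}$ and uncorrelated with $R$; so transferring $C$ to Bob is a state-transfer-with-side-information (state-merging-type) task, with Bob's $B$ supplying the side information and the inequality $\psi^{RBC}\le 2^k\sigma^{RBC}$ controlling the residual ``new'' randomness in $C$ uniformly over blocks. Concretely the protocol is: Alice appends ancillas and applies an isometry imitating on $C$ the block decomposition of $\sigma$ (flattening the operator inequality into an essentially uniform form by pigeonholing over dyadic spectral scales, the source of the $\Order(\log\tfrac1\epsilon)$); a convex-split step creates $\approx 2^{\,k-h}$ ``copies'' of the $C$-marginal so that Alice's copy-index register decouples from $R$; Alice sends the compressed copy-index register, of $\log$-dimension $\tfrac12(k-h)+\Order(\log\tfrac1\epsilon)$, the factor $\tfrac12$ arising by coherifying the protocol with pre-shared entanglement, as in fully quantum Slepian--Wolf; and Bob, using $B$ together with the hypothesis-testing measurement achieving $\Dh^\epsilon(\psi^{BC}\|\sigma^{BC})$ (position-based decoding) and a subsequent decoding isometry, recovers $C$ coherently. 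Correctness of the last step is where Uhlmann's theorem enters: I would show that after Alice's operation and transmission the joint marginal on Referee's $R$ and Alice's leftover purifying system is $\Order(\sqrt\epsilon)$-close to what it is in $\ket\psi$, so a Bob-side isometry carrying one to the other exists with fidelity $1-\Order(\sqrt\epsilon)$.

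Finally, the error budget: $\Order(\sqrt\epsilon)$ from the $\psi'$-to-$\psi$ smoothing; $\Order(\epsilon)$ from $\sigma$ being only an $\epsilon^2/4$-approximate extension of $\psi^{RB}$; $\Order(\sqrt\epsilon)$ from the gentle-measurement disturbance in the convex-split and position-based-decoding steps; and $\Order(\sqrt\epsilon)$ from the Uhlmann decoding; tracking the explicit constants, these sum to at most $10\sqrt\epsilon$, while the communication is $\tfrac12(k-h)+\Order(\log\tfrac1\epsilon)$ as claimed, and reinstating the minimizations over $\psi'$ and $\sigma$ gives the theorem. I expect the main obstacle to be the splitting/flattening step: making the decoupling error genuinely $\Order(\sqrt\epsilon)$ while the state is only within a $\Dmax$-comparison of $\sigma$ (not equal to it) forces one to build the compressed register out of the $\sigma$-block structure so that Uhlmann's theorem can be applied against the true target $\ket\psi$ rather than a purification of $\sigma$, and to bound the degradation of the $\Dmax$ quantity under smoothing and flattening without losing more than $\Order(\log\tfrac1\epsilon)$ qubits.
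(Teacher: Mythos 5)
Your proposal identifies the right ingredients at a high level (strip the minimizations via Uhlmann smoothing, exploit the Hayden--Jozsa--Petz--Winter block decomposition of $\sigma$, use a convex-split/hypothesis-testing protocol, appeal to Uhlmann for coherent decoding), but it is missing the specific technical mechanism that makes the reduction to a product-$\sigma$ protocol actually go through, and it misattributes several quantities.

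The central difficulty the paper has to overcome is this. After writing $B \cong \bigoplus_j B_j^R \otimes B_j^C$ with $\sigma^{RBC} = \bigoplus_j p(j)\,\sigma_j^{RB_j^R}\otimes\sigma_j^{B_j^C C}$, any operation that decouples $B^C C$ from $RB^R J$ in $\sigma$ must act conditionally on the block label $J$, hence is read-only on $J$. But $J$ is \emph{not} classical in $\psi_1 = U_\ri\ket{\psi'}$ (it carries off-diagonal coherences with $A$), so a $J$-read-only unitary on $JB^C C$ generically disturbs the marginal $\psi_1^{RB^R J}$, which would be seen by the Referee. Your proposal never addresses this. The paper's resolution is the coherent measurement step: attach a maximally entangled state $\ket{\Psi}^{FF'}$ and apply $U_1 = \sum_j \density{j}^J\otimes P_{j,0}^F$ with Heisenberg--Weyl operators. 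Because the $P_{a,b}$ are traceless for $(a,b)\neq(0,0)$, $U_1$ leaves $\sigma_1 \otimes \id^F/|F|$ exactly invariant while pinching $J$ in $\psi_1$'s reduced state, and it keeps $F$ decoupled from $RB^R$. Without this device the decoupling unitary simply cannot be applied safely, so your protocol as described would not preserve the Referee's view.

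Two further points. First, the flattening of the conditional blocks $\sigma_j^{B_j^C C}$ (needed because their spectra vary with $j$) is done in the paper by the Anshu--Jain coherent flattening \emph{via embezzlement} (Corollary~\ref{cor-flatten}/Corollary~\ref{cor-decoupling-C-Q-state}), whose additive overhead $O(\delta_2)$ can be made arbitrarily small and does \emph{not} contribute the $O(\log\tfrac1\epsilon)$; that term comes entirely from the AJW subroutine (Theorem~\ref{thm-QSR-AJW}). Your ``pigeonholing over dyadic spectral scales'' would typically cost $\Theta(\log\log|C|)$ or worse and is not the source of the stated log term. Second, $\sigma\in\ME^{\epsilon^2/4,\psi'}$ means $\sigma^{RB}=\psi'^{RB}$ \emph{exactly}; the $\epsilon^2/4$ controls the closeness of each $\sigma_j^{B_j^C C}$ to the restriction of $\psi'^{BC}$ to the $j$-th block. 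That closeness is used in Claim~\ref{claim-1} to argue that the decoupling projector $\Pitilde$ captures almost all of $\kappa_2$, enabling the $\Dh$ comparison; so your error-budget line attributing $O(\epsilon)$ to ``$\sigma$ being an approximate extension of $\psi^{RB}$'' is based on a misreading of the definition. Finally, you also do not mention the preliminary zero-communication embezzling step that moves $B^C$ from Bob to Alice (the folklore Markov-redistribution protocol), which is needed so that Alice holds all of $JB^C C$ before the decoupling unitary is applied.
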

The difference between minimizing over the set~$\ME^{\epsilon^2/4,\psi'}$ versus~$\QMC$ is best understood from the definitions in Section~\ref{sec-background};we give a brief explanation of the difference and why the set~$\ME^{\epsilon^2/4,\psi'}$ is considered in Section~\ref{sec-intro-techniques}. 
We believe the above result can be stated in terms of a minimization over all of~$\QMC$. 
In the above bound, there is an additional minimization over the set~$\sB^{\epsilon}(\psi^{RBC})$, which is an~$\epsilon$-neighbourhood of~$\psi$ (see Section~\ref{sec-background} for a formal definition). Considering~$\epsilon$ perturbations of the state in question may result in significantly lower communication, at the cost of increasing the error in the output state by at most~$\epsilon$. This also allows us to achieve the optimal rate in the asymptotic i.i.d.\ setting. The information-theoretic quantities appearing in the above bound arise from two subroutines on which the underlying protocol is based --- Coherent Rejection Sampling (building on the Convex-Split Lemma) and Position-Based Decoding. Smooth max-relative entropy and smooth hypothesis testing relative entropy, respectively, are precisely the quantities which appear in the analysis of these subroutines.

The protocol that achieves the bound in Theorem~\ref{thm:oneshotcondmutrelent} is reversible. So, in order to redistribute~$C$ 
 from Alice to Bob, Alice and Bob can instead run the time-reversal of 
 the protocol in which register~$C$ is initially with Bob and he wants to send it to Alice. 
 This implies the following corollary.
 \begin{corollary}
\label{cor:cor-main}
For any pure quantum state $\ket{\psi}^{RABC}$, the quantum communication cost of redistributing the register $C$ from Alice (who initially holds $AC$) to Bob (who initially holds $B$) with error~$10 \sqrt{\epsilon}$ is at most the minimum of
\[
\frac{1}{2} \inf_{\psi' \in \sB^{\epsilon}(\psi^{RBC})} \inf_{\sigma^{RBC}\in \ME^{\epsilon^2/4,\psi'}} \mleft[ \Dmax \br{\psi'^{RBC}\|\sigma^{RBC}}-\Dh^{\epsilon}\br{\psi'^{BC}\|\sigma^{BC}} \mright]+  \Order \!\left( \log\frac{1}{\epsilon} \right) 
\]
and
\[
\frac{1}{2} \inf_{\psi' \in \sB^{\epsilon}(\psi^{RAC})} \inf_{\sigma^{RAC}\in \mathsf{ME}_{R - A - C}^{\epsilon^2/4,\psi'}} \mleft[ \Dmax \br{\psi'^{RAC}\|\sigma^{RAC}}-\Dh^{\epsilon}\br{\psi'^{AC}\|\sigma^{AC}} \mright]+  \Order \!\left( \log\frac{1}{\epsilon} \right) \enspace.
\]
\end{corollary}
Connections between quantum Markov chains and special cases of quantum state redistribution have been made, possibly implicitly, in several previous works. An example is in the compression of mixed states; see, e.g.,~\cite[Section VIII.E]{KN01-compression}. However, as far as we know, Theorem~\ref{thm:oneshotcondmutrelent} is the first result that operationally connects the \emph{cost\/} of quantum state redistribution in its most general form to a measure of distance from quantum Markov chains (even in the asymptotic i.i.d.\ setting). The best previously known achievable one-shot bound for the communication cost of state redistribution, namely,
\begin{equation}
 \label{eq:AJW-bound}
     \frac{1}{2} \inf_{\sigma^{C}} \inf_{\psi' \in \sB^{\epsilon}(\psi^{RBC})} \left(  \Dmax \mleft( \psi'^{RBC} \| \psi'^{RB} \tensor \sigma^{C} \mright) - \Dh^{\epsilon^{2}} \mleft( \psi'^{BC} \| \psi'^{B} \tensor \sigma^{C} \mright)
	 \right) + \log \frac{1}{\epsilon^2} \enspace,
\end{equation}
when the state~$\ket{\psi}^{RABC}$ is redistributed with error~$\Order(\epsilon)$ was due to 
Anshu, Jain, and Warsi~\cite{AJW18-Achievability-QSR}. Note that~$\sigma^{C} \eqdef \psi'^{C}$
is a nearly optimal solution for Eq.~\eqref{eq:AJW-bound} as discussed in 
ref.~\cite{CBR14-smooth-Imax}, and the product state~$\psi'^{RB} \tensor \psi'^{C}$ is a 
Markov state in the set~$\ME^{\epsilon^2/4,\psi'}$. So, the bound in 
Theorem~\ref{thm:oneshotcondmutrelent} is smaller than that in Eq.~\eqref{eq:AJW-bound} in the sense that the 
minimization is over a larger set. In the special case 
where~$\psi^{RBC}$ is a quantum Markov chain, our protocol has near-zero communication. This feature is not present in other protocols and their communication may be as large as~$(1/2) \log |C|$. 
Moreover, in the case that register~$A$, or~$B$,  or both~$A$ and~$B$ are trivial, our bound reduces to $\frac{1}{2}\Imax^{\epsilon}(R:C)$. The three cases correspond to state splitting, state merging, and compression without side-information, respectively, for which this bound is known to be the optimal communication cost in the one-shot case.
\suppress{
In 
particular, if register~$R$ is trivial, our protocol achieves the optimal cost of zero qubits
of communication by choosing~$\sigma^{BC} \eqdef \psi'^{BC}$, whereas the cost of the 
protocol in ref.~\cite{AJW18-Achievability-QSR}, stated in Eq.~\eqref{eq:AJW-bound} may be as 
large as~$(1/2) \log |C|$.
}

\subsection{Techniques}
\label{sec-intro-techniques}

The protocol we design is most easily understood by considering a folklore protocol for redistributing quantum Markov states.
In the case that~$\psi^{RBC}$ is a Markov state, its purification~$\ket{\psi}^{RABC}$ can be transformed through local isometry operators~$V_1 : A \rightarrow A^R J' A^C$ and~$V_2 : B \rightarrow B^R J B^C$ into the following:
\begin{equation}
\label{eq:QMC-extension}
    ( V_1 \tensor V_2 ) \ket{\psi}^{RABC} \quad = \quad \sum_j \sqrt{p(j)} \; \ket{\psi_j}^{RA^RB^R} \tensor \ket{jj}^{JJ'} \tensor \ket{\psi_j}^{A^C B^C C} \enspace.
\end{equation}
The existence of isometries~$V_1$ and~$V_2$ is a consequence of the special structure of quantum Markov states proved by Hayden, Josza, Petz, and Winter~\cite{HJPW04-Qntm-Markov-state}.
Note that after the above transformation, conditioned on registers~$J$ and~$J'$, systems~$R A^R B^R$ are decoupled from systems~$A^C C B^C$. 
So using the embezzling technique due to van Dam and Hayden~\cite{vDH03-embezzling}, conditioned on~$J$ and~$J'$, Alice and Bob can first embezzle-out systems~$A^C C B^C$ and then embezzle-in the same systems \emph{but now with system~$C$ on Bob's side\/} such that at the end the global state is close to the state in Eq.~\eqref{eq:QMC-extension}. 
This protocol incurs no communication; see Fig.~\ref{fig:figure1} for an illustration.

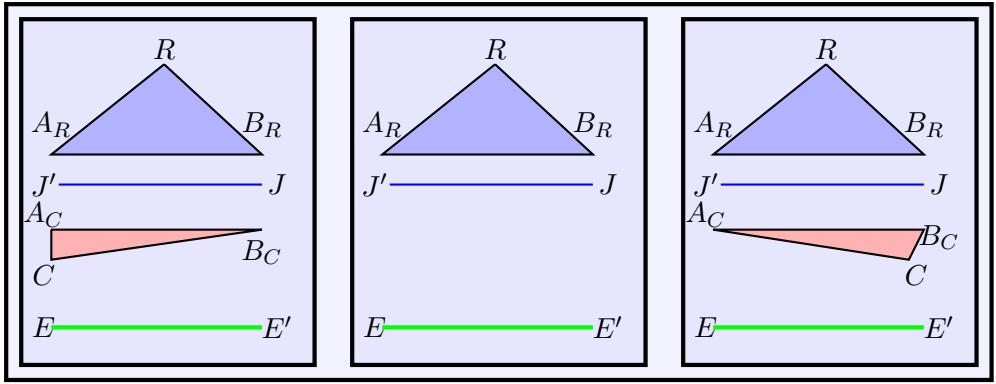
\begin{figure}
    \centering
    \begin{tikzpicture}[xscale=1,yscale=1]

\draw[ultra thick, fill=blue!5!white] (-0.1, -1) rectangle (13,4);

\draw[ultra thick, fill=blue!10!white] (0.1, -0.8) rectangle (4,3.8);

\draw [thick, fill=blue!30!white] (2,3.2) -- (0.5,2) -- (3.3,2) -- (2,3.2);
\node at (2,3.4) {$R$};
\node at (0.5, 2.4) {$A_R$};
\node at (3.3, 2.4) {$B_R$};
\draw [thick, blue] (0.6,1.6) -- (3.3,1.6);
\node at (0.4, 1.6) {$J'$};
\node at (3.5, 1.6) {$J$};
\draw [thick, fill=red!30!white] (0.5, 1) -- (3.3, 1) -- (0.5, 0.6) -- (0.5, 1);
\node at (0.4, 1.2) {$A_C$};
\node at (0.4, 0.4) {$C$};
\node at (3.3, 0.7) {$B_C$};

\draw [ultra thick, green] (0.5,-0.3) -- (3.3,-0.3);
\node at (0.4, -0.3) {$E$};
\node at (3.5, -0.3) {$E'$};

\draw[ultra thick, fill=blue!10!white] (0.1+4.4, -0.8) rectangle (4+4.4,3.8);

\draw [thick, fill=blue!30!white] (2+4.4,3.2) -- (0.5+4.4,2) -- (3.3+4.4,2) -- (2+4.4,3.2);
\node at (2+4.4,3.4) {$R$};
\node at (0.5+4.4, 2.4) {$A_R$};
\node at (3.3+4.4, 2.4) {$B_R$};
\draw [thick, blue] (0.6+4.4,1.6) -- (3.3+4.4,1.6);
\node at (0.4+4.4, 1.6) {$J'$};
\node at (3.5+4.4, 1.6) {$J$};

\draw [ultra thick, green] (0.5+4.4,-0.3) -- (3.3+4.4,-0.3);
\node at (0.4+4.4, -0.3) {$E$};
\node at (3.5+4.4, -0.3) {$E'$};

\draw[ultra thick, fill=blue!10!white] (0.1+8.8, -0.8) rectangle (4+8.8,3.8);

\draw [thick, fill=blue!30!white] (2+8.8,3.2) -- (0.5+8.8,2) -- (3.3+8.8,2) -- (2+8.8,3.2);
\node at (2+8.8,3.4) {$R$};
\node at (0.5+8.8, 2.4) {$A_R$};
\node at (3.3+8.8, 2.4) {$B_R$};
\draw [thick, blue] (0.6+8.8,1.6) -- (3.3+8.8,1.6);
\node at (0.4+8.8, 1.6) {$J'$};
\node at (3.5+8.8, 1.6) {$J$};
\draw [thick, fill=red!30!white] (0.5+8.8, 1) -- (3.3+8.8, 1) -- (3.1+8.8, 0.6) -- (0.5+8.8, 1);
\node at (0.4+8.8, 1.2) {$A_C$};
\node at (3.2+8.8, 0.4) {$C$};
\node at (3.5+8.8, 0.9) {$B_C$};

\draw [ultra thick, green] (0.5+8.8,-0.3) -- (3.3+8.8,-0.3);
\node at (0.4+8.8, -0.3) {$E$};
\node at (3.5+8.8, -0.3) {$E'$};
\end{tikzpicture}
    \caption{An illustration of the zero-cost protocol for redistributing Markov states. Left: Registers~$R A^ RB^R J J' A^C C B^C$ are in the state given in Eq.~\eqref{eq:QMC-extension} and registers~$E$ and~$E'$ contain Alice and Bob's shares of an embezzling state, respectively. Middle: Using embezzling registers, Alice and Bob have jointly ``embezzled out'' registers~$A^C C B^C$ via local unitary operations. I.e., they reverse the process of generating the state in registers~$A^C C B^C $ via embezzlement. Right: Using embezzling registers, conditioned on~$J$ and~$J'$, Alice and Bob embezzle~$\ket{\psi_j}^{A^C C B^C}$ such that registers~$C$ and~$B^C$ are with Bob and register~$A^C$ is with Alice. This step also only involves local unitary operations without any communication. }
    \label{fig:figure1}
\end{figure}

The protocol we design (for redistributing an arbitrary state) is a more sophisticated version of the above protocol. The key technique underlying this protocol is a reduction procedure using embezzling quantum states, that allows us to use a protocol due to Anshu, Jain, and Warsi~\cite{AJW18-Achievability-QSR} as a subroutine.  Let~$\sigma^{RBC}$ be a quantum Markov extension of~$\psi^{RB}$. The reduction procedure is a method which decouples~$C$ from~$RB$ when applied to~$\sigma^{RBC}$, while preserving~$\psi^{RB}$ when applied to~$\psi^{RBC}$. Preserving~$\psi^{RB}$ ensures that the reduction procedure can be implemented via local operations by Alice and Bob, without the need for any communication. Once we have a state~$\sigma^{RBC}$ such that~$\sigma^{RB} = \psi^{RB}$ and~$\sigma^{RBC} = \sigma^{RB} \tensor \sigma^C$, with the max-relative entropy and smooth hypothesis-testing relative entropy expressions as in Eq.~\eqref{eq:AJW-bound} close to those with the original states, state redistribution with the AJW protocol gives us the claimed result. Note that the reduction procedure, and in general our protocol, works for any quantum Markov extension~$\sigma^{RBC}$ of ~$\psi^{RB}$. However, in order to prove the closeness of hypothesis-testing entropy, we need to additionally assume that~$\sigma^{RBC}$ is in~$\ME^{\epsilon^2/4, \psi'}$. (See Eq.~\eqref{eq:Dh-kappa2-product} in Claim~\ref{claim-1} for a formal statement of this closeness property.) Essentially,~$\ME^{\epsilon^2/4, \psi'}$ restricts~$\sigma^{RBC}$ to quantum Markov chains for which~$\sigma_j^{B^C C}$ is close to the projection of~$\psi^{B^C C}$ on the support of~$\sigma_j^{B^C C}$ in the decomposition of~$\sigma^{RBC}$ as in Eq.~\eqref{eq:QMC-extension}.
 
To elaborate further, consider an example where~$\psi^{RBC}$ is the GHZ state~$\tfrac{1}{\sqrt{d}} \sum_{j=1}^d \ket{j}^R \ket{j}^B \ket{j}^C$. In this case, the closest Markov extension~$\sigma^{RBC}$ of~$\psi^{RB}$ is~$\tfrac{1}{d} \sum_{j=1}^{d} \density{j}^R \tensor \density{j}^B \tensor \density{j}^C$. 
A naive way to decouple register~$C$ from registers~$RB$ in~$\sigma^{RBC}$ is to coherently erase register~$C$ conditioned on register~$B$. However, the same operation applied to~$\psi^{RBC}$ changes~$\psi^{RB}$. 
To overcome this problem, first, we coherently ``measure'' register~$B$ by adding a maximally entangled state~$\ket{\Psi}^{TT'}$ and making another ``copy'' of~$\ket{j}^{B}$ in~$\Psi^{T}$. The copying is done by applying a distinct Heisenberg-Weyl operator to the state~$\Psi^T$, for each~$j \in [d]$. This operation measures register~$B$ in~$\psi^{RBC}$, keeps~$\sigma^{RBC}$ unchanged, and leaves~$\Psi^T$ in tensor product with registers~$RB$ in both~$\psi$ and~$\sigma$. Then, conditioned on register~$B$, we can coherently erase register~$C$ in~$\sigma^{RBC}$; this operation applied to~$\psi$ does not change the state~$\psi^{RB}$. Subsection~\ref{sec-GHZ} contains the complete details.
 
For a general state~$\psi^{RBC}$ with quantum Markov extension~$\sigma^{RBC}$, the isometry operator~$V_2$ can be used to transform~$\sigma^{RBC}$ to the classical-quantum state~$\sum_{j} p(j) \sigma_j^{RB^R} \tensor \density{j}^J  \tensor \sigma_j^{B^C C}$. However, we encounter an additional issue here: it may not be possible to unitarily transform all of~$\sigma_j^{B^C C}$ to a fixed state since the spectrum of~$\sigma_j^{B^C C}$ is not necessarily the same for all~$j \in [d]$. So we first ``flatten''~$\sigma_j^{B^C C}$ for each~$j$ through a unitary procedure. This task can be achieved via the technique of coherent flattening via embezzlement due to Anshu and Jain~\cite{AJ18-efficient-convex-split}. After flattening, the dimension of the support of systems~$B^C C$ no longer depends on~$j$ and so the states in registers~$B^C C$ can all be rotated to a flat state over a fixed subspace. Hence,~$B^C C$ gets decoupled from ~$RB^R J$ in the state~$\sigma$. Finally, to keep~$\psi^{RB}$ unchanged, we regenerate the system~$B^C$ via a standard embezzling technique similar to the protocol in Fig.~\ref{fig:figure1}.

\subsection{Organization of the paper}

The rest of this paper is organized as follows. In Section~\ref{sec:Preliminaries}, we present the notation and background necessary for developing the main result, namely Theorem~\ref{thm:oneshotcondmutrelent}. In section~\ref{sec-background}, we review basic concepts and results from quantum information theory. In Section~\ref{sec:QMarkov}, we define quantum Markov states and present some of their properties. We also identify a natural subset of quantum Markov states related to a given state; this subset plays a central role in the main result. 

In Section~\ref{sec-ajw}, we define the task of quantum state redistribution formally, and present two key primitives, namely Coherent Rejection Sampling (implicit in the Convex-Split Lemma) and Position-Based Decoding. We then describe how these are used by Anshu, Jain, and Warsi~\cite{AJW18-Achievability-QSR} to design a one-shot protocol for quantum state redistribution. 

Next we present some of the other components of the new protocol we develop. In Section~\ref{sec:decoupl-CQ}, we introduce a technique for decoupling classical-quantum states via embezzlement~\cite{vDH03-embezzling} and a flattening technique designed in ref.~\cite{AJ18-efficient-convex-split}.

We develop the new protocol for one-shot quantum state redistribution in Section~\ref{sec-new-protocol}. We first explain the intuition behind the protocol in detail by considering the example of the~$d$-dimensional GHZ state in Section~\ref{sec-GHZ}. We then describe the steps of the protocol for arbitrary states and analyze it in Section~\ref{sec:protocol-general}. We show how the one-shot protocol leads to the optimal communication rate for quantum state redistribution in the asymptotic i.i.d.\ case in Section~\ref{sec-aiid-qsr}. 

We conclude with a summary of the results and an outlook in Section~\ref{sec-outlook}.

Throughout Sections~\ref{sec:QMarkov}--\ref{sec:decoupl-CQ}, we provide proofs of some lemmas and theorems which are implicit in the literature. Most of these proofs are not essential for understanding the main result of this paper. The reader may safely skip the proofs if they so wish. The reader familiar with the prior work mentioned above may also start with Section~\ref{sec-new-protocol} directly, and refer to Section~\ref{sec:Preliminaries} as needed.

\section{Preliminaries}
\label{sec:Preliminaries}

\subsection{Mathematical notation and background}
\label{sec-background}

For a thorough introduction to basics of quantum information and Shannon theory, we refer the reader to the books by Watrous~\cite{W18-TQI} and Wilde~\cite{Wilde13-qit}. In this section,  we briefly review the
notation and some results that we use in this article.

For the sake of brevity, we denote the set~$\{1,2,\ldots,k\}$ by~$[k]$.
We denote physical quantum systems (``registers'') with capital letters,
like~$A$,~$B$ and~$C$. The state space corresponding to a register is
a finite-dimensional Hilbert space. We denote (finite dimensional) 
Hilbert spaces by capital script letters like~$\cH$ and $\cK$, 
and the Hilbert space corresponding to a register~$A$ by~$\cH^{A}$. We denote the dimension of the space~$\cH^{A}$ by~$\size{A}$. We sometimes refer to the space corresponding to the register~$A$ by the name of the register. 

We use the Dirac notation, i.e., ``ket'' and ``bra'', for unit vectors and
their adjoints, respectively. We denote the set of all linear operators on Hilbert 
space~$\cH$ by~$\linear(\cH)$, the set of all positive semi-definite operators
by~$\Pos(\cH)$, the set of all unitary operators by~$\unitary(\cH)$, and
the set of all quantum states (or ``density operators'') over~$\cH$
by~$\qstate(\cH)$. The identity operator on space~$\cH$ or register~$A$, is denoted
by~$\id^\cH$ or~$\id^A$, respectively. Similarly, we use superscripts to indicate the registers on which an operator acts. We say a positive semi-definite operator~$M \in \Pos(\cH)$ is a \emph{measurement operator\/} if~$M \preceq\id^{\cH}$, where~$\preceq$ denotes \emph{L{\"o}wner order} for Hermitian operators. 

\suppress{
Let~$T$ be a register with~$\size{T} = d \ge 1$. For~$a,b \in [d]$, the \emph{Heisenberg-Weyl\/} operator~$P_{a,b} \in \unitary \br{\cH^{T}}$ is defined as
\[
P_{a,b} \quad \coloneqq \quad \sum_{t = 1}^{d} \exp\br{ \frac{2\pi \complexi tb}{d} } \; \ketbra{t \oplus a}{t} \enspace,
\]
where the addition is cyclic, i.e., $t \oplus a = t+a - \floor{(t+a-1)/d} d$. We write~$P_a$ to denote~$P_{a,d}$. 
}

Let~$T$ be a register with~$\size{T} = d \ge 1$. For~$a \in [d]$, we define the operator~$P_{a} \in \unitary \br{\cH^{T}}$ as
\[
P_{a} \quad \coloneqq \quad \sum_{t = 1}^{d} \ketbra{t \oplus a}{t} \enspace,
\]
where the addition~`$\oplus$' is cyclic, i.e., $t \oplus a = t+a - d \floor{(t+a-1)/d}$. This is the~$a$-th power of the generalized Pauli operator (also called a \emph{Heisenberg-Weyl\/} operator).

We denote quantum states by lowercase Greek letters like~$\rho, \sigma$. 
We use the notation~$\rho^{A}$ to indicate that register~$A$ is in quantum state~$\rho$.
We denote the \emph{partial trace\/} operation over register~$A$ by~$\trace_{A}$. When it is clear from the context, we also use~$\rho^{B}$ to denote the partial trace of a state~$\rho^{AB}$ over~$B$.  We 
say~$\rho^{AB}$ is an \emph{extension\/} of~$\sigma^{A}$ 
if~$\trace_{B} (\rho^{AB} ) = \sigma^{A}$. A \emph{purification} of a quantum 
state~$\rho$ is an extension of~$\rho$ with rank one. For the Hilbert space~$\complex^S$ for some set~$S$, we refer to the basis~$\set{ \ket{x} : x \in S}$ as the canonical basis for the space. We say the register~$X$ 
is \emph{classical\/} in a quantum state~$\rho^{XB}$ if~$\rho^{XB}$ is block-diagonal 
in the canonical basis of~$X$, i.e.,~$\rho^{XB} = \sum_{x} p(x) \density{x}^{X} \tensor \rho^{B}_{x}$ for some probability distribution~$p$ on~$X$.
For a non-trivial register~$B$, we say~$\rho^{XB}$ is a \emph{classical-quantum} state if~$X$ is classical in~$\rho^{XB}$.
We say a unitary operator~$U^{AB} \in \unitary(\cH^{A} \tensor \cH^{B})$ 
is \emph{read-only} on register~$A$ if it is block-diagonal in the canonical basis of~$A$, i.e.,~$U^{AB} = \sum_{a} \density{a}^A \tensor U_{a}^{B}$ where each~$U_{a}^{B}$ is a unitary operator.

The \emph{trace norm\/} (Schatten 1 norm) of an operator~$M \in \linear(\cH)$ is the sum of its 
singular values and we denote it by~$\trnorm{M}$. The \emph{trace distance\/} between~$\rho$ and~$\sigma$
is induced by trace norm. The following theorem is a well-known property of 
trace norm (see, e.g.,~\cite[Theorem~3.4, page~128]{W18-TQI}). 
\begin{theorem} [Holevo-Helstrom \cite{Hel67-detection,Hol72-decision}]
\label{prop-trnorm-povm}
For any pair of quantum states~$\rho, \sigma \in \qstate(\cH)$,
\[
\trnorm{\rho - \sigma} \quad = \quad 2 \; \max \;
    \set{ \; \abs{ \trace( \Pi \rho) - \trace( \Pi \sigma) } : \Pi\preceq \id, \Pi \in \Pos(\cH) } \enspace.
\]
\end{theorem}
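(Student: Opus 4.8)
The plan is to prove the two inequalities separately, using the variational characterization of the trace norm via the spectral decomposition of the Hermitian operator $\rho - \sigma$. First I would note that since $\rho$ and $\sigma$ are quantum states, $\rho - \sigma$ is Hermitian with trace zero, so it decomposes as $\rho - \sigma = P - N$ where $P, N \in \Pos(\cH)$ are supported on orthogonal subspaces (the positive and negative eigenspaces of $\rho - \sigma$), and $\trnorm{\rho - \sigma} = \trace(P) + \trace(N)$; the trace-zero condition gives $\trace(P) = \trace(N) = \tfrac{1}{2}\trnorm{\rho-\sigma}$. Let $\Pi_+$ be the orthogonal projector onto the support of $P$.

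For the lower bound ($\trnorm{\rho-\sigma} \le 2 \sup_\Pi \ldots$), I would simply exhibit the witness $\Pi = \Pi_+$: it satisfies $0 \preceq \Pi_+ \preceq \id$, and $\trace(\Pi_+ \rho) - \trace(\Pi_+ \sigma) = \trace(\Pi_+(\rho-\sigma)) = \trace(\Pi_+ P) - \trace(\Pi_+ N) = \trace(P) - 0 = \tfrac{1}{2}\trnorm{\rho-\sigma}$. Hence the supremum on the right-hand side is at least $\tfrac{1}{2}\trnorm{\rho-\sigma}$, giving one direction. For the upper bound ($2 \sup_\Pi \ldots \le \trnorm{\rho-\sigma}$), I would take an arbitrary measurement operator $\Pi$ with $0 \preceq \Pi \preceq \id$ and bound
\[
\abs{\trace(\Pi(\rho-\sigma))} \;=\; \abs{\trace(\Pi P) - \trace(\Pi N)} \;\le\; \max\set{\trace(\Pi P),\, \trace(\Pi N)} \;\le\; \max\set{\trace(P),\, \trace(N)} \;=\; \tfrac{1}{2}\trnorm{\rho-\sigma},
\]
where the first inequality uses that both $\trace(\Pi P)$ and $\trace(\Pi N)$ are nonnegative (as $\Pi, P, N \succeq 0$), and the second uses $\Pi \preceq \id$ together with $P, N \succeq 0$. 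Taking the supremum over all such $\Pi$ yields the reverse inequality, and combining the two directions gives equality.

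The argument is essentially routine once the Jordan--Hahn-type decomposition of $\rho - \sigma$ is in place; the only point requiring a little care is the step $\abs{\trace(\Pi P) - \trace(\Pi N)} \le \max\set{\trace(\Pi P), \trace(\Pi N)}$, which is valid precisely because both terms are nonnegative. I would also remark that the supremum is attained (so ``$\sup$'' can be written ``$\max$'' as in the statement), since $\Pi_+$ achieves it and the feasible set $\set{\Pi : 0 \preceq \Pi \preceq \id}$ is compact. No smoothing, approximation, or quantum-information machinery beyond the spectral theorem is needed; this is the standard textbook proof and I do not anticipate any genuine obstacle.
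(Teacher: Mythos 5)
Your proof is correct; the paper states this theorem without proof, citing Watrous's textbook, and your argument via the Jordan decomposition $\rho-\sigma = P - N$ is exactly the standard proof given there. No issues.
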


\begin{lemma} [Gentle Measurement~\cite{Win99-GML,ON07-GML}]
\label{prop-gentl-meas}
Let~$\epsilon \in [0,1]$,~$\rho \in \qstate(\cH)$ and~$\Pi \in \Pos(\cH)$ be a measurement operator such that~$\trace (\Pi \rho) \ge 1-\epsilon$. Then,
\begin{equation*}
    \trnorm{ \frac{\Pi \rho \Pi}{\trace (\Pi \rho)}- \rho } \quad \le \quad 2 \sqrt{\epsilon} \enspace.
\end{equation*}
\end{lemma}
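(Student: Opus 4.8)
The cleanest route to the constant $2\sqrt{\epsilon}$ goes through Uhlmann's theorem and the Fuchs--van de Graaf inequalities; a more pedestrian operator-inequality argument lands within an additive~$\epsilon$ of it, and the normalization is the one place that needs a little care. Write $t \eqdef \trace(\Pi\rho)$ and $s \eqdef \trace(\Pi^{2}\rho)$, so that $0 \le s \le t \le 1$ because $\Pi^{2} \preceq \Pi \preceq \id$. If $\epsilon \ge 1$ there is nothing to prove, since the trace distance between two subnormalized positive operators is at most $2 \le 2\sqrt{\epsilon}$; so assume $\epsilon < 1$, whence $t \ge 1-\epsilon > 0$, and then $\Pi\rho \ne 0$ forces $\Pi\rho\Pi \ne 0$ and $s > 0$. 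Put $\sigma_{0} \eqdef \Pi\rho\Pi/s$, a genuine density operator, and note that the operator appearing in the statement is $\Pi\rho\Pi/t = \lambda\,\sigma_{0}$ with $\lambda \eqdef s/t \in (0,1]$.

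The first step is to lower bound the root fidelity $\mathrm{F}(\rho,\sigma_{0}) \eqdef \trnorm{\sqrt{\rho}\sqrt{\sigma_{0}}}$ by Uhlmann's theorem. Fix a purification $\ket{\psi}$ of $\rho$; then $(\Pi \tensor \id)\ket{\psi}$ has squared norm $s$ and purifies $\Pi\rho\Pi$, so its normalization purifies $\sigma_{0}$, and its overlap with $\ket{\psi}$ is $\bra{\psi}(\Pi \tensor \id)\ket{\psi}/\sqrt{s} = \trace(\Pi\rho)/\sqrt{s} = t/\sqrt{s}$; hence $\mathrm{F}(\rho,\sigma_{0}) \ge t/\sqrt{s}$. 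The second step absorbs the missing normalization: embed $\rho$ and $\lambda\sigma_{0}$ as the normalized block-diagonal states $\rho \oplus 0$ and $\lambda\sigma_{0} \oplus (1-\lambda)$ on $\cH \oplus \complex$, whose fidelity equals $\sqrt{\lambda}\,\mathrm{F}(\rho,\sigma_{0}) \ge \sqrt{\lambda}\cdot t/\sqrt{s} = \sqrt{t} \ge \sqrt{1-\epsilon}$. Applying the ordinary Fuchs--van de Graaf inequality to these two normalized states gives $\tfrac{1}{2}\big(\trnorm{\rho - \lambda\sigma_{0}} + (1-\lambda)\big) \le \sqrt{1-(1-\epsilon)} = \sqrt{\epsilon}$, and discarding the nonnegative term $1-\lambda$ yields $\trnorm{\rho - \Pi\rho\Pi/t} \le 2\sqrt{\epsilon}$, which is the claim.

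The step I expect to be the only real obstacle is the normalization bookkeeping in the second step. The naive alternative --- use $\Pi\rho\Pi - \rho = -(\id - \Pi)\rho\Pi - \rho(\id - \Pi)$, bound each summand by the Cauchy--Schwarz inequality $\trnorm{XY} \le \|X\|_{2}\|Y\|_{2}$ for the trace norm together with $(\id - \Pi)^{2} \preceq \id - \Pi$ to obtain $\trnorm{\rho - \Pi\rho\Pi} \le 2\sqrt{\epsilon}$, and then rescale by $1/t$ --- only delivers $2\sqrt{\epsilon} + \epsilon$, because the triangle inequality there discards the multiplicative interaction between the normalization factor and the fidelity deficit that the argument above exploits. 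That elementary bound already suffices whenever one only needs the unnormalized operator $\Pi\rho\Pi$ (or $\sqrt{\Pi}\rho\sqrt{\Pi}$, for which the same computation gives $\trnorm{\rho - \sqrt{\Pi}\rho\sqrt{\Pi}} \le 2\sqrt{\epsilon}$ via $(\id - \sqrt{\Pi})^{2} \preceq \id - \Pi$); the fidelity route is needed only to make the normalized statement tight, and it is: $\rho = \density{0}$, $\Pi = \density{+}$ attains equality at $\epsilon = \tfrac12$.
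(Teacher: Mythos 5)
Your proof is correct, and it needs to stand on its own here: the paper states this lemma with a citation to Winter and to Ogawa--Nagaoka and gives no proof of its own. Your argument is the standard sharp one --- lower-bounding the fidelity of $\rho$ with the normalized post-measurement state via an explicit pair of purifications (Uhlmann), absorbing the trace deficit $1-\lambda$ by the block-diagonal embedding (equivalently, the generalized fidelity the paper already uses), and finishing with Fuchs--van de Graaf --- and every step checks out, including the consistency bound $t\le\sqrt{s}$ implicit in $\rF(\rho,\sigma_0)\ge t/\sqrt{s}\le 1$, the additivity of trace norm and root fidelity over blocks, and the tightness example. The only cosmetic caveat is the degenerate case $\epsilon=1$, $t=0$, where the normalized operator is undefined rather than merely "nothing to prove", but that does not affect the lemma as used in the paper.
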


The \emph{fidelity} between two sub-normalized states~$\rho$ and~$\sigma$
is defined as
\[
\rF(\rho,\sigma) \quad \coloneqq \quad \trace \sqrt{\sqrt{\rho}\ \sigma \sqrt{\rho}} + 
\sqrt{ \left( 1- \trace(\rho)\right)\left( 1- \trace(\sigma)\right)}\enspace.
\]
Fidelity can be used to define a useful metric called the 
\emph{purified distance\/}~\cite{Rast02-cloning,Rast03-cloning,Rast06-Sine,GLN05-dist-meas,TCR10-duality} between
quantum states:
\[
\rP(\rho,\sigma)\quad\coloneqq\quad\sqrt{1-\rF(\rho,\sigma)^{2}}\enspace.
\]

Purified distance and trace distance are related to each other as follows (see, e.g.,~\cite[Theorem~3.33, page 161]{W18-TQI}):
\begin{theorem}[Fuchs and van de Graaf inequality \cite{FG99-dist-meas}]
\label{prop-FvdG}
For any pair of quantum states~$\rho, \sigma \in \qstate(\cH)$,
\[
1 - \sqrt{1- \rP(\rho,\sigma)^2}
    \quad \leq \quad \frac{1}{2} \trnorm{\rho - \sigma}
    \quad \leq \quad \rP(\rho, \sigma) \enspace.
\]
\end{theorem}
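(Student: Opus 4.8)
The plan is to prove the Fuchs--van de Graaf inequalities $1-\sqrt{1-\rP(\rho,\sigma)^2} \le \tfrac12\trnorm{\rho-\sigma} \le \rP(\rho,\sigma)$ for normalized states $\rho,\sigma\in\qstate(\cH)$, relating trace distance to fidelity (and hence to purified distance, since $\rP(\rho,\sigma)^2 = 1-\rF(\rho,\sigma)^2$). First I would record the reformulation: writing $F \coloneqq \rF(\rho,\sigma) = \trnorm{\sqrt{\rho}\sqrt{\sigma}}$ for normalized states, the claimed inequalities become $1-F \le \tfrac12\trnorm{\rho-\sigma} \le \sqrt{1-F^2}$. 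I would prove these two bounds separately, both via a purification/Uhlmann argument combined with the Holevo--Helstrom characterization (Theorem~\ref{prop-trnorm-povm}) and the contractivity of the trace norm under partial trace.

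For the upper bound $\tfrac12\trnorm{\rho-\sigma}\le\sqrt{1-F^2}$: by Uhlmann's theorem choose purifications $\ket{x},\ket{y}\in\cH\tensor\cK$ of $\rho,\sigma$ with $\abs{\braket{x}{y}} = F$. Trace distance is non-increasing under the partial trace over $\cK$, so $\tfrac12\trnorm{\rho-\sigma} \le \tfrac12\trnorm{\density{x}-\density{y}}$. For pure states one computes the two nonzero eigenvalues of $\density{x}-\density{y}$ directly (the operator acts nontrivially only on $\Span\{\ket{x},\ket{y}\}$), and a short $2\times 2$ calculation gives $\tfrac12\trnorm{\density{x}-\density{y}} = \sqrt{1-\abs{\braket{x}{y}}^2} = \sqrt{1-F^2}$. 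This yields the right-hand inequality.

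For the lower bound $1-F \le \tfrac12\trnorm{\rho-\sigma}$: here the key is a variational expression for fidelity as a minimum over measurements. One standard route is to use the measurement characterization $\rF(\rho,\sigma) = \min_{\{\Pi_m\}} \sum_m \sqrt{\trace(\Pi_m\rho)\,\trace(\Pi_m\sigma)}$ over POVMs $\{\Pi_m\}$ (this can be taken as known, or derived from Uhlmann's theorem as above by noting that measuring a purification and tracing the environment realizes the classical fidelity of the outcome distributions, which lower-bounds $F$, with equality for the optimal measurement). Then for any POVM, $1 - \sum_m\sqrt{p_m q_m} \le \tfrac12\sum_m\abs{p_m-q_m}$ where $p_m = \trace(\Pi_m\rho)$, $q_m = \trace(\Pi_m\sigma)$ — this is the elementary inequality $1-\sum\sqrt{p_m q_m} = \tfrac12\sum(\sqrt{p_m}-\sqrt{q_m})^2 \le \tfrac12\sum\abs{\sqrt{p_m}-\sqrt{q_m}}\abs{\sqrt{p_m}+\sqrt{q_m}} = \tfrac12\sum\abs{p_m-q_m}$. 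Taking the minimizing measurement for $F$ on the left and using that $\tfrac12\sum_m\abs{p_m-q_m}\le\tfrac12\trnorm{\rho-\sigma}$ (Holevo--Helstrom, or monotonicity of trace distance under the measurement channel) gives $1-F \le \tfrac12\trnorm{\rho-\sigma}$.

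The main obstacle is establishing the measurement/variational characterizations of fidelity cleanly without circularity — in particular the lower-bound direction needs $\rF(\rho,\sigma) \le \sum_m\sqrt{p_m q_m}$ for every POVM (monotonicity of fidelity under quantum channels, applied to the measurement channel) together with the existence of a POVM attaining equality. I would either cite these as standard facts about fidelity from Watrous~\cite{W18-TQI} or include the short Uhlmann-based derivation: purify $\rho,\sigma$ to $\ket{x},\ket{y}$ with $\abs{\braket{x}{y}}=F$, let the measurement act on the purifying system via a Naimark dilation, and observe that the resulting outcome distributions $p,q$ satisfy $\sum_m\sqrt{p_m q_m} \ge \abs{\braket{x}{y}} = F$ by Cauchy--Schwarz, with a choice of purification-compatible measurement achieving equality. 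Once both characterizations are in hand, the remaining steps are the two short calculations above.
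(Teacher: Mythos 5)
Your proof is correct. The paper itself does not prove Theorem~\ref{prop-FvdG} at all --- it is stated as a known fact with a citation to Watrous~\cite{W18-TQI} --- and your argument is precisely the standard one from that reference: the upper bound via Uhlmann purifications, monotonicity of the trace norm under partial trace, and the explicit $2\times 2$ computation for pure states; the lower bound via the measurement characterization of fidelity and the elementary inequality $1-\sum_m\sqrt{p_m q_m}=\tfrac12\sum_m(\sqrt{p_m}-\sqrt{q_m})^2\le\tfrac12\sum_m\abs{p_m-q_m}$. Your reformulation $1-F\le\tfrac12\trnorm{\rho-\sigma}\le\sqrt{1-F^2}$ is valid here because the theorem is restricted to normalized states, so the extra term in the paper's fidelity definition vanishes and $\sqrt{1-\rP^2}=\rF$; the only ingredient you flag as needing care (monotonicity of fidelity under measurement channels plus attainability) is indeed standard and safely citable.
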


For a quantum state~$\rho \in \qstate(\cH)$ and~$\epsilon\in[0,1]$, we
define
\[
\sB^{\epsilon}(\rho) \quad \coloneqq \quad
    \set{ \widetilde{\rho} \in \qstate(\cH) : ~ \rP(\rho, \widetilde{\rho}) 
    \leq \epsilon }
\]
as the ball of quantum states that are within purified 
distance~$\epsilon$ of~$\rho$. Note that in some works, the states in the set~$\sB^{\epsilon}(\rho)$ are allowed to be sub-normalized. Here, we require the states in the ball to have trace equal to one. 

\begin{theorem}[Uhlmann \cite{Uhl76-Uhlmann-thm}]
Consider quantum states~$\rho^{A},\sigma^{A} \in \qstate (\cH^{A})$. Suppose~$\ket{\xi}^{AB},\ket{\theta}^{AB} \in \qstate(\cH^{A} \tensor \cH^B)$ are arbitrary purifications of~$\rho^{A}$ and~$\sigma^{A}$, respectively. Then, there exists some unitary operator~$V^{B} \in \unitary(\cH^{B})$ such that
\begin{equation*}
    \rP \mleft( \ket{\xi}^{AB},\left(\id \tensor V^{B} \right) \ket{\theta}^{AB} \mright) \quad = \quad \rP(\rho^{A}, \sigma^{A}) \enspace.
\end{equation*}
\end{theorem}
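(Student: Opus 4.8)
The plan is to reduce the statement to a single extremal computation. Since $\rP(\cdot,\cdot)=\sqrt{1-\rF(\cdot,\cdot)^{2}}$ and the fidelity between two unit vectors is the modulus of their inner product, the conclusion is equivalent to producing a unitary $V^{B}\in\unitary(\cH^{B})$ with $\abs{\bra{\xi}^{AB}\br{\id^{A}\tensor V^{B}}\ket{\theta}^{AB}}=\rF(\rho^{A},\sigma^{A})$. (Monotonicity of fidelity under $\trace_{B}$ shows that no choice of $V^{B}$ can push this modulus above $\rF(\rho^{A},\sigma^{A})$, but for the theorem as stated only the existence of an optimal $V^{B}$ is needed.) So the entire task is to show that $\max_{V^{B}}\abs{\bra{\xi}\br{\id\tensor V^{B}}\ket{\theta}}=\rF(\rho^{A},\sigma^{A})$ and that the maximum is attained.

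First I would vectorize. Fixing orthonormal bases of $\cH^{A}$ and $\cH^{B}$, the operator--vector correspondence identifies $\ket{\xi}^{AB}$ and $\ket{\theta}^{AB}$ with operators $X,Y\in\linear(\cH^{B},\cH^{A})$, where the normalization is such that $\trace_{B}\ketbra{\xi}{\xi}=XX^{\adjoint}=\rho^{A}$ and $\trace_{B}\ketbra{\theta}{\theta}=YY^{\adjoint}=\sigma^{A}$. Under this identification, $\braket{\xi}{\theta}$ becomes $\trace\br{X^{\adjoint}Y}$, and $\id^{A}\tensor V^{B}$ acts on the vector as right multiplication of the associated operator by a unitary $W$ (the transpose of $V^{B}$, up to conjugation conventions), which ranges over all unitaries on $\cH^{B}$ as $V^{B}$ does. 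So the quantity to be maximized is $\abs{\trace\br{X^{\adjoint}YW}}$ over unitary $W$.

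Next I would invoke the standard fact (duality of the trace norm and the operator norm, with the extremum attained at a unitary): for any operator $M$, the maximum of $\abs{\trace(MW)}$ over unitaries $W$ equals $\trnorm{M}$, reached at $W=U^{\adjoint}$ for any polar decomposition $M=U\abs{M}$. Applied with $M=X^{\adjoint}Y$ this gives $\max_{V^{B}}\abs{\bra{\xi}\br{\id\tensor V^{B}}\ket{\theta}}=\trnorm{X^{\adjoint}Y}=\trace\sqrt{Y^{\adjoint}XX^{\adjoint}Y}=\trace\sqrt{Y^{\adjoint}\rho^{A}Y}$. Writing $P=\sqrt{\rho^{A}}\,Y$ and using $\trnorm{P}=\trnorm{P^{\adjoint}}$, we get $\trace\sqrt{Y^{\adjoint}\rho^{A}Y}=\trnorm{P}=\trnorm{P^{\adjoint}}=\trace\sqrt{\sqrt{\rho^{A}}\,\sigma^{A}\sqrt{\rho^{A}}}=\rF(\rho^{A},\sigma^{A})$, which is the desired value; taking $V^{B}$ to be the conjugate of the optimal $W$ completes the proof.

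The step I expect to need the most care is the operator--vector bookkeeping: checking that the reduced states come out as $\rho^{A},\sigma^{A}$ rather than their transposes, that the correspondence really sends $\id^{A}\tensor V^{B}$ to right multiplication by a unitary sweeping all of $\unitary(\cH^{B})$, and that nothing breaks when $\rho^{A}$ or $\sigma^{A}$ is rank-deficient (it does not, as the operator--vector picture is valid for arbitrary dimensions). The only ingredient of real substance is the trace-norm/operator-norm duality above, which is elementary. A coordinate-free alternative is to prove the identity first for the canonical purifications $\br{\sqrt{\rho^{A}}\tensor\id}\ket{\Gamma}$ and $\br{\sqrt{\sigma^{A}}\tensor\id}\ket{\Gamma}$ with $\ket{\Gamma}=\sum_{i}\ket{i}\ket{i}$, for which $\bra{\Gamma}\br{\sqrt{\rho^{A}}\sqrt{\sigma^{A}}\,W\tensor\id}\ket{\Gamma}=\trace\br{\sqrt{\rho^{A}}\sqrt{\sigma^{A}}\,W}$ is immediate, and then to transfer to arbitrary purifications by absorbing the isometries connecting them to the canonical ones into the optimization over $V^{B}$; the computation is identical.
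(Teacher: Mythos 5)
The paper does not prove this statement; it is cited directly from Uhlmann's original work, so there is no in-paper proof to compare against. Your argument is correct and is the standard textbook proof: vectorize the purifications as operators $X,Y$ with $XX^\adjoint=\rho^A$, $YY^\adjoint=\sigma^A$, observe that $\id\tensor V^B$ becomes right multiplication by a unitary $W$, invoke $\max_W\abs{\trace(MW)}=\trnorm{M}$ (attained at the unitary from the polar decomposition of $M=X^\adjoint Y$), and simplify via $\trnorm{P}=\trnorm{P^\adjoint}$ with $P=\sqrt{\rho^A}\,Y$ to recover $\trace\sqrt{\sqrt{\rho^A}\sigma^A\sqrt{\rho^A}}$. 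The bookkeeping you flag as delicate checks out: $\trace_B\density{\xi}=XX^\adjoint$ (not a transpose), $(\id\tensor V)$ corresponds to right multiplication by $V^\transpose$ which ranges over all of $\unitary(\cH^B)$, and rank deficiency causes no trouble since the polar isometry extends to a unitary in finite dimensions. The optional coordinate-free route through canonical purifications is equivalent and equally valid.
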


Let~$\rho \in \qstate(\cH)$ be a quantum state over the Hilbert space~$\cH$. The
\emph{von Neumann entropy\/} of~$\rho$ is defined as
\[
\rS(\rho) \quad \coloneqq \quad - \trace \left( \rho \log \rho \right) \enspace.
\]
This coincides with Shannon entropy for a classical state.
The \emph{relative entropy\/} of two quantum states $\rho, \sigma
\in \qstate(\cH)$ is defined as
\[
\rD(\rho \| \sigma) \quad \coloneqq \quad \trace 
\left( \rho \left( \log \rho - \log \sigma \right) \right) \enspace,
\]
when~$\support(\rho) \subseteq \support(\sigma)$, and is~$\infty$
otherwise. The \emph{max-relative entropy\/}~\cite{Dat09-min-max-ent} of~$\rho$ with respect 
to~$\sigma$ is defined as
\[
\Dmax( \rho \| \sigma) \quad \coloneqq \quad
    \min \{ \lambda : \rho \leq 2^{\lambda} \sigma \} \enspace,
\]
when~$\support(\rho) \subseteq \support(\sigma)$, and is~$\infty$
otherwise.
The following proposition bounds purified distance in terms of max-relative entropy. It is
a special case of the monotonicity of \emph{minimal quantum~$\alpha$-R\'enyi divergence} in~$\alpha$ (see, e.g.,~\cite[Corollary 4.2, page 56]{Tom15-QIP-finite-resource}) obtained by considering~$\alpha = 1/2$ and~$\alpha \rightarrow \infty$.
\begin{proposition}[\cite{MDSFT13-quantum-renyi}]
\label{prop:Dmax-Pdist}
	Let~$\cH$ be a Hilbert space, and let~$\rho,\sigma\in\qstate(\cH)$ be 
	quantum states over~$\cH$. It holds that
	\[
	\rP( \rho, \sigma) \quad \leq \quad \sqrt{1-2^{-\Dmax (\rho \| \sigma)}} \enspace.
	\]
\end{proposition}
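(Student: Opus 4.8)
The plan is to prove the equivalent inequality $\rF(\rho,\sigma)^2 \ge 2^{-\Dmax(\rho\|\sigma)}$ directly from the definitions, bypassing the general theory of R\'enyi divergences. If $\Dmax(\rho\|\sigma) = \infty$, the asserted bound reads $\rP(\rho,\sigma) \le 1$, which always holds; so I would assume $\lambda \coloneqq \Dmax(\rho\|\sigma)$ is finite. By the definition of $\Dmax$, this gives the operator inequality $\rho \le 2^{\lambda}\sigma$, equivalently $\sigma \ge 2^{-\lambda}\rho$.

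Next I would conjugate this inequality by $\sqrt{\rho}$ on both sides. Since $X \mapsto \sqrt{\rho}\, X \,\sqrt{\rho}$ is a positive map, it preserves L\"owner order, yielding $\sqrt{\rho}\,\sigma\,\sqrt{\rho} \ge 2^{-\lambda}\sqrt{\rho}\,\rho\,\sqrt{\rho} = 2^{-\lambda}\rho^{2}$. Then I would invoke operator monotonicity of the square-root function on $[0,\infty)$: from $0 \le 2^{-\lambda}\rho^{2} \le \sqrt{\rho}\,\sigma\,\sqrt{\rho}$ it follows that $\sqrt{\sqrt{\rho}\,\sigma\,\sqrt{\rho}} \ge \sqrt{2^{-\lambda}\rho^{2}} = 2^{-\lambda/2}\rho$, where the final equality uses positivity of $\rho$ (so that $\sqrt{\rho^{2}} = \rho$).

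Taking traces and using $\trace\rho = 1$ gives $\rF(\rho,\sigma) = \trace\sqrt{\sqrt{\rho}\,\sigma\,\sqrt{\rho}} \ge 2^{-\lambda/2}$; the fidelity carries no ``missing mass'' correction term here because $\rho$ and $\sigma$ are normalized. Squaring and substituting into $\rP(\rho,\sigma)^{2} = 1 - \rF(\rho,\sigma)^{2}$ yields $\rP(\rho,\sigma)^{2} \le 1 - 2^{-\lambda}$, which is exactly the claim.

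The only nontrivial ingredient is operator monotonicity of $t \mapsto \sqrt{t}$; the rest is bookkeeping, so I do not anticipate a genuine obstacle. The two points warranting a moment's care are that conjugation by the possibly singular operator $\sqrt{\rho}$ still preserves the inequality (it does, being a positive map), and that the step $\sqrt{\rho^{2}} = \rho$ relies on $\rho \ge 0$ rather than mere Hermiticity.
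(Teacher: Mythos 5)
Your proof is correct. Let me verify the key steps: from $\rho \preceq 2^{\lambda}\sigma$ with $\lambda = \Dmax(\rho\|\sigma)$ (the minimum is attained because the cone of positive semi-definite operators is closed), conjugation by $\sqrt{\rho}$ gives $2^{-\lambda}\rho^{2} \preceq \sqrt{\rho}\,\sigma\,\sqrt{\rho}$. Operator monotonicity of $t \mapsto \sqrt{t}$ on $[0,\infty)$ then yields $2^{-\lambda/2}\rho \preceq \sqrt{\sqrt{\rho}\,\sigma\,\sqrt{\rho}}$ (using $\sqrt{\rho^{2}} = \rho$ for $\rho \succeq 0$), and taking the trace gives $\rF(\rho,\sigma) \geq 2^{-\lambda/2}$. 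Since $\trace\rho = \trace\sigma = 1$, the extra fidelity term $\sqrt{(1-\trace\rho)(1-\trace\sigma)}$ indeed vanishes. Squaring and substituting into $\rP^{2} = 1 - \rF^{2}$ finishes the argument, and the implicit constraint $\lambda \geq 0$ (which holds for normalized states, since $\rho \preceq c\sigma$ with $c<1$ would force $\trace\rho < 1$) keeps $1 - 2^{-\lambda}$ nonnegative so the final square root is legitimate.

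The paper does not give an independent proof; it simply cites the monotonicity in $\alpha$ of the sandwiched (minimal) quantum $\alpha$-R\'enyi divergence $\widetilde{D}_{\alpha}$, observing that $\widetilde{D}_{1/2}(\rho\|\sigma) = -2\log\rF(\rho,\sigma)$ and $\widetilde{D}_{\infty} = \Dmax$, so $-2\log\rF \leq \Dmax$. Your argument is a direct, self-contained proof of exactly that comparison for the pair $\alpha \in \{1/2,\infty\}$, avoiding any appeal to the R\'enyi family. What you buy is elementarity and transparency: the only nontrivial ingredient is operator monotonicity of the square root, a standard fact. What you lose relative to the cited reference is generality: the R\'enyi monotonicity result immediately gives the entire family of such bounds, of which this proposition is one instance. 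For the purposes of this paper, where only this single inequality is needed, your route is arguably the cleaner one.
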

The above property also implies the \emph{Pinsker inequality\/}.
For~$\epsilon \in [0,1]$, the \emph{$\epsilon$-smooth max-relative entropy\/}~\cite{Dat09-min-max-ent}
of~$\rho$ with respect to~$\sigma$ is defined as
\[
\Dmax^{\epsilon} (\rho \| \sigma) 
\quad \coloneqq \quad \min_{\rho' \in \sB^{\epsilon}(\rho)} \Dmax (\rho' \| \sigma) \enspace. 
\]
For~$\epsilon \in [0,1]$, the~$\epsilon$-\emph{hypothesis testing relative entropy}~\cite{BD10-QCapacity-correlated, BD11-Entanglement-manipulation, WR12-CQ-capacity-Hypothesis-testing} of~$\rho$ with 
respect to~$\sigma$ is defined as
\[
\Dh^{\epsilon} \left(\rho \| \sigma \right) \quad \coloneqq \quad 
\sup_{0 \preceq \Pi \preceq \id , \, \trace(\Pi \rho) \geq 1-\epsilon}
\log \left( \frac{1}{\trace(\Pi \sigma)} \right) \enspace.
\]
Smooth max-relative entropy and hypothesis testing relative entropy both converge to relative entropy in the asymptotic and i.i.d.\ setting~\cite{TCR09-QAEP, Tom12-thesis, KMF12-state-discr}. The following proposition gives upper and lower bounds for the convergence of these quantities for finite~$n$; these bounds are tight up to the second order additive term.
\begin{theorem} [\cite{TH13-Hierarchy},\cite{Li14-asym-Dh}]
\label{prop-QAEP}
Let~$\epsilon \in (0,1)$ and~$n$ be an integer. Consider quantum states~$\rho, \sigma \in \qstate (\cH)$. Define~$\rV(\rho \| \sigma) \eqdef \trace (\rho (\log \rho - \log \sigma)^2) -(\rD(\rho \| \sigma))^2$ and~$\Normal(x) \eqdef \int_{-\infty}^{x} \tfrac{\exp(-x^2/2)}{\sqrt{2\pi}} \diff x$. It holds that
\begin{equation}
\label{eq:QAEP-Dmax}
    \Dmax^{\epsilon}\mleft(\rho^{\tensor n} \| \sigma^{\tensor n} \mright) \quad = \quad n\rD(\rho \| \sigma) - \sqrt{n\rV(\rho \| \sigma)} \Normal^{-1}(\epsilon^2) + \Order(\log n) - \Order(\log(1-\epsilon) )\enspace,
\end{equation}
and
\begin{equation}
    \Dh^{\epsilon} \mleft( \rho^{\tensor n} \| \sigma^{\tensor n} \mright) \quad = \quad n\rD(\rho \| \sigma) + \sqrt{n\rV(\rho \| \sigma)} \Normal^{-1}(\epsilon) + \Order(\log n)\enspace.
\end{equation}
\end{theorem}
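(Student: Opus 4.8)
I would prove the two expansions separately, in each case squeezing the smooth quantity between a one-shot achievability bound and a one-shot converse bound that agree up to an $\Order(\log n)$ additive term, and then evaluating both for $\rho^{\tensor n}$ versus $\sigma^{\tensor n}$ through a Gaussian estimate of the information-spectrum quantity
\[
 s_n(\gamma) \;\eqdef\; \trace\mleft( \mleft\{ \rho^{\tensor n} \ge 2^{\gamma} \sigma^{\tensor n} \mright\}\, \rho^{\tensor n} \mright) ,
\]
where $\{X \ge Y\}$ denotes the projection onto the range of the positive part of $X-Y$. Write $D = \rD(\rho\|\sigma)$ and $V = \rV(\rho\|\sigma)$. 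The analytic engine is the sharp estimate $s_n\mleft(nD + \sqrt{nV}\,r\mright) = 1-\Normal(r) + \Order(n^{-1/2})$, uniformly over $r$ in a bounded interval, together with the elementary bound $\trace\mleft(\{\rho^{\tensor n}\ge 2^{\gamma}\sigma^{\tensor n}\}\,\sigma^{\tensor n}\mright) \le 2^{-\gamma}$; I comment on the Gaussian estimate in the last paragraph.

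\emph{Hypothesis-testing relative entropy.} For the lower bound, take $\gamma$ slightly below $nD + \sqrt{nV}\,\Normal^{-1}(\epsilon)$, so that $s_n(\gamma) \ge 1-\epsilon$; the Holevo--Helstrom / Neyman--Pearson test $\Pi = \{\rho^{\tensor n}\ge 2^{\gamma}\sigma^{\tensor n}\}$ is then feasible for $\Dh^{\epsilon}$ and satisfies $\trace(\Pi\sigma^{\tensor n})\le 2^{-\gamma}$, whence $\Dh^{\epsilon}\mleft(\rho^{\tensor n}\|\sigma^{\tensor n}\mright) \ge nD + \sqrt{nV}\,\Normal^{-1}(\epsilon) - \Order(\log n)$. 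For the upper bound, I would use the one-shot converse $\trace(\Pi\rho^{\tensor n}) - s_n(\gamma) \le 2^{\gamma}\trace(\Pi\sigma^{\tensor n})$, valid for every $0\preceq\Pi\preceq\id$ and every $\gamma$ (split $\Pi(\rho^{\tensor n}-2^{\gamma}\sigma^{\tensor n})$ along the positive and negative parts of $\rho^{\tensor n}-2^{\gamma}\sigma^{\tensor n}$). Applying it with any $\epsilon$-feasible $\Pi$ and a threshold $\gamma$ slightly above $nD + \sqrt{nV}\,\Normal^{-1}(\epsilon)$ chosen so that $s_n(\gamma)\le 1-\epsilon-n^{-1/2}$ gives $\trace(\Pi\sigma^{\tensor n}) \ge 2^{-\gamma}n^{-1/2}$, hence $\Dh^{\epsilon}\mleft(\rho^{\tensor n}\|\sigma^{\tensor n}\mright) \le \gamma + \tfrac12\log n = nD + \sqrt{nV}\,\Normal^{-1}(\epsilon) + \Order(\log n)$. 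The argument of $\Normal^{-1}$ is $\epsilon$ (not $\epsilon^2$) because feasibility here is the type-I-success constraint $\trace(\Pi\rho^{\tensor n})\ge 1-\epsilon$.

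\emph{Smooth max-relative entropy.} The structural difference is that $\rho'\in\sB^{\epsilon}(\rho^{\tensor n})$ means $\rP(\rho^{\tensor n},\rho')\le\epsilon$, i.e.\ $\rF(\rho^{\tensor n},\rho')^2\ge 1-\epsilon^2$ -- and this square is exactly why $\epsilon^2$ appears inside $\Normal^{-1}$. For the upper bound I would exhibit the standard truncated smoothed state: letting $P$ be the spectral projection of $(\sigma^{\tensor n})^{-1/2}\rho^{\tensor n}(\sigma^{\tensor n})^{-1/2}$ onto eigenvalues at most $2^{\lambda}$, the (renormalized) state $\rho'\propto (\sigma^{\tensor n})^{1/2}\,P\,(\sigma^{\tensor n})^{-1/2}\rho^{\tensor n}(\sigma^{\tensor n})^{-1/2}\,P\,(\sigma^{\tensor n})^{1/2}$ satisfies $\rho'\preceq 2^{\lambda}\sigma^{\tensor n}/(1-s_n(\lambda))$ by construction and $\rF(\rho^{\tensor n},\rho')^2 \ge 1-s_n(\lambda)-\Order(n^{-1/2})$, so it lies in $\sB^{\epsilon}(\rho^{\tensor n})$ once $s_n(\lambda)\le\epsilon^2-\Order(n^{-1/2})$; the smallest such $\lambda$ is, by the Gaussian estimate, $nD + \sqrt{nV}\,\Normal^{-1}(1-\epsilon^2) + \Order(\log n) = nD - \sqrt{nV}\,\Normal^{-1}(\epsilon^2) + \Order(\log n)$, and the normalization factor $-\log(1-s_n(\lambda)) = -\log(1-\epsilon^2)$ accounts for the $-\Order(\log(1-\epsilon))$ term (relevant only as $\epsilon\to 1$). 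For the lower bound, pinching $\mathcal P$ onto the eigenbasis of $\sigma^{\tensor n}$ is a channel, hence contractive in purified distance and non-increasing on $\Dmax(\,\cdot\,\|\sigma^{\tensor n})$ (using $\mathcal P(\sigma^{\tensor n})=\sigma^{\tensor n}$), so $\Dmax^{\epsilon}\mleft(\rho^{\tensor n}\|\sigma^{\tensor n}\mright) \ge \Dmax^{\epsilon}\mleft(\mathcal P(\rho^{\tensor n})\|\sigma^{\tensor n}\mright)$; the right side is a classical smooth max-divergence between commuting operators, equal to the $(1-\epsilon^2)$-quantile of the log-likelihood ratio, which by the same Berry--Esseen analysis described below is again $nD - \sqrt{nV}\,\Normal^{-1}(\epsilon^2) - \Order(\log n)$ -- here one uses that $\sigma^{\tensor n}$ has only $\poly(n)$ distinct eigenvalues, so pinching shifts the mean of the log-likelihood ratio by only $\Order(\log n)$.

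\emph{The Gaussian estimate and the main obstacle.} Everything above rests on $s_n\mleft(nD+\sqrt{nV}\,r\mright) = 1-\Normal(r)+\Order(n^{-1/2})$. Since $\rho$ and $\sigma$ need not commute, $\{\rho^{\tensor n}\ge 2^{\gamma}\sigma^{\tensor n}\}$ is not a spectral projection of an i.i.d.\ sum of observables, so the classical Berry--Esseen theorem does not apply directly; this non-commutativity is the main obstacle. The standard resolution is, first, to pinch $\rho^{\tensor n}$ onto the eigenbasis of $\sigma^{\tensor n}$ -- which costs only $\Order(\log n)$ in all the relevant quantities, because $\sigma^{\tensor n}$ has $\poly(n)$ distinct eigenvalues and $\rho^{\tensor n}\preceq\poly(n)\,\mathcal P(\rho^{\tensor n})$ -- and, second, to identify the resulting classical probabilities with probabilities under the $n$-fold Nussbaum--Szko\l a distributions $P,Q$ built from the eigendecompositions $\rho=\sum_x p_x\density{u_x}$ and $\sigma=\sum_y q_y\density{v_y}$, namely $P(x,y)=p_x\abs{\braket{u_x}{v_y}}^2$ and $Q(x,y)=q_y\abs{\braket{u_x}{v_y}}^2$; a short computation gives $\rD(P\|Q)=D$ and $\rV(P\|Q)=V$, so applying the classical Berry--Esseen theorem to the i.i.d.\ sum $\sum_{i=1}^n\log\frac{P(x_i,y_i)}{Q(x_i,y_i)}$ yields the Gaussian approximation with $\Order(n^{-1/2})$ error. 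Making this reduction quantitative -- keeping the pinching and comparison losses within $\Order(\log n)$ of the exponent and handling the boundary regime $\epsilon\to 1$ with the explicit $\log\frac{1}{1-\epsilon}$ dependence -- is the bulk of the work, and reproduces the analyses of Refs.~\cite{TH13-Hierarchy} and~\cite{Li14-asym-Dh}.
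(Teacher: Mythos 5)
The paper does not prove this theorem; it states it as an import from Refs.~\cite{TH13-Hierarchy} and~\cite{Li14-asym-Dh}, and its only original content is the remark that the extra $\Order(\log(1-\epsilon))$ term in Eq.~\eqref{eq:QAEP-Dmax} arises because the ball $\sB^{\epsilon}(\rho)$ here contains only \emph{normalized} states. Your sketch faithfully reproduces the proof strategy of the cited references: one-shot Neyman--Pearson achievability and its matching converse for $\Dh^{\epsilon}$, a truncated-and-renormalized smoothed state for the $\Dmax^{\epsilon}$ achievability with a pinching-monotonicity converse, and a reduction of the information-spectrum quantity $s_n$ to a classical Berry--Esseen estimate via pinching onto the eigenbasis of $\sigma^{\tensor n}$ and the Nussbaum--Szko\l a distributions (your check that these preserve both $\rD$ and $\rV$ is correct, since the cross terms collapse when $\{\ket{v_y}\}$ is orthonormal). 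Your account of why $\epsilon^2$ appears for $\Dmax^{\epsilon}$ (the feasibility constraint $\rP\le\epsilon$ is $\rF^{2}\ge 1-\epsilon^{2}$, as opposed to the linear constraint $\trace(\Pi\rho)\ge 1-\epsilon$ in $\Dh^{\epsilon}$) and your identification of the renormalization factor $-\log(1-\epsilon^{2})=\Theta(-\log(1-\epsilon))$ as the source of the $\Order(\log(1-\epsilon))$ correction exactly match the paper's own remark.

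One caution: the projection $P$ in your smoothed-state construction is a spectral projection of $(\sigma^{\tensor n})^{-1/2}\rho^{\tensor n}(\sigma^{\tensor n})^{-1/2}$, whereas $s_n(\lambda)$ is defined through the Neyman--Pearson projection $\{\rho^{\tensor n}\ge 2^{\lambda}\sigma^{\tensor n}\}$; these are different operators when $\rho$ and $\sigma$ do not commute, so the bounds $\rho'\preceq 2^{\lambda}\sigma^{\tensor n}/(1-s_n(\lambda))$ and $\rF^{2}\ge 1-s_n(\lambda)-\Order(n^{-1/2})$ cannot both be read off verbatim with the same $s_n$. The discrepancy vanishes after the pinching reduction you invoke in the final paragraph (where $\rho^{\tensor n}$ and $\sigma^{\tensor n}$ effectively commute), and the $\Order(\log n)$ slack absorbs the pinching-inequality loss, so the overall structure of the argument is sound; but the two thresholded quantities should be kept distinct until after that reduction.
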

Note that Eq.~\eqref{eq:QAEP-Dmax} has an additional~$\Order(\log (1-\epsilon))$ term as compared to the original statement in ref.~\cite{TH13-Hierarchy} because we only allow the normalized states in~$\sB^{\epsilon}(\rho)$.
We also need the following property due to Anshu, Berta, Jain, and Tomamichel~\cite[Theorem 2]{ABJM20-Partially-smooth}. The original statement involves a  minimization over all $\sigma_B$ on both sides of the inequality, but the proof works for any fixed~$\sigma_B$.
\begin{theorem}[\cite{ABJM20-Partially-smooth}, Theorem~2]
\label{prop-Dmax-psmooth}
Let~$\epsilon, \delta \in (0,1)$ such that~$0 \le 2\epsilon + \delta \le 1$. Consider quantum states~$\sigma^{B} \in \qstate(\cH^{B})$ and~$\rho^{AB} \in \qstate(\cH^{AB})$. We have
\begin{equation}
    \inf_{\substack{\barrho \in \sB^{2\epsilon + \delta}(\rho^{AB}) \\ \barrho^{A} = \rho^{A}}} \Dmax \mleft( \barrho^{AB} \| \rho^{A} \tensor \sigma^{B} \mright) \quad \leq \quad  \Dmax^{\epsilon}\mleft(\rho^{AB} \| \rho^{A} \tensor \sigma^{B} \mright) + \log \frac{8+ \delta^2}{\delta^2} \enspace.
\end{equation}
\end{theorem}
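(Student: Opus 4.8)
I would prove the inequality by constructing, for the given $\sigma^{B}$, an explicit state $\barrho^{AB}$ witnessing the infimum on the left. Write $\tau^{AB}\eqdef\rho^{A}\tensor\sigma^{B}$ and $\lambda\eqdef\Dmax^{\epsilon}\br{\rho^{AB}\|\tau^{AB}}$, and fix a smoothing optimizer $\rho'^{AB}\in\sB^{\epsilon}(\rho^{AB})$ with $\rho'^{AB}\preceq 2^{\lambda}\tau^{AB}$. Taking the trace gives $1\le 2^{\lambda}$, so $\lambda\ge 0$, and the same inequality shows $\support(\rho'^{A})\subseteq\support(\rho^{A})$. Working inside $\support(\rho^{A})\tensor\cH^{B}$ (which contains $\support(\rho^{AB})$) we may assume $\rho^{A}$ is invertible. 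Since fidelity is non-decreasing under $\trace_{B}$ and $\rP(\rho'^{AB},\rho^{AB})\le\epsilon$, we get $\rF(\rho'^{A},\rho^{A})\ge\sqrt{1-\epsilon^{2}}$, i.e.\ $\rP(\rho'^{A},\rho^{A})\le\epsilon$.

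\emph{Regularization.} Set $\mu\eqdef\dfrac{\delta^{2}}{8+\delta^{2}}$ and $\hat\rho^{AB}\eqdef(1-\mu)\rho'^{AB}+\mu\,\tau^{AB}$. Because $2^{\lambda}\ge 1$ we have $\hat\rho^{AB}\preceq\big((1-\mu)2^{\lambda}+\mu\big)\tau^{AB}\preceq 2^{\lambda}\tau^{AB}$, so $\Dmax\br{\hat\rho^{AB}\|\tau^{AB}}\le\lambda$. Concavity of fidelity gives $\rF(\hat\rho^{AB},\rho^{AB})\ge(1-\mu)\rF(\rho'^{AB},\rho^{AB})\ge(1-\mu)\sqrt{1-\epsilon^{2}}$, whence $\rP(\hat\rho^{AB},\rho^{AB})\le\sqrt{\epsilon^{2}+2\mu}\le\epsilon+\tfrac{\delta}{2}$; the same computation on the $A$-marginals gives $\rP(\hat\rho^{A},\rho^{A})\le\epsilon$. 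The crucial gain is that $\hat\rho^{A}=(1-\mu)\rho'^{A}+\mu\rho^{A}\succeq\mu\rho^{A}$, so $\support(\hat\rho^{A})=\support(\rho^{A})$ and $\rho^{A}\preceq\tfrac1\mu\hat\rho^{A}$.

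\emph{Fixing the marginal and the entropy bound.} Put $M\eqdef(\rho^{A})^{1/2}(\hat\rho^{A})^{-1/2}$ (well defined on $\support(\rho^{A})$) and $\barrho^{AB}\eqdef(M\tensor\id^{B})\,\hat\rho^{AB}\,(M^{\adjoint}\tensor\id^{B})$. Then $\barrho^{A}=M\hat\rho^{A}M^{\adjoint}=\rho^{A}$ and $\trace(\barrho^{AB})=\trace(M^{\adjoint}M\,\hat\rho^{A})=\trace(\rho^{A})=1$, so $\barrho^{AB}$ is a state with the correct $A$-marginal. Conjugating $\hat\rho^{AB}\preceq 2^{\lambda}\tau^{AB}$ by $M\tensor\id^{B}$ and using $\rho^{A}\preceq\tfrac1\mu\hat\rho^{A}$ (so that $(\hat\rho^{A})^{-1/2}\rho^{A}(\hat\rho^{A})^{-1/2}\preceq\tfrac1\mu\id$ on $\support(\rho^{A})$, hence $M\rho^{A}M^{\adjoint}\preceq\tfrac1\mu\rho^{A}$) gives $\barrho^{AB}\preceq\tfrac{2^{\lambda}}{\mu}\tau^{AB}$, i.e.
\[
\Dmax\br{\barrho^{AB}\|\rho^{A}\tensor\sigma^{B}}\ \le\ \lambda+\log\tfrac1\mu\ =\ \Dmax^{\epsilon}\br{\rho^{AB}\|\rho^{A}\tensor\sigma^{B}}+\log\tfrac{8+\delta^{2}}{\delta^{2}},
\]
which is exactly the claimed max-relative-entropy bound (note that the mixing weight $\mu=\delta^{2}/(8+\delta^{2})$ was chosen precisely so that $\log\tfrac1\mu$ equals the target additive term).

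\emph{The distance bound --- the main obstacle.} What remains is $\rP(\barrho^{AB},\rho^{AB})\le 2\epsilon+\delta$, and this is the delicate step, since the recovery map $M$ need not be close to an isometry in operator norm. Taking any purification $\ket{\hat\rho}^{ABR}$ of $\hat\rho^{AB}$, the vector $(M\tensor\id^{BR})\ket{\hat\rho}^{ABR}$ is a (normalized) purification of $\barrho^{AB}$, and Uhlmann's theorem gives $\rF(\barrho^{AB},\hat\rho^{AB})\ge\big|\bra{\hat\rho}(M^{\adjoint}\tensor\id^{BR})\ket{\hat\rho}\big|=\trace\!\big((\rho^{A})^{1/2}(\hat\rho^{A})^{1/2}\big)$, together with $\big\|(M\tensor\id^{BR}-\id)\ket{\hat\rho}\big\|^{2}=2\big(1-\trace((\rho^{A})^{1/2}(\hat\rho^{A})^{1/2})\big)$. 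The key quantitative input I would establish is that $\rP(\rho^{A},\hat\rho^{A})\le\epsilon$ forces $\trace\!\big((\rho^{A})^{1/2}(\hat\rho^{A})^{1/2}\big)\ge\rF(\rho^{A},\hat\rho^{A})^{2}\ge 1-\epsilon^{2}$ --- equivalently $1-\trace((\rho^{A})^{1/2}(\hat\rho^{A})^{1/2})\le\rP(\rho^{A},\hat\rho^{A})^{2}$ --- so that $M$ moves $\ket{\hat\rho}$ by only $O(\epsilon)$ and perturbs the state to \emph{second} order in $\epsilon$ rather than first. Feeding this into the expansion $\braket{\barrho}{\rho}=\bra{\hat\rho}(M^{\adjoint}\tensor\id)\ket{\hat\rho}+\bra{\hat\rho}(M^{\adjoint}\tensor\id)\big(\ket{\rho}-\ket{\hat\rho}\big)$ for optimally aligned purifications of $\hat\rho^{AB}$ and $\rho^{AB}$ (for which $\|\ket{\rho}-\ket{\hat\rho}\|^{2}=2(1-\rF(\hat\rho^{AB},\rho^{AB}))\le 2(\epsilon+\tfrac{\delta}{2})^{2}$), and bounding the cross term by $|1-\rF(\hat\rho^{AB},\rho^{AB})|+\|(M-\id)\ket{\hat\rho}\|\cdot\|\ket{\rho}-\ket{\hat\rho}\|$, one obtains $\rF(\barrho^{AB},\rho^{AB})\ge 1-O\big((\epsilon+\delta)^{2}\big)$ and hence $\rP(\barrho^{AB},\rho^{AB})=O(\epsilon+\delta)$. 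The real work is to run these estimates tightly --- optimizing the choice of $\mu$ and the intermediate alignments, and proving the fidelity comparison $\trace((\rho^{A})^{1/2}(\hat\rho^{A})^{1/2})\ge\rF(\rho^{A},\hat\rho^{A})^{2}$ --- so that the accumulated purified-distance budget is exactly $2\epsilon+\delta$; I expect this bookkeeping, rather than the construction itself, to be the heart of the proof. Combining the resulting distance bound with the entropy bound above finishes the argument.
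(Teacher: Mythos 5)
The paper does not prove this theorem itself; it is imported from Ref.~\cite{ABJM20-Partially-smooth} with only the remark that the original statement carries an extra minimization over $\sigma^{B}$. There is thus no in-paper proof to compare against, and your attempt has to stand on its own.

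Your construction --- regularize to $\hat\rho = (1-\mu)\rho' + \mu\,\rho^{A}\tensor\sigma^{B}$ so that $\rho^{A}\preceq\mu^{-1}\hat\rho^{A}$, then conjugate by $M=(\rho^{A})^{1/2}(\hat\rho^{A})^{-1/2}$ to fix the $A$-marginal --- is a sensible recipe of the kind used in the literature, and your max-relative-entropy bound is correct: $\mu=\delta^{2}/(8+\delta^{2})$ does produce exactly the additive $\log\tfrac{8+\delta^{2}}{\delta^{2}}$.

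The genuine gap is the distance bound, and it is more than delicate bookkeeping. Running your own estimates: the Holevo-overlap bound $g:=\trace\!\big((\rho^{A})^{1/2}(\hat\rho^{A})^{1/2}\big)\ge \rF(\rho^{A},\hat\rho^{A})^{2}\ge 1-\epsilon^{2}$ gives $\|(M\tensor\id)\ket{\hat\rho}-\ket{\hat\rho}\|=\sqrt{2(1-g)}\le\sqrt{2}\,\epsilon$, and with $\alpha:=\epsilon+\delta/2$ and $\|\ket{\rho}-\ket{\hat\rho}\|\le\sqrt{2}\,\alpha$, your expansion yields $\rF(\barrho^{AB},\rho^{AB})\ge \sqrt{1-\alpha^{2}}-\epsilon^{2}-2\epsilon\alpha$. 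You need $\rF(\barrho^{AB},\rho^{AB})\ge\sqrt{1-(2\alpha)^{2}}$ (note $2\epsilon+\delta=2\alpha$), but the available margin $\sqrt{1-\alpha^{2}}-\sqrt{1-4\alpha^{2}}\approx\tfrac{3}{2}\alpha^{2}$ is strictly smaller than the error $\epsilon^{2}+2\epsilon\alpha\approx 3\alpha^{2}$ once $\delta\ll\epsilon$, so the estimate does not close. The plain triangle route fares no better: $\rP(\barrho,\hat\rho)\le\sqrt{2}\,\epsilon$ plus $\rP(\hat\rho,\rho)\le\alpha$ gives $(1+\sqrt{2})\epsilon+\delta/2$, which exceeds $2\epsilon+\delta$ whenever $\delta<2(\sqrt{2}-1)\epsilon$. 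Observe that for commuting $\rho^{A},\hat\rho^{A}$ the Holevo overlap equals $\rF(\rho^{A},\hat\rho^{A})$, not its square, and then the budget closes at $2\epsilon+\delta/2$; so the shortfall is exactly the slack in the non-commutative inequality $\rF^{2}\le\trace(\sqrt{\cdot}\sqrt{\cdot})$, and a sharper handle on that overlap (or a different marginal-fixing map) is the missing ingredient. Separately, the inequality $\rF(\rho,\sigma)^{2}\le\trace(\sqrt{\rho}\sqrt{\sigma})$ is asserted but neither proved nor cited; it is true, but it needs a reference. You should consult the proof of Theorem~2 in Ref.~\cite{ABJM20-Partially-smooth} to see how the $2\epsilon+\delta$ budget is actually closed.
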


Suppose that $\rho^{AB} \in \qstate(\cH^{A} \otimes \cH^{B})$ is the joint 
state of registers $A$ and $B$, then the \emph{mutual information\/} of
$A$ and $B$ is denoted by
\[
\rI(A:B)_{\rho} \quad \coloneqq \quad \rD \!\left( \rho^{AB} \|~\rho^{A} \tensor \rho^{B} \right) \enspace.
\]
When the state is clear from the context, the subscript~$\rho$ may be omitted.
Let~$\rho^{RBC}\in \qstate(\cH^{RBC})$ be a 
tripartite quantum state. The \emph{conditional mutual information\/} 
of~$R$ and~$C$ given~$B$ is defined as
\[
\rI(R:C \,|\, B) \quad \coloneqq \quad \rI(RB:C) - \rI(B:C) \enspace.
\]
For the state~$\rho^{AB} \in \qstate(\cH^{A} \otimes \cH^{B})$, the \emph{max-information\/} register~$B$ has about register~$A$ is defined as
\[
\Imax(A:B)_{\rho} \quad \coloneqq \quad 
\min_{\sigma^{B} \in \qstate(\cH^{B})} \Dmax \left(\rho^{AB} \|~\rho^{A} \tensor \sigma^{B} \right) \enspace.
\]
For~$\epsilon \in [0,1]$, the \emph{$\epsilon$-smooth max-information\/} register~$B$ has about register~$A$ 
in the state~$\rho^{AB} \in \qstate(\cH^{A} \otimes \cH^{B})$ is defined as
\[
\Imax^{\epsilon}(A:B)_{\rho} \quad \coloneqq \quad \min_{\rho' \in \sB^{\epsilon}(\rho^{AB})}
\Imax (A:B)_{\rho'} \enspace.
\]

\subsection{Quantum Markov states}
\label{sec:QMarkov}

A tripartite quantum state~$\sigma^{RBC} \in \qstate(\cH^{RBC})$ is 
called a \emph{quantum Markov state\/} of the form~$R \!-\! B \!-\! C$ if there exists a quantum 
operation~$\Lambda : \linear \! \left(\cH^{B}\right) \rightarrow \linear \!\left(\cH^{BC}\right)$ 
such that~$(\id \tensor \Lambda)(\sigma^{RB}) = \sigma^{RBC}$. This is equivalent to the condition that~$\rI(R:C \, | \,B)_\sigma = ~ 0$, and is the quantum analogue of the notion of \emph{Markov chains\/} for classical registers. 
Classical registers~$YXM$ form a \emph{Markov chain\/} in this order (denoted as~$Y\!-\!X\!-\!M$) if registers~$Y$ and~$M$ are independent  given~$X$. 
Hayden, Josza, Petz, and Winter~\cite{HJPW04-Qntm-Markov-state} showed that an analogous property holds for quantum Markov states. 
\begin{theorem}[\cite{HJPW04-Qntm-Markov-state}]
A state~$\sigma^{RBC} \in \qstate(\cH^{R} \tensor \cH^{B} \tensor \cH^{C})$ is a quantum Markov state of the form~$R \!-\! B \!-\! C$ if and only if there is a decomposition of the space~$ \cH^{B}$ into a direct sum of tensor products as
\begin{equation}
\label{eq:markov-reg-decomp}
	\cH^{B} \quad = \quad \bigoplus_{j} \cH^{B_{j}^{R}} \tensor \cH^{B_{j}^{C}} \enspace,
\end{equation}
such that
\begin{equation}
\label{eq:markov-state-decomp}
   \sigma^{RBC} \quad = \quad \bigoplus_{j} p(j) \, \sigma_j^{RB_{j}^{R}} \tensor
   \sigma_j^{B_{j}^{C}C} \enspace, 
\end{equation}
where~$\sigma_j^{RB_{j}^{R}} \in \qstate \left(\cH^{R} \tensor \cH^{B_{j}^{R}} \right)$,~$\sigma_j^{B_{j}^{C}C} \in \qstate \left(\cH^{B_{j}^{C}} \tensor \cH^{C} \right)$ and~$p$ is a probability distribution over the direct summands.
\end{theorem}

For a state~$\psi^{RBC}$, we say that~$\sigma^{RBC}$ is a \emph{Markov extension\/} of~$\psi^{RB}$ if~$\sigma^{RB} = \psi^{RB}$ and~$\sigma^{RBC}$ is a Markov state. We denote the set of all Markov extensions of~$\psi^{RB}$ by~$\QMC^{\psi}$. Note that~$\QMC^{\psi}$ is non-empty, as it contains the state~$\sigma^{RBC} \eqdef \psi^{RB} \tensor \psi^C $. The following lemma relates the quantum conditional mutual information to quantum Markov extensions. The proof of this lemma is implicit in ref.~\cite[Lemma 1]{BertaSW15}, but we provide a proof here for completeness.

\begin{lemma}[Implicit in~\cite{BertaSW15}, Lemma 1]
\label{lem:condmutrelent}
For any tripartite quantum state $\psi^{RBC}$, and any quantum Markov extension~$\sigma^{RBC} \in \QMC^{\psi}$, it holds that
\begin{equation*}
\rI(R:C \,|\, B)_{\psi} \quad = \quad \rD\br{\psi^{RBC}\|\sigma^{RBC}}-\rD\br{\psi^{BC}\|\sigma^{BC}} \enspace.
\end{equation*}
\end{lemma}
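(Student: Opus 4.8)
The plan is to exploit the direct-sum structure of the Markov state~$\sigma^{RBC}$ given by Eqs.~\eqref{eq:markov-reg-decomp} and~\eqref{eq:markov-state-decomp}, together with the constraint~$\sigma^{RB} = \psi^{RB}$, to compute both relative entropies explicitly and observe the cancellation. First I would fix the decomposition~$\cH^{B} = \bigoplus_{j} \cH^{B_j^R} \tensor \cH^{B_j^C}$ and~$\sigma^{RBC} = \bigoplus_j p(j)\, \sigma_j^{RB_j^R} \tensor \sigma_j^{B_j^C C}$. Since~$\support(\psi^{RBC}) \subseteq \support(\sigma^{RBC})$ is needed for both sides to be finite (otherwise the claimed identity requires interpretation, and I would note that both sides are~$+\infty$ in that case, using that~$\support(\psi^{BC}) \subseteq \support(\sigma^{BC})$ would then also fail — actually one should check this reduction carefully), assume the support condition holds. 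The key structural point is that the block labels~$j$ are read off by a projective measurement on~$B$ alone, and because~$\sigma^{RB} = \psi^{RB}$, this measurement has the same outcome distribution~$p(j)$ on~$\psi$ as on~$\sigma$, and moreover~$\psi^{RBC}$ is itself block-diagonal in the~$j$ register when restricted to~$\support(\sigma^{RBC})$: writing~$\Pi_j$ for the projector onto~$\cH^{B_j^R}\tensor\cH^{B_j^C}$, the operators~$\{\Pi_j\}$ commute with~$\sigma$, hence~$\log\sigma = \bigoplus_j \big(\log p(j) + \log\sigma_j^{RB_j^R}\tensor\id + \id\tensor\log\sigma_j^{B_j^C C}\big)$.

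Next I would expand~$\rD(\psi^{RBC}\|\sigma^{RBC}) = -\rS(\psi^{RBC}) - \trace(\psi^{RBC}\log\sigma^{RBC})$ using the block form of~$\log\sigma^{RBC}$. Writing~$\psi_j^{RB_j^R B_j^C C}$ for the (sub-normalized) block of~$\psi^{RBC}$ in the~$j$-th summand and~$q(j) = \trace(\psi_j)$, the cross term becomes~$\sum_j \big( q(j)\log p(j) + \trace(\psi_j^{RB_j^R}\log\sigma_j^{RB_j^R}) + \trace(\psi_j^{B_j^C C}\log\sigma_j^{B_j^C C})\big)$. I would do the same expansion for~$\rD(\psi^{BC}\|\sigma^{BC})$, noting that~$\sigma^{BC} = \bigoplus_j p(j)\,\sigma_j^{B_j^R}\tensor\sigma_j^{B_j^C C}$ has the analogous block-diagonal log. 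Here I need~$\psi^{RB} = \sigma^{RB}$ to conclude that the~$j$-marginal of~$\psi^{BC}$ is the same~$p(j)$ and that~$\psi_j^{B_j^R} = \sigma_j^{B_j^R}$ (partial trace of~$\psi^{RB} = \sigma^{RB}$ over~$R$ inside block~$j$); this makes the~$\trace(\psi_j^{B_j^R}\log\sigma_j^{B_j^R})$ term equal to~$-\rS(\sigma_j^{B_j^R})$ and similarly~$q(j) = p(j)$ forces the weights to agree. Subtracting, the~$\sum_j p(j)\log p(j)$ terms cancel, the~$\trace(\psi_j^{B_j^C C}\log\sigma_j^{B_j^C C})$ terms cancel between the two expressions, and after reorganizing the entropy terms one is left precisely with the expression for~$\rI(R:C|B)_\psi$.

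To finish cleanly, rather than tracking every entropy term by hand, I would instead package the computation as follows: both~$\rD(\psi^{RBC}\|\sigma^{RBC})$ and~$\rD(\psi^{BC}\|\sigma^{BC})$ decompose, via the commuting block projectors, as a classical relative entropy between the outcome distributions (which is~$0$ since both equal~$p$) plus~$\sum_j p(j)$ times a conditional relative entropy within block~$j$; within block~$j$ the state~$\sigma$ is a product~$\sigma_j^{RB_j^R}\tensor\sigma_j^{B_j^C C}$, so~$\rD(\psi_j^{RB_j^R B_j^C C}\|\sigma_j^{RB_j^R}\tensor\sigma_j^{B_j^C C}) - \rD(\psi_j^{B_j^R B_j^C C}\|\sigma_j^{B_j^R}\tensor\sigma_j^{B_j^C C})$ telescopes — using~$\psi_j^{B_j^R} = \sigma_j^{B_j^R}$ — to~$\rI(R:B_j^C C\,|\,\text{block }j)_\psi$, and summing over~$j$ with weights~$p(j)$ reconstructs~$\rI(R:C|B)_\psi$ because~$B = \bigoplus_j B_j^R B_j^C$ and conditioning on~$B$ is the same as conditioning on the block label together with~$B_j^R$ (the~$B_j^C$ part contributes through~$\rI(R:B_j^C C\,|\,\cdots)$). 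The main obstacle I anticipate is bookkeeping: making the identification "conditioning on~$B$ = conditioning on $(j, B_j^R B_j^C)$" rigorous and verifying the telescoping uses exactly the two hypotheses~$\sigma^{RB}=\psi^{RB}$ and the Markov (product-within-block) structure, with no hidden use of~$\sigma^{BC} = \psi^{BC}$ — indeed the whole point of the lemma is that~$\sigma^{BC}$ need not equal~$\psi^{BC}$, which is why the second relative-entropy term does not simply vanish. A sanity check at the end: when~$\psi$ is itself a Markov chain and~$\sigma = \psi$, the left side is~$0$ and the right side is~$\rD(\psi^{RBC}\|\psi^{RBC}) - \rD(\psi^{BC}\|\psi^{BC}) = 0$, consistent.
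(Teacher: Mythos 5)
Your detailed expansion follows essentially the same block-diagonalization route as the paper's proof: decompose $\log\sigma^{RBC}$ and $\log\sigma^{BC}$ using the Hayden--Jozsa--Petz--Winter structure, expand the cross terms $\trace(\psi\log\sigma)$ block by block, cancel the $\log\sigma_j^{B_j^C C}$ contributions between the two relative entropies, and use $\psi^{RB}=\sigma^{RB}$ to identify what remains with $\rS(RB)-\rS(B)$. That part is sound (modulo the bookkeeping that your $\psi_j^{B_j^R}$ has trace $p(j)$, not $1$), and the paper does the same thing, just reassembling the surviving terms into $\trace\!\br{\psi^{RBC}\log\sigma^{RB}}$ and $\trace\!\br{\psi^{BC}\log\sigma^{B}}$ before substituting $\sigma^{RB}=\psi^{RB}$.

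However, two things in the proposal need correcting. First, the assertion that $\psi^{RBC}$ is block-diagonal in the $j$ label is false, even given $\psi^{RB}=\sigma^{RB}$ and $\support(\psi^{RBC})\subseteq\support(\sigma^{RBC})$: take the GHZ state $\tfrac{1}{\sqrt2}(\ket{000}+\ket{111})^{RBC}$ with the one-dimensional blocks $\{\ket{0}\},\{\ket{1}\}$ on $B$; then $\psi^{RB}$ is block-diagonal but $\Pi_0\psi^{RBC}\Pi_1\neq0$ because the inter-block coherence is traceless only after tracing out $C$. Your explicit expansion survives this because only the diagonal blocks $\Pi_j\psi\Pi_j$ enter $\trace(\psi\log\sigma)$ (as $\log\sigma$ is block-diagonal) and you keep $\rS(\psi^{RBC}),\rS(\psi^{BC})$ whole. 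Second, the ``cleaner packaging'' --- writing each $\rD(\psi\|\sigma)$ as a classical divergence between block distributions plus a $p(j)$-weighted sum of within-block relative entropies --- \emph{does} require block-diagonality of $\psi$ and genuinely fails otherwise: pinching increases entropy, so that decomposition undercounts $\rD(\psi\|\sigma)$ by $\rS(\mathcal{P}(\psi))-\rS(\psi)\ge 0$, and this defect differs between $\psi^{RBC}$ and $\psi^{BC}$. In the GHZ example with $\sigma^{RBC}$ equal to the pinched GHZ state, both block-sum terms vanish, yet $\rD\!\br{\psi^{RBC}\|\sigma^{RBC}}-\rD\!\br{\psi^{BC}\|\sigma^{BC}}=1=\rI(R:C|B)_\psi$. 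Drop the block-diagonality claim and the packaging variant, and keep the explicit expansion, which is the paper's argument.
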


\begin{proof}
    For sake of clarity, in this proof, we suppress tensor products with the identity in expressions involving sums or products of quantum states over different sequences of registers. For example, we write~$\omega^{XY} + \tau^{YZ}$ to represent the sum~$\omega^{XY} \tensor \id^Z + \id^X \tensor \tau^{YZ}$, and~$\omega^{XY} \tau^{YZ}$ to represent the product~$\big( \omega^{XY} \tensor \id^Z \big) \big( \id^X \tensor \tau^{YZ} \big)$. All the expressions involving entropy and mutual information are with respect to the state~$\psi$.

Consider any quantum Markov chain $\sigma^{RBC}$ satisfying $\sigma^{RB}=\psi^{RB}$. From Eq.~\eqref{eq:markov-state-decomp}, we have
$$\log\sigma^{RBC} \quad = \quad  \bigoplus_{j} \br{\log \mleft( p(j)\sigma_j^{RB_{j}^{R}}\mright) + \log\sigma_j^{B_{j}^{C}C}} \enspace,$$
and similarly, 
$$\log\sigma^{BC} \quad = \quad  \bigoplus_{j} \br{\log \mleft( p(j)\sigma_j^{B_{j}^{R}} \mright) + \log\sigma_j^{B_{j}^{C}C}} \enspace.$$
Thus, we can evaluate
\begin{IEEEeqnarray*}{rCl}
\IEEEeqnarraymulticol{3}{l}{
\rD\br{\psi^{RBC}\|\sigma^{RBC}}-\rD\br{\psi^{BC}\|\sigma^{BC}} }\\* \qquad  
& = & \quad \trace \br{\psi^{RBC}\log\psi^{RBC}} - \trace \br{\psi^{RBC}\log\sigma^{RBC}} - \trace \br{\psi^{BC}\log\psi^{BC}} + \trace \br{\psi^{BC}\log\sigma^{BC}}\\
& = & \quad \rS(BC) - \rS(RBC) - \sum_j \trace \br{\psi^{RBC}\log \br{ p(j)\sigma_j^{RB_{j}^{R}}} } - \sum_{j}\trace \br{\psi^{RBC}\log\sigma_j^{B_{j}^{C}C}} \\*
 && \quad + \> \sum_{j} \trace \br{\psi^{BC}\log \br{p(j) \sigma_j^{B_{j}^{R}}} } + \sum_{j} \trace \br{\psi^{BC}\log\sigma_j^{B_{j}^{C}C}} \enspace.
\end{IEEEeqnarray*}
Since $\trace \br{\psi^{RBC}\log\sigma_j^{B_{j}^{C}C}}= \trace \br{\psi^{BC}\log\sigma_j^{B_{j}^{C}C}}$, the above equation can be simplified to obtain
\begin{IEEEeqnarray*}{rCl}
\IEEEeqnarraymulticol{3}{l}{
\rD\br{\psi^{RBC}\|\sigma^{RBC}}-\rD\br{\psi^{BC}\|\sigma^{BC}}}\\* \qquad
& = & \quad \rS(BC) - \rS(RBC) -\sum_{j} \trace \br{ \psi^{RBC} \log \br{p(j) \sigma_j^{RB_{j}^{R}}} } + \sum_{j} \trace \br{\psi^{BC}\log \br{ p(j) \sigma_j^{B_{j}^{R}}} } \\
& = & \quad \rS(BC) - \rS(RBC) - \trace \br{\psi^{RBC}\log \br{ \bigoplus_{j} p(j) \sigma_j^{RB_{j}^{R}}} } + \trace \br{ \psi^{BC} \log \br{ \bigoplus_{j} p(j) \sigma_j^{B_{j}^{R}}} }\\
& = & \quad \rS(BC) - \rS(RBC) - \trace \br{\psi^{RBC} \log \bigoplus_{j} \br{ p(j) \sigma_j^{RB_{j}^{R}} \otimes \sigma_j^{B_{j}^{C}}} } \\
&& \quad + \> \trace \br{\psi^{BC} \log \bigoplus_{j} \br{ p(j) \sigma_j^{B_{j}^{R}} \otimes \sigma_j^{B_{j}^{C}}} } \enspace,
\end{IEEEeqnarray*}
where the last equality above follows by noting that
\[
\trace \br{\psi^{RBC}\log\sigma_j^{B_{j}^{C}}} \quad = \quad \trace \br{ \psi^{BC} \log \sigma_j^{B_{j}^{C}} } \enspace.
\]
Since $\psi^{RB}=\sigma^{RB}$, we get that
\begin{IEEEeqnarray*}{rCl}
\rD\br{\psi^{RBC}\|\sigma^{RBC}}-\rD\br{\psi^{BC}\|\sigma^{BC}} \quad
& = & \quad \rS(BC) - \rS(RBC) - \trace \br{\psi^{RBC}\log \sigma^{RB}} + \trace \br{\psi^{BC}\log \sigma^{B}}\\
& = & \quad \rS(BC) - \rS(RBC) - \trace \br{\psi^{RB} \log \psi^{RB}} + \trace \br{\psi^{B}\log \psi^{B}}\\
& = & \quad \rS(BC) - \rS(RBC) + \rS(RB) - \rS(B) \\
& = & \quad \rI\br{R:C \,| \,B} \enspace.
\end{IEEEeqnarray*}
This completes the proof.
\end{proof}

For a Markov extension~$\sigma \in \QMC^{\psi}$, let~$\Pi_j^\sigma$ be the orthogonal projection operator onto the $j$-th subspace of the register~$B$ given by the decomposition corresponding to the Markov state~$\sigma$ as described above. In other words, $\Pi_j^\sigma$ is the projection onto the Hilbert space~$\cH^{B_j^{R}} \tensor \cH^{B_j^{C}}$ in Eq.~\eqref{eq:markov-reg-decomp}. For a quantum
state~$\psi^{RBC}$, we define
\begin{equation}
    \ME^{\epsilon,\psi} \quad \coloneqq \quad \set{ \left. \sigma \in \QMC^{\psi}  ~ \right| ~ \text{ for all } j, ~ \sigma_j^{B_{j}^{C} C} \in \sB^{\epsilon}\mleft(\trace_{B_j^{R}}\left[( \Pi_j^\sigma \tensor \id) \psi^{BC} (\Pi_j^\sigma \tensor \id)\right]\mright) } \enspace. \label{eq:ME-def}
\end{equation}
Informally, this is the subset of Markov extensions~$\sigma$ of~$\psi$ such that the restrictions of~$\sigma$ and~$\psi$ to the~$j$-th subspace in the decomposition of~$\sigma$ agree well on the registers~$B_{j}^{C} C$. Again, the state~$\sigma^{RBC} \eqdef \psi^{RB} \tensor \psi^C $ belongs to~$\ME^{\epsilon,\psi}$ for every~$\epsilon \ge 0$, so the set is non-empty.

\suppress{
If~$\rho^{RBC}$ is a classical-quantum state, classical in~$B$, then
\[
\rI(R:C \, | \, B) \quad = \quad \min_{\substack{\sigma^{RBC} : \sigma^{RB} = \rho^{RB} \\ \exists \Lambda \, : \, (\id \tensor \Lambda)(\rho^{RB}) = \sigma^{RBC}}} \rD(\rho^{RBC} \| \sigma^{RBC}) \enspace.
\]
However, this equality breaks down if register~$B$ is no more classical. For a tripartite quantum state~$\rho^{RBC}$, we define
\[
 \markdist(R:C|B)_{\rho} 
 \quad \coloneqq \quad \min_{\substack{\sigma^{RBC} : \sigma^{RB} = \rho^{RB} \\ \exists \Lambda \, : \, (\id \tensor \Lambda)(\rho^{RB}) = \sigma^{RBC}}} \rD(\rho^{RBC} \| \sigma^{RBC}) \enspace,
 \]
and
\[
\maxmarkdist(R:C|B)_{\rho} 
\quad \coloneqq \quad \min_{\substack{\sigma^{RBC} : \sigma^{RB} = \rho^{RB} \\ \exists \Lambda \, : \, (\id \tensor \Lambda)(\rho^{RB}) = \sigma^{RBC}}} \Dmax(\rho^{RBC} \| \sigma^{RBC}) \enspace.
\]
}

\subsection{Quantum state redistribution}
\label{sec-ajw}

Consider a pure state~$\ket{\psi}^{RABC}$ shared between 
Ref ($R$), Alice ($AC$) and Bob ($B$). In an~$\epsilon$-error 
\emph{quantum state redistribution} protocol, Alice and Bob share an 
entangled state~$\ket{\theta}^{E_A E_B}$, where register~$E_A$ is with Alice and register~$E_B$ with Bob. Alice applies an encoding 
operation~$\cE : \linear(\cH^{ACE_A}) \rightarrow \linear(\cH^{AQ})$, and sends the 
register~$Q$ to Bob.
Then, Bob applies a decoding operation~$\cD : \linear(\cH^{QBE_B}) \rightarrow 
\linear(\cH^{BC})$. The output of the protocol is the state~$\phi^{RABC}$ with the property that~$\rP(\psi^{RABC}, \phi^{RABC}) \leq \epsilon$. The communication cost of the protocol is~$\log \size{Q}$. 

To derive the bound in Theorem~\ref{thm:oneshotcondmutrelent}, we use a protocol due to Anshu, Jain, and Warsi~\cite{AJW18-Achievability-QSR}, which we call the AJW protocol in the sequel. 
%
%
The AJW protocol is based on the \emph{Convex-Split Lemma\/} introduced by Anshu, Devabathini, and Jain~\cite{ADJ17-coherent-rejection},
and the technique of \emph{Position-Based Decoding} introduced by Anshu, Jain, and Warsi~\cite{AJW19-building-blck-noisy-Qtm}. 

Let~$n$ be an integer, ~$\rho^{AB} \in \qstate(\cH^{AB})$ and~$\sigma^{B} \in \qstate(\cH^{B})$. Consider the quantum state~$\tau^{AB_1 \ldots B_n}$ derived by adding~$n-1$ independent copies of~$\sigma^{B}$ in tensor product with~$\rho^{AB}$ and swapping the~$(i-1)$-th copy of~$\sigma^{B}$ with~$\rho^{B}$ for uniformly random~$i \in [n-1]$. The convex-split lemma states that the state~$\tau^{AB_1\ldots B_n}$ is almost indistinguishable from the product state~$\rho^{A} \tensor (\sigma^{B})^{\tensor n}$, provided that~$n$ is large enough.

\begin{lemma} [Convex-Split Lemma~\cite{ADJ17-coherent-rejection}]
 	\label{lemma-convex-split}
 	Let~$\rho^{AB} \in \qstate(\cH^{AB})$ and~$\sigma^{B} \in \qstate(\cH^{B})$ be 
 	quantum states	with~$\Dmax(\rho^{AB} \| \rho^{A} \tensor \sigma^{B})=k$ for 
 	some finite number~$k$. Let~$\delta > 0$ and~$n \eqdef \ceil{\tfrac{2^k}{\delta}}$. 
 	Define the following states on~$n+1$ registers~$A,B_1,B_2, \dotsc, B_n$~:
 	\begin{align*}
 	\tau^{AB_1B_2 \dotsb B_n} \quad & \coloneqq \quad \frac{1}{n} \sum_{j=1}^{n} 
 	\rho^{AB_j} \tensor \sigma^{B_1} \tensor \cdots \tensor 
 	\sigma^{B_{j-1}} \tensor \sigma^{B_{j+1}} \tensor \cdots \tensor 
 	\sigma^{B_{n}} \enspace, \qquad \text{and} \\
 	\ttau^{A B_1 B_2 \dotsb B_n} \quad & \coloneqq \quad \rho^{A} \tensor \sigma^{B_1} \tensor \dotsb \tensor \sigma^{B_n} \enspace,
 	\end{align*}
where for all~$i \in [n]$, we have~$\size{B_i} = \size{B}$, $\rho^{AB_i}=\rho^{AB}$, and~$\sigma^{B_i}=\sigma^B$. 
Then, we have
 	\begin{align*}
 	\rP\left(\tau^{A B_1 \dotsb B_n}, \; \ttau^{A B_1 \dotsb B_n} \right) 
 	\quad \le \quad \sqrt{\delta} \enspace.
 	\end{align*}
 \end{lemma}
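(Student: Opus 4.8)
The plan is to bound directly the purified distance between $\tau^{AB_{1}\ldots B_{n}}$ and $\omega\eqdef\tau^{A}\tensor\sigma^{B_{1}}\tensor\cdots\tensor\sigma^{B_{n}}$, which, since each summand $\rho^{AB_{j}}$ has $A$-marginal $\rho^{A}$, is simply $\rho^{A}\tensor\br{\sigma^{B}}^{\tensor n}$. The engine is a second-moment (``collision'') estimate: I will show
\[
\trace\mleft(\tau\,\omega^{-1/2}\,\tau\,\omega^{-1/2}\mright)\quad\le\quad 1+\frac{2^{k}-1}{n}\enspace,
\]
and then convert this to a purified-distance bound. Throughout, finiteness of $\Dmax\br{\rho^{AB}\,\|\,\rho^{A}\tensor\sigma^{B}}=k$ gives $\support(\rho^{AB})\subseteq\support(\rho^{A}\tensor\sigma^{B})$, hence $\support(\tau)\subseteq\support(\omega)$, so all generalized inverses below are well defined on $\support(\omega)$ and the support projectors they produce act as the identity wherever it matters. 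Write $\rho_{j}\eqdef\rho^{AB_{j}}\tensor\bigotimes_{i\ne j}\sigma^{B_{i}}$, so $\tau=\tfrac1n\sum_{j=1}^{n}\rho_{j}$ and $\rho_{j}\preceq 2^{k}\omega$ for every $j$.

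For the collision estimate, expand $\trace(\tau\omega^{-1/2}\tau\omega^{-1/2})=\tfrac1{n^{2}}\sum_{i,j}\trace(\rho_{i}\omega^{-1/2}\rho_{j}\omega^{-1/2})$ and evaluate the diagonal and off-diagonal terms. On every register $B_{m}$ that carries $\sigma^{B_{m}}$ in both $\rho_{i}$ and $\rho_{j}$, the factor $\sigma^{B_{m}}$ together with the two $(\sigma^{B_{m}})^{-1/2}$ factors coming from the two copies of $\omega^{-1/2}$ collapse to $\trace(\sigma^{B_{m}})=1$; this disposes of all ``spectator'' registers. For $i=j$ the remaining factor is $\trace_{AB}\mleft[\rho^{AB}\,Y^{-1/2}\rho^{AB}Y^{-1/2}\mright]$ with $Y\eqdef\rho^{A}\tensor\sigma^{B}$, and since $\rho^{AB}\preceq 2^{k}Y$ implies $Y^{-1/2}\rho^{AB}Y^{-1/2}\preceq 2^{k}\id$, conjugating by $(\rho^{AB})^{1/2}$ and taking the trace bounds this by $2^{k}\trace(\rho^{AB})=2^{k}$. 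For $i\ne j$ the remaining factor is $\trace_{AB_{i}B_{j}}\mleft[(\rho^{AB_{i}}\tensor\sigma^{B_{j}})\,\big(Y_{j}^{-1/2}\rho^{AB_{j}}Y_{j}^{-1/2}\big)\mright]$ with $Y_{j}\eqdef\rho^{A}\tensor\sigma^{B_{j}}$; tracing out $B_{i}$ first replaces $\rho^{AB_{i}}$ by $\rho^{A}$, after which cyclicity of the trace gives $\trace_{AB_{j}}\mleft[Y_{j}^{1/2}\rho^{AB_{j}}Y_{j}^{-1/2}\mright]=\trace(\rho^{AB_{j}})=1$. Hence $\trace(\tau\omega^{-1/2}\tau\omega^{-1/2})\le\tfrac1{n^{2}}\br{n\cdot 2^{k}+(n^{2}-n)\cdot 1}=1+\tfrac{2^{k}-1}{n}$.

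To finish, observe that $\log\trace(\tau\omega^{-1/2}\tau\omega^{-1/2})$ is the sandwiched R\'enyi-$2$ divergence $\widetilde{D}_{2}(\tau\|\omega)$, while $-2\log\rF(\tau,\omega)$ is the sandwiched R\'enyi-$\tfrac12$ divergence; monotonicity of the sandwiched R\'enyi divergences in the order parameter gives $\rF(\tau,\omega)^{2}\ge 2^{-\widetilde{D}_{2}(\tau\|\omega)}=1/\trace(\tau\omega^{-1/2}\tau\omega^{-1/2})$. Therefore
\[
\rP\mleft(\tau^{AB_{1}\ldots B_{n}},\,\tau^{A}\tensor\sigma^{B_{1}}\tensor\cdots\tensor\sigma^{B_{n}}\mright)^{2}\;=\;1-\rF(\tau,\omega)^{2}\;\le\;1-\frac{n}{n+2^{k}-1}\;=\;\frac{2^{k}-1}{n+2^{k}-1}\;\le\;\frac{2^{k}}{n}\enspace,
\]
and since $n=\ceil{2^{k}/\delta}\ge 2^{k}/\delta$ the right-hand side is at most $\delta$, so $\rP\le\sqrt{\delta}$ as claimed.

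The step I expect to be the main obstacle is the off-diagonal evaluation: because $\rho^{AB_{i}}$ and $\rho^{AB_{j}}$ both act non-trivially on the common register $A$, the operators under the trace genuinely fail to commute, so one must be careful about the order in which the partial traces are taken (tracing $B_{i}$ first is what decouples $A$) and about the support projectors generated by the inverses $\br{\sigma^{B_{m}}}^{-1/2}$. A conceptual point worth flagging is why a second-moment quantity is the right tool at all: $\Dmax(\tau\|\omega)$ stays $\le k$ no matter how large $n$ is, whereas the collision/R\'enyi-$2$ quantity is $1+\Order(2^{k}/n)$, and it is exactly this gain that the convex-split lemma trades off against the number of copies.
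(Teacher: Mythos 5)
The paper does not prove this lemma but cites Anshu, Devabathini, and Jain for it, so the comparison is against the cited argument rather than an in-paper proof. Your proof is correct. The second-moment computation is handled carefully at exactly the points that matter: on spectator registers the four $\sigma$-factors collapse to $\trace(\sigma)=1$; for the diagonal terms, $Y^{-1/2}\rho^{AB}Y^{-1/2}\preceq 2^{k}\,\id$ (with $Y=\rho^{A}\tensor\sigma^{B}$) conjugated by $(\rho^{AB})^{1/2}$ gives the bound $2^{k}$; and for the off-diagonal terms, finiteness of $\Dmax$ gives $\support(\rho^{AB})\subseteq\support(\rho^{A}\tensor\sigma^{B})$, so the support projector $\Pi^{B_{i}}$ generated by $(\sigma^{B_{i}})^{-1/2}\sigma^{B_{i}}(\sigma^{B_{i}})^{-1/2}$ acts trivially on $\rho^{AB_{i}}$, tracing $B_{i}$ first yields $\rho^{A}$ on the $A$-factor, and then cyclicity collapses the remaining trace to $\trace(\rho^{AB_{j}}\,\Pi_{Y_{j}})=1$. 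The conversion via monotonicity of the sandwiched R\'enyi divergence, $-2\log\rF(\tau,\omega)=\widetilde{D}_{1/2}(\tau\|\omega)\le\widetilde{D}_{2}(\tau\|\omega)$, is valid and gives $\rP^{2}\le 1-n/(n+2^{k}-1)\le 2^{k}/n\le\delta$.

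Compared to the argument in the cited reference: the standard route bounds the relative entropy $\rD(\tau\|\omega)\le\log\bigl(1+(2^{k}-1)/n\bigr)$ and then passes to purified distance via a Pinsker-type inequality, whereas you bound the collision divergence $\widetilde{D}_{2}(\tau\|\omega)\le\log\bigl(1+(2^{k}-1)/n\bigr)$ and use $\widetilde{D}_{1/2}\le\widetilde{D}_{2}$. Both routes control the same quantity $\log\bigl(1+(2^{k}-1)/n\bigr)$ and reach the same conclusion with the same constants. Your version has the advantage of being a purely operator-algebraic computation (no logarithms of states, no entropic identities), at the modest price of invoking the nontrivial but by-now-standard monotonicity of $\widetilde{D}_{\alpha}$ in $\alpha$. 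Your closing observation explaining why a second moment is the right tool --- that $\Dmax(\tau\|\omega)$ stays at $k$ while the collision quantity decays as $1+\Order(2^{k}/n)$ --- is exactly the point.
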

We may think of the Convex-Split Lemma as providing a sufficient condition under which the correlations between registers~$A$ and~$B$ in~$\rho$ can be ``hidden'' by taking a certain convex combination of quantum states. A dual problem is to find conditions sufficient for \emph{identifying\/} the location of desired correlations in a convex combination. This task is achievable via the position-based decoding technique, which in turn uses quantum hypothesis testing.  

\begin{lemma}[Position-Based Decoding~\cite{AJW19-building-blck-noisy-Qtm}]
Let~$\epsilon >0$, and~$\rho^{AB} \in \qstate(\cH^{AB})$ and~$\sigma^{B} \in \qstate(\cH^{B})$ be quantum states such that~$\support(\rho^{B}) \subseteq \support(\sigma^{B})$. Let~$n \coloneqq \ceil{\epsilon \, 2^{\Dh^{\epsilon}(\rho^{AB} \| \rho^{A} \tensor \sigma^{B})}}$, and for every~$j \in [n]$,
\begin{equation*}
    \tau_j^{AB_1\ldots B_n} \quad \coloneqq \quad \rho^{AB_j} \tensor  \sigma^{B_1} \tensor \cdots \tensor 
 	\sigma^{B_{j-1}} \tensor \sigma^{B_{j+1}} \tensor \cdots \tensor 
 	\sigma^{B_{n}} \enspace.
\end{equation*}
There exists a measurement~$( \Lambda_j : j \in [n+1])$ on registers~$A B_1 B_2 \dotsb B_{n}$, i.e., operators~$\Lambda_i \succeq 0$ with
$$\sum_{j=1}^{n+1} \Lambda_j \quad = \quad \id \enspace, $$ 
such that for all~$j \in [n]$,
\begin{equation*}
    \trace \mleft[ \Lambda_j \tau_j^{AB_1\ldots B_n} \mright] \quad \geq \quad 1-6\epsilon \enspace.
\end{equation*}
\end{lemma}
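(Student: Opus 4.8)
The plan is to build the POVM from a near-optimal hypothesis test witnessing $\Dh^{\epsilon}(\rho^{AB}\|\rho^A \tensor \sigma^B)$ via the square-root (``pretty good measurement'') recipe, and to control the decoding error with the Hayashi--Nagaoka operator inequality.

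First I would unpack the definition of the hypothesis testing relative entropy. Set $M \coloneqq 2^{\Dh^{\epsilon}(\rho^{AB}\|\rho^A\tensor\sigma^B)}$; the support assumption $\support(\rho^{B}) \subseteq \support(\sigma^{B})$ forces $\support(\rho^{AB}) \subseteq \support(\rho^A \tensor \sigma^B)$, so this quantity is finite, and since the set of tests $\set{\Pi \in \Pos(\cH^{AB}) : \Pi \preceq \id}$ is compact there is an operator $\Pi$ with $\trace(\Pi\rho^{AB}) \ge 1-\epsilon$ and $\trace\br{\Pi (\rho^A \tensor \sigma^B)} = 1/M$. For each $j \in [n]$ let $\Pi_j$ act as $\Pi$ on the registers $A B_j$ and as the identity on the remaining registers $B_k$, $k \ne j$. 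Put $S \coloneqq \sum_{k=1}^{n}\Pi_k$ and define
\[
\Lambda_j \quad \coloneqq \quad S^{-1/2}\, \Pi_j\, S^{-1/2} \quad (j \in [n]), \qquad \Lambda_{n+1} \quad \coloneqq \quad \id - \sum_{j=1}^{n}\Lambda_j \enspace,
\]
where $S^{-1/2}$ is the inverse of $S$ on its support (so $\Lambda_{n+1}$ also picks up the projection onto $\ker S$). These operators are positive semidefinite and sum to $\id$, hence form a valid POVM on $AB_1 \cdots B_n$.

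Next, for a fixed $j \in [n]$, I would invoke the Hayashi--Nagaoka inequality: for operators $0 \preceq T_0 \preceq \id$ and $T_1 \succeq 0$,
\[
\id - (T_0+T_1)^{-1/2}\, T_0\, (T_0+T_1)^{-1/2} \quad \preceq \quad 2\,(\id - T_0) + 4\, T_1 \enspace.
\]
With $T_0 = \Pi_j$ and $T_1 = \sum_{k \ne j}\Pi_k$ this gives $\id - \Lambda_j \preceq 2(\id - \Pi_j) + 4\sum_{k \ne j}\Pi_k$. Taking the trace against $\tau_j^{AB_1\ldots B_n}$: the marginal of $\tau_j$ on $A B_j$ is $\rho^{AB}$, so $\trace\br{(\id-\Pi_j)\tau_j} = \trace\br{(\id-\Pi)\rho^{AB}} \le \epsilon$; and for $k \ne j$ the marginal on $A B_k$ is $\rho^A \tensor \sigma^B$, so $\trace\br{\Pi_k\tau_j} = \trace\br{\Pi(\rho^A\tensor\sigma^B)} = 1/M$. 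Since $n = \ceil{\epsilon M} \ge \epsilon M$, we have $1/M \le \epsilon/n$, and there are $n-1$ indices $k \ne j$, so $4\sum_{k\ne j}\trace\br{\Pi_k\tau_j} \le 4(n-1)\epsilon/n \le 4\epsilon$. Altogether $\trace\br{(\id - \Lambda_j)\tau_j} \le 2\epsilon + 4\epsilon = 6\epsilon$, i.e.\ $\trace\br{\Lambda_j \tau_j} \ge 1-6\epsilon$.

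The only non-routine input is the Hayashi--Nagaoka operator inequality, which I would cite rather than reprove. Everything else is bookkeeping: checking that $\Pi_j$ and $\Pi_k$ sit on disjoint ``$B$'' registers so the partial traces factor exactly as claimed, and noting that the choice $n = \ceil{\epsilon M}$ is precisely tuned so the cross-terms contribute at most $\epsilon$ each. A minor point to state carefully is that $S$ need not be invertible --- hence the residual definition of $\Lambda_{n+1}$ and the use of the generalized inverse --- but since all the estimates are operator inequalities evaluated against states and $\Pi_k$ vanishes on $\ker S$, this causes no difficulty.
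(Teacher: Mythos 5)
The paper does not actually supply a proof of this lemma: it cites it from Anshu--Jain--Warsi and only remarks that the statement is normalized slightly differently from the source. Your argument is the standard one that the cited reference uses, and it is correct. A few small points, all of which you handle or could make more explicit: (i) compactness of the set of tests plus the support hypothesis (which forces $\support(\rho^{AB}) \subseteq \support(\rho^A \tensor \sigma^B)$, hence a uniform bound $\rho^{AB} \preceq c\,\rho^A\tensor\sigma^B$ and so a strictly positive infimum of $\trace(\Pi(\rho^A\tensor\sigma^B))$) is exactly what guarantees an optimal test $\Pi$ exists with $\trace(\Pi(\rho^A\tensor\sigma^B)) = 1/M$ rather than merely an approximating sequence -- though for the estimate only ``$\le 1/M$'' is needed, so you could have sidestepped the attainment issue entirely; (ii) the kernel issue is handled correctly, since on $\ker S$ both $T_0$ and $T_1$ vanish, so the Hayashi--Nagaoka inequality trivially holds there and $\Lambda_{n+1}$ absorbs the kernel projector; (iii) the arithmetic $4(n-1)/M \le 4(n-1)\epsilon/n \le 4\epsilon$ is exactly the tuning the paper's choice $n = \ceil{\epsilon\, 2^{\Dh^{\epsilon}}}$ is designed for. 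So the proposal is a faithful reconstruction of the intended proof; the only ``input from outside'' is the Hayashi--Nagaoka operator inequality, which is appropriate to cite rather than reprove.
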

The above statement is slightly different from the one in ref.~\cite{AJW19-building-blck-noisy-Qtm} because of a minor difference in defining quantum hypothesis testing relative entropy.

 Let~$\ket{\psi}^{RABC}$ be a quantum state shared between Alice, Bob, and Ref where registers~$AC$ are with Alice, register~$B$ is with Bob and register~$R$ is with Ref, and~$\psi'^{RBC} \in \sB^{\epsilon}(\psi^{RBC})$. The AJW protocol works as follows. 
 
\paragraph{The AJW protocol:} 
\begin{enumerate}
\item 
Alice and Bob initially share~$m \eqdef \ceil{2^{\beta} / \epsilon^2 }$ copies of a purification~$\ket{\sigma}^{LC}$ of~$\sigma^{C}$ where~$\beta \eqdef \Dmax \mleft( \psi'^{RBC} \| \psi'^{RB} \tensor \sigma^{C} \mright)$. Their global state is~$\ket{\psi}^{RABC} \tensor \ket{\sigma}^{L_1C_1} \tensor \ldots \tensor \ket{\sigma}^{L_m C_m}$, where~$\size{L_i} = \size{L}$ and~$\size{C_i} = \size{C}$ for all~$i \in [m]$. The registers~$AC L_1 L_2 \dotsb L_m$ are with Alice and the registers~$B C_1 C_2 \dotsb C_m$ are with Bob.

\item
Let~$b$ be the smallest integer such that~$\log b \ge \Dh^{\epsilon^2}(\psi'^{BC} \| \psi'^{B} \tensor \sigma^{C}) - \log \frac{1}{\epsilon^2} \;$. By performing a suitable isometry on her registers, Alice transforms the global state into a state close to the state
\begin{align*}
     \frac{1}{m} \sum_{j=1}^{m} \ket{\floor{(j-1)/b}}^{J_1} & \ket{j-1 \tpmod{b}}^{J_2} \ket{0}^{L_j} \ket{\psi}^{RABC_j} \\
     & \mbox{} \tensor \ket{\sigma}^{L_1C_1} \tensor \dotsb \tensor \ket{\sigma}^{L_{j-1}C_{j-1}} \tensor \ket{\sigma}^{L_{j+1}C_{j+1}} \tensor \dotsb \tensor  \ket{\sigma}^{L_{m} C_{m}} \enspace.
\end{align*}
This is possible due to the Uhlmann theorem, the Convex-Split Lemma, and the choice of~$m$.

\item
Alice sends register~$J_1$ to Bob with communication cost at most~$(\log m - \log b)/2$ using superdense coding. 

\item 
Then, for each~$j_2 \in [b]$, Bob swaps registers~$C_{j_2}$ and~$C_{j_2+bj_1}$, conditioned on register~$J_1$ being in state~$\ket{j_1}$. At this point, registers~$RBC_1 \ldots C_b$ are in a state close to
\begin{align*}
    \frac{1}{b} \sum_{j_2=1}^{b} \psi^{RBC_{j_2}} \tensor \sigma^{C_1} \tensor \ldots \tensor \sigma^{C_{j_2-1}} \tensor \sigma^{C_{j_2+1}} \tensor \ldots \tensor  \sigma^{C_{b}} \enspace.
\end{align*}

\item
Then, Bob uses position-based decoding to determine the index~$j_2$ 
for which register~$C_{j_2}$ is correlated with registers~$RB$. This 
is possible by the choice of~$b$.

\item
Since the state over registers~$RBC_{j_2}$ is close to~$\psi^{RBC}$, and it is in tensor product with the state over registers~$C_1 \dotsb C_{j_{2}-1} C_{j_{2}+1} \dotsb C_{b}$, the register purifying registers~$RBC_{j_2}$ is with Alice. She transforms the purifying registers to the register~$A$ such that the final state over registers~$RABC_{j_2}$ is close to~$\psi^{RABC}$.

\end{enumerate}

The following theorem states the communication cost and  the error in the final state of the above protocol. 

\begin{theorem} [\cite{AJW18-Achievability-QSR}]
\label{thm-QSR-AJW}
	 Let~$\epsilon \in (0,1)$, and~$\ket{\psi}^{RABC}$ be a pure quantum state shared by Ref~$(R)$, Alice~$(AC)$ and Bob~$(B)$. There is a quantum state redistribution protocol for~$\ket{\psi}^{RABC}$ which outputs a state~$\phi^{RABC} \in \sB^{9\epsilon}(\psi^{RABC})$. Moreover, the number of qubits sent by Alice to Bob in the protocol is bounded from above by
	\[
	\frac{1}{2} \inf_{\sigma^{C}} \inf_{\psi' \in \sB^{\epsilon}(\psi^{RBC})} \left(  \Dmax \mleft( \psi'^{RBC} \mright\| \mleft. \psi'^{RB} \tensor \sigma^{C} \mright) - \Dh^{\epsilon^{2}} \mleft( \psi'^{BC} \mright\| \mleft. \psi'^{B} \tensor \sigma^{C} \mright)
	 \right) + \log \frac{1}{\epsilon^2} \enspace.
	\]
\end{theorem}

For a complete proof of this result, including the correctness and error analysis of the protocol, see the proof of Theorem~1 in ref.~\cite{AJW18-Achievability-QSR}.

\subsection{Decoupling classical-quantum states}
\label{sec:decoupl-CQ}

\emph{Embezzlement\/} refers to a process introduced by van Dam and Hayden~\cite{vDH03-embezzling} in which any bipartite quantum state, possibly entangled, can be approximately produced from a bipartite catalyst using only local unitary operations. The bipartite catalyst is called the \emph{embezzling quantum state\/}. For an integer~$n$ and registers~$D$ and~$D'$ with~$|D| =|D'| \ge n$, the embezzling state is defined as
\begin{equation}
\label{eq:vDH-embezzlement}
    \ket{\xi}^{DD'} \quad \coloneqq \quad \frac{1}{\sqrt{S(n)}} 
    \sum_{i=1}^{n} \frac{1}{\sqrt{i}} \ket{i}^{D} \ket{i}^{D'} \enspace,
\end{equation}
where~$S(n) \coloneqq \sum_{i=1}^{n} \tfrac{1}{i} \;$. 
Van Dam and Hayden~\cite{vDH03-embezzling} showed that an arbitrary bipartite state can be embezzled from $\ket{\xi}^{DD'}$ with arbitrary accuracy when~$n$ is chosen to be correspondingly large. 
\begin{theorem}[\cite{vDH03-embezzling}]
Let ~$\ket{\phi}^{AB} \in \cH^{AB}$ be a bipartite state with Schmidt rank~$m$ and~$\ket{\xi}^{DD'}$ be the state defined in Eq.~\eqref{eq:vDH-embezzlement}. For~$\delta \in (0,1]$, there exists local isometries~$V_{1}:\cH^{D} \rightarrow \cH^{DA}$ and~$V_{2}: \cH^{D'} \rightarrow \cH^{D'B}$ such that
\begin{equation}
\label{eq:vDH-embezzlement-error}
    \rP \br{ (V_1 \tensor V_2) \ket{\xi} , \; \ket{\xi} \tensor \ket{\phi}} 
    \quad \le \quad \delta \enspace,
\end{equation}
provided that~$n\ge m^{2/\delta^2}$.
\end{theorem}

For a fixed~$a \in [n]$,  a close variant of the above embezzling state is defined as
\begin{equation} 
\label{eq:embezzling-state}
    \ket{\xi_{a:n}}^{DD'} \quad \coloneqq \quad \frac{1}{\sqrt{S(a,n)}} 
    \sum_{i=a}^{n} \frac{1}{\sqrt{i}} \ket{i}^{D} \ket{i}^{D'} \enspace,
\end{equation}
where~$S(a,n) \coloneqq \sum_{i=a}^{n} \tfrac{1}{i} \;$.
Using these states,  Lemma~\ref{lemma-unif-embz} below shows how we may embezzle the uniform distribution with closeness guaranteed in terms of max-relative entropy. The proof of Eq.~\eqref{eq-unif-embz-1}  in this lemma is due to Anshu and 
Jain~\cite[Claim 1]{AJ18-efficient-convex-split}, and Eq.~\eqref{eq-unif-embz-2}
follows from a similar argument. For completeness, we provide a proof for the lemma.

\begin{lemma} [Extension of \cite{AJ18-efficient-convex-split}, Claim 1]
\label{lemma-unif-embz}
	Let~$\delta \in (0, \frac{1}{15})$, and~$a, b, n \in \integers$ be positive integers
	such that~$a \geq b \ge 2$ and~$n \geq ~ a^{1/\delta}$. Let~$D$ and~$E$ be 
	registers with~$|D| \geq n$ and~$|E| \geq b$. Let~$W_b$ be a unitary operation that acts as
	\begin{equation} 
	\label{eq-unif-embz-unitary}
    	W_b \ket{i}^{D} \ket{0}^{E} 
    	\quad = \quad \ket{\floor{i/b}}^{D} \ket{i \tpmod b}^{E} \qquad \forall i\in \{0, \ldots |D|-1\}\enspace, 
	\end{equation}
	and~$\Pi_b \in \Pos \br{\cH^{DE}}$ be the projection operator onto the support of~$W_b \left( \xi_{a:n}^{D} \tensor \density{0}^{E} \right) W_b^\dagger$.
	It holds that
	\begin{equation}
	\label{eq-unif-embz-1}
	 W_b \left( \xi_{a:n}^{D} \tensor \density{0}^{E} \right) W_b^\dagger 
	 \quad \preceq \quad (1+15 \delta) ~ \xi_{1:n}^{D} \tensor \mu_{b}^{E} 
	 \enspace,
	 \end{equation}
	 and
	\begin{equation}
	\label{eq-unif-embz-2}
	\Pi_{b} \br{\xi_{1:n}^{D} \tensor \mu_{b}^{E}} \Pi_{b} \quad \preceq\quad 2 \; W_b \left( \xi_{a:n}^{D} \tensor \density{0}^{E} \right) W_b^\dagger \enspace.
	\end{equation}
	where~$\mu_{b}^{E} \eqdef \tfrac{1}{b} \sum_{e=0}^{b-1}\density{e}$.
\end{lemma}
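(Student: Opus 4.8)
The final statement to prove is Lemma~\ref{lemma-unif-embz}, which establishes two Loewner-order sandwiching inequalities relating $W_b(\xi_{a:n}^D \tensor \density{0}^E)W_b^\dagger$ to $\xi_{1:n}^D \tensor \mu_b^E$.

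\textbf{Plan of the proof.} The plan is to compute both sides of each inequality explicitly in the canonical basis, using the fact that $W_b$ is just the ``division with remainder'' bijection $i \mapsto (\floor{i/b}, i \bmod b)$ on the basis vectors, and then compare eigenvalues index by index. First I would write out the left-hand side: since $\xi_{a:n}^D = \frac{1}{S(a,n)}\sum_{i=a}^n \frac{1}{i}\density{i}^D$ is diagonal, and $W_b$ maps $\ket{i}^D\ket{0}^E$ to $\ket{\floor{i/b}}^D\ket{i \bmod b}^E$, we get
\begin{equation*}
    W_b\big(\xi_{a:n}^D \tensor \density{0}^E\big)W_b^\dagger \quad = \quad \frac{1}{S(a,n)}\sum_{i=a}^{n}\frac{1}{i}\,\density{\floor{i/b}}^D \tensor \density{i \bmod b}^E \enspace.
\end{equation*}
The right-hand side $\xi_{1:n}^D \tensor \mu_b^E = \frac{1}{S(1,n)}\sum_{k=1}^n \frac{1}{k}\density{k}^D \tensor \frac{1}{b}\sum_{e=0}^{b-1}\density{e}^E$ is also diagonal in the same product basis, with the diagonal entry at $(k,e)$ equal to $\frac{1}{b\,k\,S(1,n)}$. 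Both operators are diagonal, so the Loewner comparisons reduce to scalar inequalities on the diagonal entries.

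\textbf{Key steps.} For inequality~\eqref{eq-unif-embz-1}: for each $i$ with $a \le i \le n$, write $i = bk + e$ with $k = \floor{i/b}$, $0 \le e < b$; the $(k,e)$-diagonal entry on the left is $\frac{1}{i\,S(a,n)}$ and on the right is $\frac{1}{b\,k\,S(1,n)}$ (noting $k \ge 1$ since $i \ge a \ge b$). So I need $\frac{1}{i\,S(a,n)} \le (1+15\delta)\frac{1}{b\,k\,S(1,n)}$, i.e. $\frac{bk}{i} \le (1+15\delta)\frac{S(a,n)}{S(1,n)}$. Since $bk \le i < b(k+1)$, the ratio $bk/i \le 1$, and $S(a,n) \le S(1,n)$, so the inequality $\frac{bk}{i} \le 1 \le (1+15\delta)\frac{S(a,n)}{S(1,n)}$ would follow provided $S(1,n)/S(a,n) \le 1+15\delta$. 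This last bound is where the hypothesis $n \ge a^{1/\delta}$ enters: $S(1,n) - S(a,n) = \sum_{i=1}^{a-1}\frac{1}{i} \le 1 + \ln(a-1) < 1 + \ln a$, while $S(a,n) \ge \ln(n/a) \ge \ln(a^{1/\delta}/a) = (1/\delta - 1)\ln a$ (using $n/a \ge a^{1/\delta}/a = a^{1/\delta - 1}$); so $\frac{S(1,n)}{S(a,n)} \le 1 + \frac{1+\ln a}{(1/\delta-1)\ln a}$, and a short estimate using $\delta < 1/15$ bounds this by $1+15\delta$. One subtlety: basis indices $k$ in the support of the left-hand side only run over $\floor{a/b} \le k \le \floor{n/b}$, which is a subset of $\{1,\ldots,n\}$, so the right-hand side dominates on the orthogonal complement trivially (left side is zero there). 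For inequality~\eqref{eq-unif-embz-2}: $\Pi_b$ projects onto exactly the support of the left-hand operator, i.e. onto $\mathrm{span}\{\ket{\floor{i/b}}\ket{i \bmod b} : a \le i \le n\}$; restricted to this support, I need $\frac{1}{b\,k\,S(1,n)} \le 2\cdot\frac{1}{i\,S(a,n)}$, i.e. $\frac{i}{bk} \le 2\frac{S(1,n)}{S(a,n)}$. Since $i < b(k+1) \le 2bk$ (as $k \ge 1$), we have $i/(bk) < 2 \le 2 S(1,n)/S(a,n)$, which closes the bound.

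\textbf{Main obstacle.} The genuinely delicate part is the harmonic-sum estimate showing $S(1,n)/S(a,n) \le 1 + 15\delta$ under the hypothesis $n \ge a^{1/\delta}$ with $\delta < 1/15$; everything else is bookkeeping about the division-with-remainder bijection and diagonal operators. I would need to be slightly careful with the edge cases $a = 1$ (where $S(a,n) = S(1,n)$ and the first inequality is immediate with constant $1$) and with the logarithmic comparison $S(m,n) = \sum_{i=m}^n 1/i$ versus $\ln((n+1)/m)$ and $\ln(n/(m-1))$, choosing the direction of each bound to make the final numerical constant work out to $15\delta$. The constant $15$ is not tight and is chosen generously precisely so that these crude logarithmic bounds suffice, so I would not optimize it.
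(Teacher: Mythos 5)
Your proof is correct and takes essentially the same route as the paper: both diagonalize $W_b(\xi_{a:n}^D\tensor\density{0}^E)W_b^\dagger$ in the product basis via the division-with-remainder map, reduce the Loewner inequalities to scalar comparisons $bk/i \le 1$ and $i/(bk) < 2$ for $i = bk+e$, and then invoke the bound $S(1,n)/S(a,n) \le 1 + 15\delta$ from the hypothesis $n \ge a^{1/\delta}$. The only small difference is that you derive the harmonic-ratio bound from scratch via elementary comparisons of $\sum_{i<a}1/i$ and $\sum_{i\ge a}^n 1/i$ against logarithms, whereas the paper shortcuts this by citing the known estimate $\abs{S(a,n) - \log(n/a)} \le 4$; both are fine.
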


\begin{proof}
    Let~$W_b$ be a unitary operator satisfying Eq.~\eqref{eq-unif-embz-unitary}. We have
    \begin{IEEEeqnarray}{rCl}
        W_b \left( \xi_{a:n}^{D} \tensor \density{0}^{E} \right) W_b^\dagger \quad
        & = & \quad \frac{1}{S(a,n)} \sum_{i=1}^{n} \frac{1}{i} \, W_b \left( \density{i}^{D} \tensor \density{0}^{E} \right) W_b^\dagger \nonumber \\
        & = & \quad \frac{1}{S(a,n)} \sum_{i=1}^{n} \frac{1}{i} \, \density{\floor{i/b}}^{D} \tensor \density{i \tpmod b}^{E} \nonumber \\
        & = & \quad \frac{1}{S(a,n)} \sum_{i' = \floor{\frac{a}{b}}}^{\floor{\frac{n}{b}}} 
              \sum_{e=0}^{\min \{ b-1 , n - i'b\}}  \frac{1}{bi'+e} \, \density{i'}^{D} \tensor \density{e}^{E}  \label{eq-unif-embz-prf-1}\\
        & \preceq& \quad \frac{1}{S(a,n)} \sum_{i' = \floor{\frac{a}{b}}}^{\floor{\frac{n}{b}}} 
              \sum_{e=0}^{ b-1}  \frac{1}{bi'} \, \density{i'}^{D} \tensor \density{e}^{E} \nonumber \\
        &  \preceq & \quad \frac{S(1,n)}{S(a,n)} \, \xi_{1:n}^{D} \tensor \mu_{b}^{E} \enspace. \label{eq-unif-embz-prf-2}
    \end{IEEEeqnarray}
    In ref.~\cite{Masch1790-adnotationes}, it is shown that~$\abs{S(a,n) - \log\tfrac{n}{a}} \leq 4$. Since~$n \geq a^{1/\delta}$, we have
    \begin{equation}
        \label{eq-unif-embz-prf-3}
           \frac{S(1,n)}{S(a,n)} \quad \le \quad \frac{\log n + 4}{\log n - \log a - 4} \quad \le \quad \frac{1+4\delta}{1-5\delta} \quad \le \quad 1+15\delta \enspace.
    \end{equation}
      Now, Eq.~\eqref{eq-unif-embz-prf-2} and Eq.~\eqref{eq-unif-embz-prf-3} together imply Eq.~\eqref{eq-unif-embz-1}. It remains to prove Eq.~\eqref{eq-unif-embz-2}. Let~$\Pi_b \in \Pos \br{\cH^{DE}}$ be the projection operator onto the support of~$W_b \left( \xi_{a:n}^{D} \tensor \density{0}^{E} \right) W_b^\dagger$.
      Eq.~\eqref{eq-unif-embz-prf-1} implies that
      \begin{equation*}
          \Pi_b \quad = \quad \sum_{i' = \floor{\frac{a}{b}}}^{\floor{\frac{n}{b}}} 
              \sum_{e=0}^{\min \{ b-1 , n - i'b\}}  \density{i'}^{D} \tensor \density{e}^{E} \enspace.
      \end{equation*}
      Thus,
      \begin{IEEEeqnarray*}{rCl's}
      \Pi_b \br{\xi_{1:n}^{D} \tensor \mu_{b}^{E} } \Pi_b \quad
      & = & \quad \frac{1}{S(1,n)} \, 
         \sum_{i' =  \floor{\frac{a}{b}}}^{\floor{\frac{n}{b}}} 
         \sum_{e=0}^{\min \{ b-1 , n - i'b\}}  \frac{1}{bi'} \, \density{i'}^{D} \tensor \density{e}^{E} \\
      & \preceq& \quad \frac{1}{S(1,n)} \, 
         \sum_{i' =  \floor{\frac{a}{b}}}^{\floor{\frac{n}{b}}} 
         \sum_{e=0}^{\min \{ b-1 , n - i'b\}}  \frac{2}{bi'+ e} \, \density{i'}^{D} \tensor \density{e}^{E} \\
      & = & \quad  \frac{ 2 \; S(a,n)}{S(1,n)} \, W_b \left( \xi_{a:n}^{D} \tensor \density{0}^{E} \right) W_b^\dagger \qquad \qquad \br{\text{by Eq.~\eqref{eq-unif-embz-1}}}\\
      & \preceq& \quad  2 \; W_b \left( \xi_{a:n}^{D} \tensor \density{0}^{E} \right) W_b^\dagger \enspace,
      \end{IEEEeqnarray*}
      where the first inequality holds since~$bi'+e \le 2 \, bi'$ for~$i' \ge 1$ and~$0 \le e \le b-1$, and the second inequality holds since~$S(a,n) \le S(1,n)$. 
\end{proof}

As a corollary of the above lemma, Anshu and Jain~\cite{AJ18-efficient-convex-split} show that the embezzling state~$\xi_{a:n}^{D}$ can be used almost catalytically to \emph{flatten\/} any quantum state using unitary operations. The proof of Eq.~\eqref{eq-flat-embz-1} in the corollary is provided in ref.~\cite[Eq.~(6)]{AJ18-efficient-convex-split}, and Eq.~\eqref{eq-flat-embz-2}
follows from Eq.~\eqref{eq-unif-embz-2}. For completeness, we provide a proof below.
\begin{corollary}[extension of \cite{AJ18-efficient-convex-split}, Eq.~(6)]
\label{cor-flatten}
	Let~$\rho \in \qstate( \cH^{C})$ be a quantum state with spectral decomposition~$\rho^{C} = \sum_c q(c) 
	\density{v_c}^{C}$. Let~$\delta \in (0,\tfrac{1}{15})$ and~$\gamma \in 
	(0,1)$ such that~$\tfrac{|C|}{\gamma}$ is an integer and all 
	eigenvalues~$q(c)$ are integer multiples of~$\tfrac{\gamma}{|C|}$. 
	Let~$a \coloneqq \tfrac{|C|}{\gamma} \max_{c} q(c)$,~$n \eqdef
	a^{1/\delta}$, and~$D$ and~$E$ be quantum registers with~$|D| \geq n$ 
	and~$|E|=a$. 
	Let~$W \in \unitary(\cH^{CED})$ be the unitary operator defined as
	\[
	W \quad \coloneqq \quad \sum_c \density{v_c}^C \tensor 
	W_{b(c)}^{ED} \enspace
	\]
	and~$\Pi \in \Pos(\cH^{CED})$ be the projection operator defined as
	\begin{equation*}
	    \Pi \quad \coloneqq \quad \sum_c \density{v_c}^C \tensor 
	\Pi_{b(c)}^{ED} \enspace,
	\end{equation*}
	where~$W_{b(c)}$ and~$\Pi_{b(c)}$ are the operators defined in 
	Lemma~\ref{lemma-unif-embz} with~$b(c) \coloneqq \tfrac{q(c)|C|}{\gamma}$ (but with the tensor factors corresponding to~$D$ and~$E$ swapped). 
	Then, we have
	\begin{equation}
	\label{eq-flat-embz-1}
	     W \left( \rho^{C} \tensor \density{0}^{E}  \tensor \xi_{a:n}^{D} \right) W^\dagger 
	     \quad \preceq \quad (1+15 \delta) ~ \rho^{CE} \tensor \xi_{1:n}^{D}
	\end{equation}
	and 
	\begin{equation}
	\label{eq-flat-embz-2}
	    \Pi \br{\rho^{CE} \tensor \xi_{1:n}^{D}} \Pi \quad \preceq\quad 2 \; W \left( \rho^{C} \tensor \density{0}^{E}  \tensor \xi_{a:n}^{D} \right) W^\dagger \enspace,
 	\end{equation}
	where~$\rho^{CE} \coloneqq \frac{\gamma}{|C|} \sum_{c} \density{v_c}^C \tensor
	\sum_{e=0}^{b(c)-1} \density{e}^E$ is an extension 
	of~$\rho^{C}$ with flat spectrum. 
\end{corollary}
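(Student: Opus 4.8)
The plan is to derive Corollary~\ref{cor-flatten} by applying Lemma~\ref{lemma-unif-embz} ``in superposition'' over the eigenbasis of~$\rho$. First I would fix the spectral decomposition~$\rho^{C} = \sum_c q(c) \density{v_c}^{C}$ and observe that by the divisibility hypotheses each~$b(c) = q(c)|C|/\gamma$ is a positive integer, each~$b(c) \le a$, and~$a \le n$, so that Lemma~\ref{lemma-unif-embz} applies verbatim with the parameter~$b = b(c)$ for every~$c$ (with the same~$\delta$, $a$, $n$). This gives, for each~$c$,
\begin{equation*}
    W_{b(c)} \br{ \xi_{a:n}^{D} \tensor \density{0}^{E} } W_{b(c)}^\dagger \quad \preceq \quad (1+15\delta)\, \xi_{1:n}^{D} \tensor \mu_{b(c)}^{E} \enspace,
\end{equation*}
together with the companion inequality~\eqref{eq-unif-embz-2} for the projection~$\Pi_{b(c)}$ onto the support of the left-hand side.

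Next I would assemble these block inequalities into a single operator inequality on~$\cH^{CED}$. Since~$W = \sum_c \density{v_c}^C \tensor W_{b(c)}$ is block-diagonal in the~$\{\ket{v_c}\}$ basis, conjugating the product state~$\rho^{C} \tensor \density{0}^{E} \tensor \xi_{a:n}^{D} = \sum_c q(c)\, \density{v_c}^C \tensor \density{0}^E \tensor \xi_{a:n}^D$ by~$W$ acts blockwise, yielding
\begin{equation*}
    W \br{ \rho^{C} \tensor \density{0}^{E} \tensor \xi_{a:n}^{D} } W^\dagger \quad = \quad \sum_c q(c)\, \density{v_c}^C \tensor W_{b(c)}\br{ \xi_{a:n}^D \tensor \density{0}^E } W_{b(c)}^\dagger \enspace.
\end{equation*}
Applying the per-block bound term by term (the coefficients~$q(c) \ge 0$ preserve Loewner order, and a direct sum of~$\preceq$ inequalities is again a~$\preceq$ inequality) gives
\begin{equation*}
    W \br{ \rho^{C} \tensor \density{0}^{E} \tensor \xi_{a:n}^{D} } W^\dagger \quad \preceq \quad (1+15\delta) \sum_c q(c)\, \density{v_c}^C \tensor \xi_{1:n}^D \tensor \mu_{b(c)}^E \enspace.
\end{equation*}
The remaining point is to identify the right-hand side with~$(1+15\delta)\, \rho^{CE} \tensor \xi_{1:n}^{D}$: by definition~$\mu_{b(c)}^E = \tfrac{1}{b(c)} \sum_{e=0}^{b(c)-1} \density{e}^E$, so~$q(c)\, \density{v_c}^C \tensor \mu_{b(c)}^E = \tfrac{q(c)}{b(c)} \density{v_c}^C \tensor \sum_{e=0}^{b(c)-1}\density{e}^E = \tfrac{\gamma}{|C|}\, \density{v_c}^C \tensor \sum_{e=0}^{b(c)-1}\density{e}^E$, using~$q(c)/b(c) = \gamma/|C|$; summing over~$c$ reproduces exactly the stated~$\rho^{CE}$, which is manifestly a normalized extension of~$\rho^{C}$ with flat nonzero spectrum~$\gamma/|C|$. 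This proves Eq.~\eqref{eq-flat-embz-1}. For Eq.~\eqref{eq-flat-embz-2}, I would note that~$\Pi = \sum_c \density{v_c}^C \tensor \Pi_{b(c)}$ is block-diagonal and is precisely the projection onto the support of~$W\br{\rho^{C}\tensor\density{0}^E\tensor\xi_{a:n}^D}W^\dagger$; conjugating~$\rho^{CE}\tensor\xi_{1:n}^D$ by~$\Pi$ again splits over blocks, and because~$\rho^{CE}\tensor\xi_{1:n}^D = \tfrac{\gamma}{|C|}\sum_c \density{v_c}^C\tensor\big(\sum_{e=0}^{b(c)-1}\density{e}^E\big)\tensor\xi_{1:n}^D = \sum_c \tfrac{q(c)}{b(c)}\density{v_c}^C \tensor b(c)\mu_{b(c)}^E \tensor \xi_{1:n}^D$, each block contributes a scalar multiple of~$\Pi_{b(c)}\br{\mu_{b(c)}^E\tensor\xi_{1:n}^D}\Pi_{b(c)}$ to which Eq.~\eqref{eq-unif-embz-2} applies, giving Eq.~\eqref{eq-flat-embz-2} after reassembling the blocks and using~$q(c)/b(c) = \gamma/|C|$ to match coefficients with~$W\br{\rho^{C}\tensor\density{0}^E\tensor\xi_{a:n}^D}W^\dagger$.

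The only genuinely delicate bookkeeping — and the step I would treat most carefully — is the coefficient matching: one must check that the scalar~$q(c)/b(c) = \gamma/|C|$ is the same for every~$c$, so that after pulling it out of the sum the leftover~$\sum_c\density{v_c}^C\tensor\sum_{e=0}^{b(c)-1}\density{e}^E$ is literally the operator called~$\rho^{CE}$ in the statement, and that the~$e$-range truncation~$\min\{b-1, n-i'b\}$ appearing inside the proof of Lemma~\ref{lemma-unif-embz} is harmless here because it only affects the \emph{support} structure (hence~$\Pi_{b(c)}$) and not the validity of the~$\preceq$ bounds, which already account for it. Everything else is a routine ``direct sum of Loewner inequalities'' argument enabled by the block-diagonality of~$W$ and~$\Pi$ in the eigenbasis of~$\rho$, so there is no serious obstacle beyond careful transcription.
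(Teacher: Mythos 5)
Your proposal is correct and follows essentially the same route as the paper's own proof: both exploit block-diagonality of $W$ and $\Pi$ in the eigenbasis of $\rho$, apply Lemma~\ref{lemma-unif-embz} block by block, and use the identity $q(c)/b(c) = \gamma/|C|$ to reassemble the blocks into $\rho^{CE}$. The coefficient-matching step you flag as delicate is handled in the paper exactly as you describe, so there is no gap.
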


\begin{proof}
    Let~$W$ be the unitary operator defined in the statement of the corollary . We have
    \begin{IEEEeqnarray*}{rCl}
    \IEEEeqnarraymulticol{3}{l}{
    W \left( \rho^{C} \tensor \density{0}^{E}  \tensor \xi_{a:n}^{D} \right) W^\dagger} \\
    & = & \quad \sum_c q(c) \density{v_c}^{C} \tensor W_{b(c)} \br{\density{0}^{E}  \tensor \xi_{a:n}^{D}} W_{b(c)}^\dagger \\
    & \preceq& \quad \br{1+15\delta} ~ \sum_c q(c) \density{v_c}^{C} \tensor \frac{\gamma}{q(c)|C|} \sum_{e=0}^{b(c)-1}\density{e}^{E}  \tensor \xi_{a:n}^{D}   \\
    & = & \quad \br{1+15\delta} ~ \rho^{CE} \tensor \xi_{a:n}^{D} \enspace,
    \end{IEEEeqnarray*}
    where the inequality follows from Lemma~\ref{lemma-unif-embz}. So, it remains to prove Eq.~\eqref{eq-flat-embz-2}. Let~$\Pi$ be the projection operator defined in the statement of the corollary. We have
    \begin{IEEEeqnarray*}{rCl}
    \Pi \left( \rho^{CE} \tensor \xi_{1:n}^{D} \right) \Pi \quad 
    & = & \quad \frac{\gamma}{|C|} \sum_c b(c) \density{v_c}^{C} \tensor \Pi_{b(c)} \br{\mu_{b(c)}^{E} \tensor \xi_{a:n}^{D}} \Pi_{b(c)}  \\
    & \preceq& \quad 2 \sum_c q(c) \density{v_c}^{C} \tensor W_{b(c)}\br{\density{0}^{E}  \tensor \xi_{a:n}^{D}}W_{b(c)}^\dagger   \\
    & = & \quad 2 \; W \br{\rho^{C}\tensor \density{0}^{E} \tensor \xi_{a:n}^{D}} W^\dagger \enspace,
    \end{IEEEeqnarray*}
    where the inequality is a consequence of  Lemma~\ref{lemma-unif-embz}. 
\end{proof}

We use the above flattening procedure to decouple the quantum register in a classical-quantum state.  
\begin{corollary}
\label{cor-decoupling-C-Q-state}
Consider a classical-quantum state~$\rho^{JC} \eqdef \sum_j p(j) \; \density{j}^J 
     \tensor \rho_j^{C}$, where ~$p$ is a probability distribution and~$\rho_j^{C} \in \qstate \! \left( \cH^{C}\right)$.  Let~$\delta \in (0,\tfrac{1}{15})$ 
     and~$\gamma \in (0,1)$ such that~$a \coloneqq \tfrac{|C|}{\gamma}$ is an integer 
     and suppose that the eigenvalues of all the states~$\rho_j^{C}$ are integer multiples of~$\tfrac{\gamma}{|C|} \;$.
     Let~$n \eqdef a^{1/\delta}$,~$D$ and~$E$ be quantum registers 
     with~$|D| \geq n$ and~$|E|=a$. Then, there 
     exists a unitary operator~$U \in \unitary(\cH^{JCED})$, read-only on 
     register~$J$, and a projection operator~$\Pitilde \in \Pos(\cH^{JCED})$ such that
\begin{align}
\label{eq:cor-decoupl-1}
U \left( \rho^{JC} \tensor \density{0}^{E}  \tensor \xi_{a:n}^{D} \right) U^\dagger 
    & \quad \preceq \quad (1+15 \delta) ~ \rho^{J} \tensor \nu^{CE} \tensor 
        \xi_{1:n}^{D}  \enspace, \\
\label{eq:cor-decoupl-2}
\Pitilde \left( \rho^{J} \tensor \nu^{CE} \tensor \xi_{1:n}^{D} \right) \Pitilde 
    & \quad \preceq \quad 2 \; U \left( \rho^{JC} \tensor \density{0}^{E}  \tensor 
        \xi_{a:n}^{D} \right) U^\dagger  \enspace,
\end{align}
and
\begin{equation}
     \label{eq:cor-decoupl-3}
         \trace \left[ \Pitilde U \left( \rho^{JC} \tensor \density{0}^{E}  \tensor \xi_{a:n}^{D} 
	  \right) U^\dagger  \right]
         \quad = \quad 1 \enspace,
\end{equation}
where~$\nu^{CE} \eqdef \frac{1}{a} \sum_{s=0}^{a-1} \density{s}^{CE}$.
\end{corollary}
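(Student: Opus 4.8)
The plan is to apply the flattening procedure of Corollary~\ref{cor-flatten} to each branch~$\rho_j^C$ of the classical-quantum state, using the \emph{same} embezzling-register parameter~$a = |C|/\gamma$ for every~$j$, and then to exploit that all the resulting flattened states have one and the same spectrum. Fix, for each~$j$, a spectral decomposition~$\rho_j^C = \sum_c q_j(c)\density{v^j_c}^C$ over the support of~$\rho_j^C$, so that every~$q_j(c) > 0$. By hypothesis each~$q_j(c)$ is an integer multiple of~$\gamma/|C| = 1/a$, hence~$b_j(c) \coloneqq q_j(c)|C|/\gamma$ is a positive integer with~$\sum_c b_j(c) = a$. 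The first point to check is that Corollary~\ref{cor-flatten} remains valid when its parameter~$a$ is replaced by the (possibly larger) integer~$|C|/\gamma$: its proof only invokes Lemma~\ref{lemma-unif-embz} with~$b = b_j(c)$, whose hypotheses~$a \ge b_j(c)$ (i.e.~$q_j(c) \le 1$) and~$n = a^{1/\delta}$ hold here, and the flattened state it produces does not depend on~$a$. Applying this for every~$j$ yields a unitary~$W^{(j)} \in \unitary(\cH^{CED})$ and a projection~$\Pi^{(j)} \in \Pos(\cH^{CED})$ whose image contains the support of~$W^{(j)}\!\left(\rho_j^C \tensor \density{0}^E \tensor \xi_{a:n}^D\right)\!(W^{(j)})^\dagger$, such that Eq.~\eqref{eq-flat-embz-1} and Eq.~\eqref{eq-flat-embz-2} hold with flattened state
\[
\rho_j^{CE} \quad \coloneqq \quad \frac{\gamma}{|C|} \sum_c \density{v^j_c}^C \tensor \sum_{e=0}^{b_j(c)-1} \density{e}^E \enspace.
\]
Each~$\rho_j^{CE}$ is a flat state of rank~$\sum_c b_j(c) = a$ on~$\cH^{CE}$ (whose dimension~$|C|\,a$ is at least~$a$), so all the~$\rho_j^{CE}$ share the same spectrum; therefore for each~$j$ there is a unitary~$U^{(j)}_{\mathrm{flat}} \in \unitary(\cH^{CE})$ with~$U^{(j)}_{\mathrm{flat}}\, \rho_j^{CE}\, (U^{(j)}_{\mathrm{flat}})^\dagger = \nu^{CE}$.

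Next, assemble the required operators as~$J$-controlled objects. Set~$U^{(j)} \coloneqq \left( U^{(j)}_{\mathrm{flat}} \tensor \id^D \right) W^{(j)}$ and define
\begin{align*}
U &\quad \coloneqq \quad \sum_j \density{j}^J \tensor U^{(j)} \enspace, \\
\Pitilde &\quad \coloneqq \quad \sum_j \density{j}^J \tensor \left( U^{(j)}_{\mathrm{flat}} \tensor \id^D \right) \Pi^{(j)} \left( U^{(j)}_{\mathrm{flat}} \tensor \id^D \right)^\dagger \enspace.
\end{align*}
Then~$U$ is unitary and read-only on~$J$, and~$\Pitilde$ is a projection (a direct sum of projections over the mutually orthogonal blocks indexed by~$j$). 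Since~$U$, $\Pitilde$, and~$\rho^{JC} \tensor \density{0}^E \tensor \xi_{a:n}^D = \sum_j p(j)\, \density{j}^J \tensor \rho_j^C \tensor \density{0}^E \tensor \xi_{a:n}^D$ are all block-diagonal in the canonical basis of~$J$, Eq.~\eqref{eq:cor-decoupl-1} follows from the per-block bound~$U^{(j)}\!\left(\rho_j^C \tensor \density{0}^E \tensor \xi_{a:n}^D\right)\!(U^{(j)})^\dagger \preceq (1+15\delta)\, \nu^{CE} \tensor \xi_{1:n}^D$, which is obtained from Eq.~\eqref{eq-flat-embz-1} by conjugating with~$U^{(j)}_{\mathrm{flat}} \tensor \id^D$ (this maps~$\rho_j^{CE} \tensor \xi_{1:n}^D$ to~$\nu^{CE} \tensor \xi_{1:n}^D$), followed by summing over~$j$ with weights~$p(j)$. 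Equation~\eqref{eq:cor-decoupl-2} is handled the same way: conjugate~$\nu^{CE} \tensor \xi_{1:n}^D$ back to~$\rho_j^{CE} \tensor \xi_{1:n}^D$, apply Eq.~\eqref{eq-flat-embz-2}, re-conjugate to obtain the per-block bound~$2\, U^{(j)}\!\left(\rho_j^C \tensor \density{0}^E \tensor \xi_{a:n}^D\right)\!(U^{(j)})^\dagger$, and sum.

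Finally, for Eq.~\eqref{eq:cor-decoupl-3}, use that the image of~$\Pi^{(j)}$ contains the support of~$W^{(j)}\!\left(\rho_j^C \tensor \density{0}^E \tensor \xi_{a:n}^D\right)\!(W^{(j)})^\dagger$, so that by cyclicity and unitary invariance of the trace,
\begin{align*}
\trace\!\left[ \Pitilde\, U\!\left(\rho^{JC} \tensor \density{0}^E \tensor \xi_{a:n}^D\right)\! U^\dagger \right]
&\quad = \quad \sum_j p(j)\, \trace\!\left[ \Pi^{(j)} W^{(j)}\!\left(\rho_j^C \tensor \density{0}^E \tensor \xi_{a:n}^D\right)\!(W^{(j)})^\dagger \right] \\
&\quad = \quad \sum_j p(j) \quad = \quad 1 \enspace,
\end{align*}
using that~$\rho_j^C$, $\density{0}^E$ and~$\xi_{a:n}^D$ are normalized. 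The step I expect to require the most care is the spectrum-matching observation in the first paragraph: one must confirm that flattening every branch with the \emph{common} parameter~$a = |C|/\gamma$ genuinely yields states of one and the same rank~$a$ and flat spectrum, since this is exactly what licenses the~$j$-controlled rotation~$U^{(j)}_{\mathrm{flat}}$ onto the single fixed state~$\nu^{CE}$ and thereby decouples~$CE$ from~$J$.
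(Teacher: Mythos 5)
Your proof is correct and follows essentially the same approach as the paper's: apply Corollary~\ref{cor-flatten} branch-by-branch, observe that all flattened states share the common flat spectrum of rank~$a = |C|/\gamma$, and assemble the $J$-controlled unitary and projection (your~$U^{(j)}_{\mathrm{flat}}$ plays the role of the paper's~$V^{(j)}$, with~$U = U_2 U_1$ and~$\Pitilde = U_2 \Pi' U_2^\dagger$). You are in fact slightly more careful than the paper in explicitly verifying that Corollary~\ref{cor-flatten} still applies when its state-dependent parameter~$\tfrac{|C|}{\gamma}\max_c q(c)$ is replaced by the uniform~$a = |C|/\gamma$, a point the paper's proof passes over with a one-line assertion.
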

\begin{proof} 
Notice that the integers~$a$ and~$n$ and registers~$D$ and~$E$ satisfy the properties required in Corollary~\ref{cor-flatten}. For each~$j$, let~$W^{(j)}$ be the unitary operator given by Corollary~\ref{cor-flatten} for flattening~$\rho_j^{C} \eqdef \sum_{c} q_j(c) \; \density{v_{j,c}}$. Hence, we can flatten all~$\rho_{j}^{C}$ simultaneously using the unitary operator~$U_1 \eqdef \sum_{j} \density{j} \tensor W^{(j)}$, and we get
\begin{equation*}
    U_{1} \left( \rho^{JC} \tensor \density{0}^{E} \tensor \xi_{a:n}^{D} \right) U_1^{\dagger} \quad \preceq\quad (1+15 \delta) \sum_{j} p(j) \; \density{j}^{J} \tensor \rho_{j}^{CE} \tensor \xi_{1:n}^{D}  \enspace,
\end{equation*}
where~$\rho_j^{CE} \eqdef  \frac{\gamma}{|C|} \sum_{c} \density{v_{j,c}}^C \tensor \sum_{e=0}^{q_j(c) |C|/\gamma} \density{e}^E$ is an extension 
of~$\rho^{C}$ with flat (i.e., uniform) spectrum. For each~$j$, the support of~$\rho_j^{C}$ has dimension~$\sum_{c} q_j(c) \frac{|C|}{\gamma}$, which equals~$a$ independent of~$j$. Hence, there exists a unitary operator~$V^{(j)}$ mapping~$\rho_{j}^{CE}$ to~$\nu^{CE}$.  Let~$U_2 \in \unitary(\cH^{JCE})$ be the unitary operator~$U_2 \coloneqq \sum_{j} \density{j} \tensor V^{(j)}$. Then, the unitary operator~$U \coloneqq U_2 U_1$ satisfies Eq.~\eqref{eq:cor-decoupl-1}.

Now, for each~$j$, let~$\Pi^{(j)} \in \Pos(\cH^{CED})$ be the projection operator given by Corollary~\ref{cor-flatten}. Define~$\Pi' \coloneqq \sum_j \density{j} \tensor \Pi^{(j)} $ and~$
\Pitilde \coloneqq  U_2 \Pi' U_2^\dagger$.
We have
\begin{IEEEeqnarray*}{rCl}
    \Pitilde \left( \rho^{J} \tensor \nu^{CE} \tensor 
      \xi_{1:n}^{D} \right) \Pitilde \quad 
      & = & \quad  U_2 \Pi' U_2^\dagger  \left( \rho^{J} \tensor \nu^{CE} \tensor \xi_{1:n}^{D} \right)  U_2 \Pi' U_2^\dagger \\
      & = & \quad U_2 \Pi' \br{\sum_{j} p(j) \; \density{j}^{J} \tensor \rho_{j}^{CE} \tensor \xi_{1:n}^{D} }  \Pi' U_2^\dagger \\
      & = &  \quad U_2 \br{\sum_{j} p(j) \; \density{j}^{J} \tensor \Pi^{(j)} \br{\rho_{j}^{CE} \tensor \xi_{1:n}^{D} } \Pi^{(j)}} U_2^\dagger \\
      & \preceq& \quad 2 \; U_2 \br{\sum_{j} p(j) \; \density{j}^{J} \tensor W^{(j)} \br{\rho_{j}^{C} \tensor \density{0}^{E} \tensor \xi_{a:n}^{D}} {W^{(j)}}^{\dagger}}  U_2^\dagger \\
      & = & \quad 2 \; U_2 U_1 \br{\sum_{j} p(j) \; \density{j}^{J} \tensor \rho_{j}^{C} \tensor \density{0}^{E} \tensor \xi_{a:n}^{D}}  U_1^\dagger U_2^\dagger \\
      & = & \quad 2 \; U \left( \rho^{JC} \tensor \density{0}^{E} \tensor \xi_{a:n}^{D} \right) U^\dagger \enspace,
\end{IEEEeqnarray*}
where the inequality follows from Corollary~\ref{cor-flatten}, Eq.~\eqref{eq-flat-embz-2}.

Moreover, by the construction in Lemma~\ref{lemma-unif-embz} and Corollary~\ref{cor-flatten}, for each~$j$, the operator~$\Pi^{(j)}$ is the projection operator onto the support of~$W^{(j)}\br{\rho_{j}^{C} \tensor \density{0}^{E} \tensor \xi_{a:n}^{D}} {W^{(j)}}^\dagger$. Hence, we have
\begin{IEEEeqnarray*}{rCl}
 \trace \left[ \Pitilde U \left( \rho^{JC} \tensor \density{0}^{E}  \tensor \xi_{a:n}^{D} 
	  \right) U^\dagger \right] \quad 
 & = & \quad \trace \left[ \Pi' U_1 \left( \rho^{JC} \tensor \density{0}^{E}  \tensor \xi_{a:n}^{D} 
	  \right) U_1^\dagger \right] \\
 & = & \quad \sum_{j} p(j) \trace \left[ \Pi^{(j)} W^{(j)}\br{\rho_{j}^{C} \tensor \density{0}^{E} \tensor \xi_{a:n}^{D}} {W^{(j)}}^\dagger \right] \\
 & = & \quad 1 \enspace.
\end{IEEEeqnarray*}
This completes the proof.
\end{proof}

\begin{remark}
In the above corollary, we assume that the eigenvalues of~$\rho^{C}_j$ are rational. We can approximate an arbitrary state with one that has only rational eigenvalues with arbitrary accuracy, since the set of rational numbers is dense in the set of reals. Consequently, the error with respect to the max-relative entropy can also be made arbitrarily close to zero.
\end{remark}

\section{The new protocol}
\label{sec-new-protocol}

In this section, we present and analyse the new protocol for one-shot state redistribution. This proves the main result in this article, as stated more precisely in the following theorem.

\begin{theorem}
 \label{thm-main}
 	Let~$\ket{\psi}^{RABC}$ be a pure quantum state shared between 
 	a referee~$(R)$, Alice~$(AC)$ and Bob~$(B)$. 
 	For every~$\epsilon_1,\epsilon_2 \in (0, 1)$ satisfying~$\epsilon_1+9 \epsilon_2 \le 1$, there exists 
 	an entanglement-assisted one-way protocol operated by Alice and Bob which 
 	starts in the state~$\ket{\psi}^{RABC}$, and outputs a state~$\phi^{RABC} 
 	\in \sB^{\epsilon_1 + 9 \epsilon_2}(\psi^{RABC})$ where registers~$A$,~$BC$, and~$R$ are held by Alice, Bob and Ref, respectively.
 	The communication cost of this protocol is bounded from above by
 	\begin{equation}
 	\label{eq:main-cost}
        \frac{1}{2}  \inf_{\psi' \in \sB^{\epsilon_1}(\psi^{RBC})} \; 
        \inf_{\sigma \in \ME^{\epsilon_2^4/4,\psi'}} \left[\Dmax\mleft({\psi'}^{RBC} 
        \Big\| ~ \sigma^{RBC} \mright) - \Dh^{\epsilon_2^2} 
        \mleft({\psi'}^{BC} \Big\| ~ \sigma^{BC} \mright) \right]
        + \log \frac{1}{\epsilon_2^{2}} + 1\enspace.
 	 \end{equation}
 \end{theorem}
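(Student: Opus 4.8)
The plan is to design a protocol that realizes the ``folklore Markov protocol'' of Figure~\ref{fig:figure1} approximately, for a general $\psi'^{RBC}$ in the smoothing ball and a general Markov extension $\sigma \in \ME^{\epsilon_2^4/4,\psi'}$, by combining two ingredients: (i)~the decoupling-via-embezzlement machinery of Section~\ref{sec:decoupl-CQ}, which lets Alice and Bob locally ``erase'' the $C$-register of the Markov state $\sigma$ while leaving $\psi'^{RB}$ essentially untouched, and (ii)~the AJW protocol of Theorem~\ref{thm-QSR-AJW}, invoked as a black box on the post-decoupling state. I would first fix a near-optimal pair $(\psi',\sigma)$ achieving the infimum in~\eqref{eq:main-cost} up to a vanishing slack, and pass (by Uhlmann's theorem) to a purification of $\psi'^{RBC}$ that is $\epsilon_1$-close to $\ket{\psi}^{RABC}$, so that at the cost of $\epsilon_1$ in the final error we may assume $\psi = \psi'$.

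\textbf{Step 1: applying the Markov structure of $\sigma$.} Using the Hayden--Jozsa--Petz--Winter decomposition~\eqref{eq:markov-reg-decomp}--\eqref{eq:markov-state-decomp} of $\sigma$, I would have Bob apply the isometry that exposes the classical register $J$ recording the block index $j$, transforming $\sigma^{RBC}$ into $\sum_j p(j)\density{j}^J \tensor \sigma_j^{RB_j^R} \tensor \sigma_j^{B_j^C C}$ and (by the definition of $\ME^{\epsilon_2^4/4,\psi'}$) transforming $\psi'^{RBC}$ into a state whose $j$-th block agrees with $\sigma_j^{B_j^C C}$ on registers $B_j^C C$ up to purified distance $\epsilon_2^2/2$. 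Crucially this isometry is local to Bob and acts as a coherent measurement of $B$ on the $\psi'$ side, so (after also coherently copying $J$ into an ancilla maximally entangled with a reference, as in the GHZ discussion) it preserves $\psi'^{RB}$.

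\textbf{Step 2: decoupling $C$ from $RB$ in $\sigma$.} This is where Corollary~\ref{cor-decoupling-C-Q-state} enters. Viewing $\sum_j p(j)\density{j}^J \tensor \sigma_j^{B_j^C C}$ as a classical-quantum state with classical register $J$, Alice and Bob jointly apply the read-only (on $J$) unitary $U$ of that corollary, using shared embezzling registers $D,D'$ and fresh ancillas $E$; the operator inequality~\eqref{eq:cor-decoupl-1} guarantees that on the $\sigma$-side the registers $CE$ become (approximately, in max-relative entropy, hence in purified distance via Proposition~\ref{prop:Dmax-Pdist}) decoupled from $J$ and hence from $R B^R J$. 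On the $\psi'$-side the same unitary is applied; because each block's $B_j^C C$ marginal is $\epsilon_2^2/2$-close to $\sigma_j^{B_j^C C}$, the effect on $\psi'^{RB}$ is controlled, and the ``regeneration of $B^C$'' step (a standard embezzle-in, exactly as in the right panel of Figure~\ref{fig:figure1}) restores $\psi'^{RB}$ up to small error. After this step, on the $\sigma$-side the register $C$ (carried along inside the flattened $CE$ block) has been moved into a fixed flat state $\nu^{CE}$ independent of everything Referee and Bob hold.

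\textbf{Step 3: invoking AJW on the reduced problem and accounting.} The net effect of Steps 1--2 is to reduce the original redistribution of $\ket{\psi}^{RABC}$ to a redistribution in which the state $\sigma$ plays the role of the product reference: concretely, we have engineered a situation where $\Dmax$ of the $\psi'$-side relative to $\psi'^{RB}\tensor(\text{flat }C)$ is bounded by $\Dmax(\psi'^{RBC}\|\sigma^{RBC})$ up to $O(\log(1/\epsilon_2))$ (the $\log$ losses coming from the $(1+15\delta)$ and factor-$2$ slacks in Corollary~\ref{cor-decoupling-C-Q-state} and from $\log\frac{1}{\epsilon_2^2}$), and similarly the hypothesis-testing quantity is controlled by $\Dh^{\epsilon_2^2}(\psi'^{BC}\|\sigma^{BC})$. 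Feeding this reduced state into Theorem~\ref{thm-QSR-AJW} with smoothing parameter $\epsilon_2$ yields communication $\tfrac12[\Dmax(\psi'^{RBC}\|\sigma^{RBC}) - \Dh^{\epsilon_2^2}(\psi'^{BC}\|\sigma^{BC})] + \log\frac{1}{\epsilon_2^2} + O(1)$ and output error $9\epsilon_2$; combined with the $\epsilon_1$ from the initial smoothing this gives total error $\epsilon_1+9\epsilon_2$, matching the claim. Finally one takes the infimum over $(\psi',\sigma)$.

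\textbf{The main obstacle} I anticipate is Step 2: making precise that the \emph{same} local unitary $U$ which cleanly decouples $C$ on the $\sigma$-side only mildly disturbs $\psi'^{RB}$, and then showing the embezzle-in step repairs it. The delicate point is that $U$ is built from the spectral data of the $\sigma_j^{B_j^C C}$, not of $\psi'$, so one must track how the $\epsilon_2^2/2$-closeness of each block propagates through the flattening-and-rotation unitary and through the operator inequalities~\eqref{eq:cor-decoupl-1}--\eqref{eq:cor-decoupl-3}; the gentle-measurement lemma (Lemma~\ref{prop-gentl-meas}) applied with the projector $\Pitilde$ of the corollary, together with monotonicity of purified distance under the channels that trace out $B^C$, should be the right tool, but assembling the error budget so that every contribution is either $O(\epsilon_2)$ in purified distance or $O(\log\frac1{\epsilon_2})$ in communication is the part that requires care. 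Handling the rationality/discretization hypotheses of Corollary~\ref{cor-decoupling-C-Q-state} (cf.\ the Remark following it) and the choice of the embezzling dimension $n$ large enough to absorb all Schmidt ranks is routine but must be done consistently across both Steps 1 and 2.
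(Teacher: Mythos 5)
Your high-level plan (smooth to $\psi'$, expose the Markov block register $J$, coherently measure $J$, decouple $B^CC$ via Corollary~\ref{cor-decoupling-C-Q-state}, then invoke the AJW protocol) is the right one and matches the paper's strategy. However, the proposal glosses over the central structural obstruction that the paper's proof actually has to solve, and it misdiagnoses the error-propagation issue.

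The critical gap is in Step~2: you write that ``Alice and Bob jointly apply the read-only (on $J$) unitary $U$.'' This is not a local operation. After Bob applies the Hayden--Jozsa--Petz--Winter isometry $U_{\ri}$, the register $J$ (and $B^R$, $B^C$) sit with Bob while $C$ sits with Alice; the decoupling unitary $U_2$ of Corollary~\ref{cor-decoupling-C-Q-state} must act coherently on $J B^C C$ together with the ancillas, which no amount of ``jointly applying'' makes free. The paper's resolution has two pieces you omit. First, Bob hands $B^C$ to Alice via a preliminary embezzlement step (the folklore Markov-state protocol, exactly because conditioned on $J$ the state on $RB^R$ and $B^C$ factorizes). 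Second, and crucially, the protocol never actually applies $U_1 U_2$ across the cut before communication. Instead the paper \emph{defines} the target state $\ket{\kappa_2}$ as the mathematical result of applying the coherent Heisenberg--Weyl measurement $U_1$ and the decoupling unitary $U_2$ to a purification of $\psi_1$, observes the exact identity $\kappa_2^{RB^RJ} = \psi_1^{RB^RJ}$ (Eq.~\eqref{eq:kappa2=psi1-RBRJ}), and then invokes Uhlmann's theorem to obtain an isometry $V$ that acts \emph{only on Alice's registers} and takes $\ket{\psi_1}$ to $\ket{\kappa_2}$. The physical unitaries $U_1,U_2$ are applied only by Bob, in inverse, at the end of the protocol, once the AJW step has gathered all of $B^RJB^CCE_2D_2F$ on his side. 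Without this Uhlmann device, your Step~2 simply cannot be carried out.

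Relatedly, your stated ``main obstacle''---that $U$ might disturb $\psi'^{RB}$, to be repaired by an embezzle-in---is a non-issue in the paper's construction, and the proposed repair does not correspond to any step in the proof. Because the Heisenberg--Weyl copying $U_1$ makes $J$ \emph{exactly} classical in $\kappa_1$, the read-only unitary $U_2$ leaves $\kappa_1^{RB^RJ}$ untouched, so $\kappa_2^{RB^RJ}=\psi_1^{RB^RJ}$ holds with equality, not up to $O(\epsilon_2)$ error. The $\epsilon_2^4/4$-closeness encoded in the definition of $\ME^{\epsilon_2^4/4,\psi'}$ is instead used in an entirely different place: in Claim~\ref{claim-1}, to show that $\Pitilde$ accepts $\kappa_2$ with probability at least $1-\epsilon_2^4/4$, which together with the gentle measurement lemma and the operator inequality~\eqref{eq:decoupled-substate-tau2} yields the lower bound on $\Dh^{\epsilon_2^2}$. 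So the error budget you worry about is not where you think it is, and the embezzle-in you invoke to ``restore $\psi'^{RB}$'' is actually the step (ii) of the paper that transfers $B^C$ from Bob to Alice before any decoupling takes place.
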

We get Theorem~\ref{thm:oneshotcondmutrelent} by choosing~$\epsilon_2^{2} = \epsilon_1 = \epsilon$.

We describe a protocol for redistributing~$\ket{\psi}^{RABC}$ with error~$9\epsilon_2$ and cost at most
 \begin{equation}
 	\label{eq:nonsmooth-cost}
     \frac{1}{2} \min_{\sigma^{RBC}\in \ME^{\epsilon_2^4/4,\psi}} \mleft[ \Dmax \br{\psi^{RBC} \big\| \sigma^{RBC}} 
    -  \Dh^{\epsilon_2^{2}}\br{\psi^{BC} \big\|\sigma^{BC}} \mright]+  \log\frac{1}{\epsilon_2^2} + 1 \enspace.
\end{equation}
Then, Theorem~\ref{thm-main} follows since for every~$\ket{\psi'}\in \sB^{\epsilon_1}(\ket{\psi}^{RABC})$, Alice and Bob can assume that the global state is~$\ket{\psi'}^{RABC}$, and run the protocol for~$\ket{\psi'}$. This protocol redistributes the state~$\ket{\psi}$ with  additional error at most~$\epsilon_1$. 

Let~$\sigma^{RBC}$ be a quantum Markov extension of~$\psi^{RB}$. If~$\sigma^{RBC} = {\psi}^{RB} \tensor {\psi}^{C}$, Alice and Bob can redistribute~$\psi^{RABC}$ with error~$9 \epsilon_2 > 0$ and communication cost bounded by Eq.~\eqref{eq:nonsmooth-cost} using the AJW protocol.
However, in general,~$\sigma^{RBC}$ is not necessarily a product state. In that case, we design a reduction procedure which allows us to use the AJW protocol as a subroutine. This procedure decouples~$C$ from~$RB$ when applied to~$\sigma^{RBC}$, while preserving~$\psi^{RB}$ when applied to~$\psi^{RBC}$. This procedure is  similar to the \emph{conditional erasure\/} task in Refs.~\cite{BBMW18-cond-decoup, BBMW18-Deconst-cond-erase} except that, here, the decoupling and negligible disturbance properties are desired for two possibly different quantum states.
    
In the rest of this section, we first explain a simplified version of the reduction procedure and the protocol for the special case that register~$A$ is trivial and~$\ket{\psi}^{RBC}$ is the GHZ state. This illustrates the key components underlying the reduction. Then, in Section~\ref{sec:protocol-general},  we provide the complete version of the reduction procedure and the protocol for redistributing an arbitrary quantum state~$\ket{\psi}^{RABC}$.

\subsection{The GHZ state example} 
\label{sec-GHZ}

To elaborate on the reduction procedure, we start with the example where~$\psi^{RBC}$ is the GHZ state
\begin{equation*}
  \frac{1}{\sqrt{d}} \sum_{j=1}^d \ket{j}^R \ket{j}^B \ket{j}^C \enspace,  
\end{equation*}
and the Markov extension~$\sigma^{RBC}$ of~$\psi^{RB}$ is
\begin{equation*}
\label{eq:QME-GHZ}
   \frac{1}{d} \sum_{j=1}^{d} \density{j}^R \tensor \density{j}^B \tensor \density{j}^C \enspace. 
\end{equation*}
The reduction broadly follows the description we gave in Section~\ref{sec-intro-techniques}, and is a two-step process. We expand on these steps below. 

\textbf{(1) Coherent measurement of register~$B$.}
By ``coherent measurement'', we mean the application of the isometry given by a Steinspring representation of the measurement. For the GHZ state, this corresponds ``copying'' the content of register~$B$ into a fresh register, in superposition. The state of the fresh register is chosen so as to facilitate the redistribution protocol.
Let~$T$ be a register with~$|T|=d$, and~$\ket{\Psi}^{TT'} \eqdef \tfrac{1}{ \sqrt{d}} \sum_t \ket{tt}$ be the maximally entangled state over registers~$T$ and~$T'$.
\suppress{
and~$\set{\ket{t}}_{t=0}^{d-1}$ be a basis for~$\cH^{T}$. For~$a,b \in \set{0, \ldots, d-1}$, let~$P_{a,b} \in \unitary \br{\cH^{T}}$ be the \emph{Heisenberg-Weyl\/} operator defined as~$P_{a,b} \coloneqq \sum_{t} \exp\br{{\tfrac{2\pi i tb}{d}}} \ketbra{t+a}{t}^T$
}
Define the unitary operator~$U_1 \in \unitary ( \cH^{BT} )$ as~$U_1 \coloneqq \sum_{j} \density{j}^{B} \tensor P_{j}^{T}$, where~$P_j$ is the Heisenberg-Weyl operator as defined in Section~\ref{sec-background}. Let~$\ket{\kappa_1}^{RBCTT'}$ and~$\tau_1^{RBCT}$ be the states obtained by applying~$U_1$ to~$\ket{\psi}^{RABC} \tensor \ket{\Psi}^{TT'}$ and~$\sigma^{RBC} \tensor \Psi^{T}$, respectively. We have
\begin{align*}
\ket{\kappa_1}^{RBCTT'} \quad & = \quad \frac{1}{d} \sum_{j=1}^d \ket{j}^R \ket{j}^B \ket{j}^C \tensor \sum_{t = 1}^d \ket{t \oplus j}^T \ket{t}^{T'} \enspace.
\end{align*}
Since the set of Heisenberg-Weyl operators~$\set{P_a}$ is closed under multiplication, and each~$P_{a}$ is traceless unless~$a=d$, the states~$(P_a \tensor \id) \ket{\Psi}$ are mutually orthogonal. So the unitary operator~$U_1$ coherently measures register~$B$ in~$\psi^{RBC}$ while it acts trivially on~$\sigma$. Moreover, the reduced state on~$T$ remains maximally mixed. So
\begin{align*}
\kappa_1^{RBC} \quad & = \quad \frac{1}{d} \quad \sum_{j} \density{j}^R \tensor \density{j}^B \tensor \density{j}^C \enspace, \qquad \text{and} \\
\tau_1^{RBCT} \quad & = \quad \sigma^{RBC} \tensor \frac{\id^{T}}{d} \enspace.
\end{align*}
 	
\textbf{(2) Decoupling~$C$ from~$RB$ in~$\sigma$.}
Let~$U_2 \in \unitary(\cH^{BC})$ be a unitary operator that is read-only on~$B$ 
and maps~$\ket{j}^C$ to~$\ket{0}^{C}$ if system~$B$ is in the 
state~$\ket{j}$. Let~$\ket{\kappa_2}^{RBCTT'}$ and~$\tau_2^{RBCT}$ 
be the states after applying~$U_2$ to~$\ket{\kappa_1}^{RBCTT'}$ 
and~$\tau_1^{RBCT}$, respectively. We have
\begin{align*}
\ket{\kappa_2}^{RBCTT'} \quad & = \quad \frac{1}{d} \quad \sum_{j} \ket{j}^R \tensor \ket{j}^B \tensor \ket{0}^C \tensor \sum_{t = 1}^d \ket{t \oplus j}^T \ket{t}^{T'} \enspace, \qquad \text{and} \\
    \tau_2^{RBCT} \quad & = \quad \psi^{RB} \tensor \density{0}^{C} \tensor \frac{\id^{T}}{d} \enspace.
\end{align*}
In particular, since register~$B$ is classical in~$\kappa_1^{RBC}$ and~$U_2$ is read-only on~$B$, we get~$\kappa_2^{RB} = \psi^{RB}$.

The reduction procedure uses the above two steps to (effectively) add the maximally mixed state~$\Psi^{T}$ and apply the unitary operator~$U_2 U_1$. Note that running this procedure on both~$\psi$ and~$\sigma$ does not change their max-relative entropy and the hypothesis testing entropy. We have
\begin{multline}
\label{eq:GHZ-dist}
    \Dmax \br{\psi^{RBC}\|\sigma^{RBC}} -  \Dh^{\epsilon_2^{2}}\br{\psi^{BC}\|\sigma^{BC}} \quad
    = \quad \Dmax \br{\kappa_2^{RBCT}\|\tau_2^{RBCT}} -  \Dh^{\epsilon_2^{2}}\br{\kappa_2^{BCT}\|\tau_2^{BCT}} 
\end{multline}
where~$\tau_2^{RBCT} = \kappa_2^{RB} \tensor \density{0}^{C} \tensor \tfrac{\id^{T}}{d}$. Hence, if Alice and Bob locally map~$\ket{\psi}$ to~$\ket{\kappa_2}$, then they can run the AJW protocol to transfer registers~$CT$ to Bob and finally retrieve~$\ket{\psi}$ by applying~$U_1^{-1}U_2^{-1}$. A hitch here is that the reduction procedure cannot be implemented directly (i.e., as described above) for the local transformation of~$\ket{\psi}$ to~$\ket{\kappa_2}$. This is because register~$C$ is initially with Alice and 
register~$B$ is with Bob. However, since~$\psi^{RB} = \kappa_2^{RB}$, there is an isometry~$V: \cH^{AC} \rightarrow \cH^{ACTT'}$ which maps~$\ket{\psi}^{RABC}$ to~$\ket{\kappa_2}^{RABCTT'}$, as guaranteed by the Uhlmann theorem. \emph{Alice\/} can thus implement the local transformation from~$\ket{\psi}$ to~$\ket{\kappa_2}$.

In summary, the simplified version of the protocol for the GHZ state works as follows:
\begin{enumerate}
  \item Alice applies the isometry~$V$ on her registers~$AC$, and transforms the global state to the state~$\ket{\kappa_2}^{RABCTT'}$ such that registers~$(ACTT')$,~$(B)$, and~$(R)$ are with Alice, Bob and Ref, respectively.
  
  \item Choosing~$\sigma^{CT} \eqdef \density{0}^{C} \tensor \tfrac{\id^{T}}{d}$, Alice and Bob run the AJW protocol on~$\ket{\kappa_2}$ to transfer registers~$CT$ to Bob with error at most~$9 \epsilon_2$. Let~$\widehat{\kappa}_2^{RABCTT'}$ be the joint state of the registers~$RABCTT'$ at the end of this step.
  
  \item Bob applies~$U_1^{-1}U_2^{-1}$ on the registers~$BCT$, which are now in his possession.
  
  \item The output of the protocol is the final state in registers~$RABC$.
\end{enumerate}
By Theorem~\ref{thm-QSR-AJW} and Eq.~\eqref{eq:GHZ-dist}, the cost of the above protocol is at most
\[
    \Dmax \br{\psi^{RBC}\|\sigma^{RBC}} -  \Dh^{\epsilon_2^{2}}\br{\psi^{BC}\|\sigma^{BC}} + \log \frac{1}{\epsilon_2^{2}} \enspace,
\]
and~$\rP(\kappa_2^{RABCTT'}, \widehat{\kappa}_2^{RABCTT'}) \le 9 \epsilon_2 \;$.
Let~$\phi^{RABC}$ be the final state of the registers~$RABC$. We have
\begin{IEEEeqnarray*}{rCl}
    \rP \br{\psi^{RABC}, \phi^{RABC}} \quad 
    & \le & \quad \rP \br{\psi^{RABC} \tensor \Psi^{TT'}, \phi^{RABCTT'}} \\
    & = &  \quad \rP \br{\kappa_2^{RABCTT'}, \widehat{\kappa}_2^{RABCTT'}} \\
    & \le & \quad 9 \epsilon_2 \enspace,
\end{IEEEeqnarray*}
where the first inequality is obtained by considering extensions of states in~$RABC$ to those in~$RABCTT'$ and the monotonicity of purified distance under quantum operations, and the second step follows by the invariance of purified distance under unitary operations (in this case~$U_2 U_1$).

\subsection{The protocol for arbitrary states} 
\label{sec:protocol-general}

Now consider an arbitrary state~$\ket{\psi}^{RABC}$ and a quantum Markov extension~$\sigma^{RBC} \in \ME^{\epsilon_2^4/4,\psi}$. As explained in Section~\ref{sec:QMarkov}, there exists a decomposition of register~$B$ as~$\cH^{B}  =  \bigoplus_{j} \cH^{B_{j}^{R}} \tensor \cH^{B_{j}^{C}}$ such that 
\begin{equation}
        {\psi}^{RB} \quad = \quad \sigma^{RB} \quad = \quad \bigoplus_{j} p(j) \, {\psi}_{j}^{RB_j^{R}} 
 	    \tensor {\psi}_{j}^{B_{j}^{C} } \enspace,
\end{equation}
and
\begin{equation}
    \label{eq:sigma-structure}
 	    \sigma^{RBC} \quad = \quad \bigoplus_{j} p(j) \, \sigma_{j}^{RB_j^{R}} 
 	    \tensor \sigma_{j}^{B_{j}^{C} C} \enspace,
\end{equation}
where~$\sigma_{j}^{RB_j^{R}} = {\psi}_{j}^{RB_j^{R}}$,~$\sigma_{j}^{B_{j}^{C} C} \in \sB^{\epsilon_2^4/4} \left(\trace_{B_{j}^{R}}\br{(\Pi_j \tensor \id){\psi}^{BC} (\Pi_j \tensor \id)} \right)$ and~$\Pi_j$ is the projection operator over the~$j$-th subspace in the direct sum decomposition of~$\cH^B$. This special structure of~$\sigma^{RBC}$ makes it possible to design the reduction procedure. As in the case of the GHZ state, the reduction procedure consists of the two main steps of coherent measurement and decoupling. These are preceded by two pre-processing steps. The pre-processing steps unitarily transform~$\psi$ and~$\sigma$ to the states~$\kappa$ and~$\tau$ which are easier to handle. In step~(i), we apply a local isometry transforming~$\sigma^{RBC}$ to a classical-quantum state.

\textbf{(i) Viewing~$\sigma^{RBC}$ as a classical-quantum state.} 
Let~$B^{R}$ and~$B^{C}$ be two quantum registers with~$\abs{B^{R}} \eqdef \max_j \abs{B_{j}^{R}}$
and~$\abs{B^{C}} \eqdef \max_j \abs{B_{j}^{C}}$. As a consequence of Eq.~\eqref{eq:sigma-structure},
there exists an isometry~$U_{\ri}: \cH^{B} \rightarrow 
 	\cH^{B^{R}JB^{C}}$ which takes~$\sigma^{RBC}$ to the 
 	state
 	\begin{equation}
 	\label{eq:sigma1-def}
 	    \tsigma^{RB^{R}JB^{C}C} \quad \coloneqq \quad \sum_j p(j) \,  \sigma_{j}^{RB^{R}} 
 	    \tensor \density{j}^J \tensor \sigma_{j}^{B^{C}C} \enspace.
 	\end{equation}
Let~$\ket{\psi_1}^{RAB^{R}JB^{C}C}$ be the state obtained by applying the same operation on~$\ket{\psi}^{RABC}$, i.e.,
\begin{align}
\label{eq-psi1}
 	\ket{\psi_1}^{RAB^{R}JB^{C}C} \quad & \coloneqq \quad U_{\ri} \ket{\psi}^{RABC}  \quad =  \quad \sum_{j,j'} \ketbra{j}{j'}^{J} \tensor \psi_{j,j'}^{RAB^{R}B^{C}C} \enspace,
\end{align}
for some sub-normalized, rank~$1$ states~$\psi_{j,j'}$.
It is sufficient to design a protocol for redistributing register~$C$ in~$\ket{\psi_1}^{RAB^RJB^C C}$ when initially registers~$(AC)$ are held by Alice,~$(B^R JB^C)$ are held by Bob and~$R$ is held by Ref.
Notice that~$\psi_1^{RB^R J B^C} =\sigma^{RB^R J B^C}$ since~$\psi^{RB} = \sigma^{RB}$. So~$\psi_1^{RB^R J B^C}$ is a quantum Markov state of the form~$RB^R \!-\! J \!-\! B^C$. So, Alice and Bob can use the folklore protocol for redistributing quantum Markov states explained in Fig.~\ref{fig:figure1} and transfer~$B^C$ to Alice. This is done in step~(ii) of pre-processing.
 	
\textbf{(ii) Transferring~$B^{C}$ from Bob to Alice without communication.} 
Note that~$\psi_1^{RB^{R}JB^{C}}$ is purified by systems~$(AC)$ which are with Alice. So by applying a suitable isometry, Alice can prepare the following purification of $\psi_1^{RB^{R}JB^{C}}$:
\[
\ket{\widehat{\psi}_1}^{RB^{R}JJ'B^{C}GH} \quad \coloneqq \quad \sum_j \sqrt{p(j)}  \ket{\sigma_{j}}^{RB^{R}G} 
 	    \tensor \ket{j,j}^{JJ'} \tensor \ket{\sigma_{j}}^{B^{C}H} \enspace,
 	   \]
 where registers~$J'GH$ are held by Alice. Let~$\delta_1\in (0,1)$,~$n_1 \coloneqq \abs{B^{C}H} ^{2/\delta_1^2}$, and~$D_1,D_1'$ be registers with~$|D_1|=|D_1'|=n_1$. Conditioned on  register~$J$, Alice and Bob use the embezzling state~$\ket{\xi}^{D_1D_1'}$ (as defined in Eq.~\eqref{eq:vDH-embezzlement}) and the reverse of the van Dam-Hayden protocol~\cite{vDH03-embezzling} to embezzle out~$\ket{\sigma_{j}}^{B^{C}H}$ in superposition. They thus obtain a state~$\widetilde{\psi}_1$ such that 
 \begin{equation*}
     \rP \mleft( \widetilde{\psi}_1^{R B^R GJJ'D_1 D_1'} , ~ \sum_j \sqrt{p(j)}  \ket{\sigma_{j}}^{RB^{R}G}\tensor \ket{j,j}^{JJ'}\otimes \ket{\xi}^{D_1D_1'} \mright) \quad \le \quad \delta_1 \enspace.
 \end{equation*}
 Finally, conditioned on register~$J$, Alice locally generates~$\ket{\sigma_{j}}^{B^{C}H}$ in superposition with registers~$B^CH$ on her side, and applies an Uhlmann unitary operator to her registers in order to prepare the purification~$\ket{\psi_1}^{RAB^{R}JB^{C}C}$. Let~$ U_{\rii,A}$ and~$U_{\rii,B}$ denote the overall unitary operators applied by Alice and Bob, respectively, in this step. After applying ~$ U_{\rii,A}$ and~$U_{\rii,B}$,  the global state is~$\ket{\psi_2}$ satisfying
 \begin{equation*}
     \rP \mleft( \psi_2^{RAB^{R}JB^{C}CD_1D_1'} , ~ \density{\psi_1}^{RAB^{R}JB^{C}C} \otimes \density{\xi}^{D_1D_1'} \mright) \quad \le \quad \delta_1 \enspace,
 \end{equation*}
where registers~$AB^{C}C$ are with Alice, registers~$B^{R}J$ are with Bob and register~$R$ is with Ref. 
Thus, the problem reduces, up to a purified distance~$\delta_1$, to the case where the global state is~$\ket{\psi_1}$ and the register~$B^C$ is with Alice. Henceforth, we assume that this is indeed the case. We account for the inaccuracy introduced by this assumption in the error analysis of the protocol. This completes the second step and the pre-processing stage of the protocol.

Due to the pre-processing steps, we may suppose that the global state is~$\ket{\psi_1}^{RAB^RJB^CC}$ such that 
registers~$(AB^C C)$,~$(B^RJ)$, and~$R$ are held by Alice, Bob, and Ref, 
respectively. It then remains for Alice to send~$B^C C$
to Bob. To achieve this, we follow a two-step unitary procedure (as in the case of the GHZ state) which
decouples registers~$RB^RJ$ and~$B^C C$ in~$\tsigma^{RB^{R}JB^{C}C}$ 
while keeping the state of registers~$RB^R J$ unchanged. This operation 
transforms~$\tsigma$ to a product state and allows us to use the AJW 
protocol as a subroutine to achieve the redistribution with the desired communication cost and accuracy.

To decouple~$R B^{R} J$ from~$B^C C$ in~$\tsigma$, we would like to use embezzlement and 
the unitary operator given by Corollary~\ref{cor-decoupling-C-Q-state}. This unitary operator
acts on registers~$J B^C C$ and is read-only on register~$J$. However, 
since register~$J$ is not necessarily classical in~$\psi_1^{RB^R J B^C C}$, the operation may disturb the marginal state~$\psi_1^{RB^R J}$. So as in the example of the GHZ state, we
resolve this issue by first coherently measuring register~$J$ using an additional 
maximally entangled state. This operation transforms~$\psi_1^{RB^RJB^C C}$ 
to a classical-quantum state, classical in register~$J$, and keeps~$\tsigma^{RB^RJB^C C}$ intact. The following two steps contain the detailed construction of these unitary procedures.

\textbf{(1) Coherent measurement of register~$J$.} 
 	Let~$F$ be a register with~$|F|=|J|$, and let~$d \eqdef \size{F}$. Let~$P_{j} \in \unitary \br{\cH^{F}}$ be a Heisenberg-Weyl operator as defined in Section~\ref{sec-background}. Let~$U_1 \in \unitary ( \cH^{JF} )$ be a unitary operator defined as~$U_1 \coloneqq \sum_{j} \density{j}^{J} \tensor P_{j}^{F}$. Define
 	\begin{equation*}
 	    \ket{\kappa_1}^{RAB^{R}JB^{C}CFF'} \quad \coloneqq \quad U_1 \left(\ket{\psi_1}^{RAB^{R}JB^{C}C} \tensor \ket{\Psi}^{FF'}\right) \enspace,
 	\end{equation*}
 	and
 	\begin{equation}
 	\label{eq-tau1}
 	    \tau_1^{RB^{R}JB^{C}CF} \quad \coloneqq \quad U_1 \left(\tsigma^{RB^{R}JB^{C}C} \tensor \frac{\id^{F}}{|F|} \right) U_1^{\dagger}\enspace,
 	\end{equation}
 	where~$\ket{\Psi}^{FF'} \eqdef \tfrac{1}{\sqrt{d}} \sum_{f = 1}^d \ket{ff}$ is the maximally entangled state over registers~$F$ and~$F'$. For the same reasons as in the GHZ example, the unitary operator~$U_1$ acts trivially on~$\tsigma$ while it measures register~$J$ in~$\psi_1^{RB^{R}JB^{C}C}$ coherently. In particular,
 	\begin{equation}
 	\label{eq:tau1}
 	    \tau_1^{RB^{R}JB^{C}CF} \quad = \quad \tsigma^{RB^{R}JB^{C}C} \tensor \frac{\id^{F}}{|F|} \enspace,
 	\end{equation}
 	and
 	\begin{equation}
 	    \kappa_1^{RB^{R}JB^{C}C} \quad = \quad \sum_{j} \density{j}^{J} \tensor \psi_{j,j}^{RB^{R}B^{C}C} \enspace.
 	\end{equation}
 	
\textbf{(2) Decoupling registers~$B^{C}C$ from~$RB^{R}J$ in~$\tau_1 \,$.} 
By Eqs.~\eqref{eq:sigma1-def} and~\eqref{eq:tau1}, register~$J$ is classical 
 	in~$\tau_1^{RB^{R}JB^{C}C}$ and conditioned on~$J$, registers~$RB^{R}$ are decoupled 
 	from~$B^{C}C$. Hence, we can decouple registers~$B^{C}C$ from registers~$RB^{R}J$ in~$\tau_1$ 
 	using embezzling states and applying the unitary operator given in Corollary~\ref{cor-decoupling-C-Q-state}. (See also the remark after the proof of the corollary.)
 	
For~$\gamma_2  \in  (0,1)$ chosen as in Corollary~\ref{cor-decoupling-C-Q-state},  
 	let~$a_2 \eqdef |B^{C}C| / \gamma_2$,~$n_2 \eqdef  
 	a_2^{1/ \delta_2^2}$, and~$D_2, D_2'$ and~$E_2$ be quantum registers 
 	with~$|D_2|=|D_2'| \geq n_2$ and~$|E_2|=a_2$. Let
 	\[
 	\nu_2^{B^{C}C E_2} \quad \eqdef \quad \frac{1}{a_2} \sum_{r=1}^{a_2} \density{r}^{B^{C}CE_2} \enspace.
 	\]
 	According to Corollary~\ref{cor-decoupling-C-Q-state},
 	there exists a unitary operator~$U_2 \in \unitary(\cH^{JB^{C}CE_2D_2})$, read-only on register~$J$, and a projection operator~$\Pitilde \in~\Pos(\cH^{JB^{C}CE_2D_2})$ 
 	 such that
 	\begin{multline}
 	\label{eq:Dmax-tau2-decoupled-BcC}
         U_2 \left(  \tau_1^{RB^{R}JB^{C}C} \tensor
         \density{0}^{E_2} \tensor \xi_{a_2:n_2}^{D_2} \right) U_2^{\dagger} 
          \quad
         \leq \quad \log (1+15\delta_2^2) \; \tau_1^{RB^{R}J} \tensor \nu_2^{B^{C}CE_2} \tensor \xi_{1:n_2}^{D_2} \enspace,
    \end{multline}
    \begin{equation}
    \label{eq:decoupled-substate-tau2}
        \Pitilde \br{\tau_1^{R B^{R}J} \tensor \nu_2^{B^{C}CE_2} 
        \tensor \xi_{1:n_2}^{D_2}} \Pitilde  \quad \preceq\quad 2 
        \; U_2 \left( \tau_1^{R B^{R}JB^{C}C} \tensor
         \density{0}^{E_2} \tensor \xi_{a_2:n_2}^{D_2} \right) U_2^{\dagger} \enspace,
    \end{equation}
    and
    \begin{equation}
    \label{eq:trace-proj-tau2}
        \trace \left[ \Pitilde U_2 \left( \tau_1^{RB^{R}JB^{C}C} \tensor
         \density{0}^{E_2} \tensor \xi_{a_2:n_2}^{D_2} \right) U_2^{\dagger} \right] \quad = \quad 1 \enspace.
    \end{equation}
 	Define
 	\begin{equation*}
 	    \tau_2^{RB^{R}JB^{C}CE_2 D_2} 
 	    \quad \coloneqq \quad U_2 \left( \tau_1^{RB^{R}JB^{C}C} \tensor 
         \density{0}^{E_2}\tensor\xi_{a_2:n_2}^{D_2}\right)U_2^{\dagger} \enspace,
 	\end{equation*}
 	and
 	\begin{equation*}
 	    \ket{\kappa_2}^{RAB^{R}JB^{C}CE_2 D_2 D_2'FF'} \quad
 	    \coloneqq \quad U_2 \left( \ket{\kappa_1}^{RAB^{R}JB^{C}C FF'} \tensor 
 	    \ket{0}^{E_2} \tensor \ket{\xi_{a_2:n_2}}^{D_2 D_2'} \right) \enspace.
 	\end{equation*}
 	
 	Since~$U_2$ is read-only on register~$J$ and 
 	~$J$ is classical in the state~$\kappa_1^{RB^{R}JB^{C}C}$, the unitary operator~$U_2$ keeps~$\kappa_1^{RB^{R}J}$ intact.
 	So, we have
 	\begin{equation}
 	\label{eq:kappa2=psi1-RBRJ}
 	    \kappa_2^{RB^{R}J} \quad = \quad \kappa_1^{RB^{R}J} \quad = \quad \psi_1^{RB^{R}J} \enspace.
 	\end{equation}
 	Moreover, by Eq.~\eqref{eq:Dmax-tau2-decoupled-BcC},~$\tau_2$ is close to a product state in max-relative entropy and therefore, we can claim the following statement. 
 	\begin{claim}
 	\label{claim-1}
 	For the state~$\kappa_2$ defined above, we have
 	\begin{align}
 	\label{eq:Dmax-kappa2-product}
 	    \Dmax \mleft( \left. \kappa_2^{RB^{R}JB^{C}CE_2 D_2 F} 
         \right\|~ \kappa_2^{RB^{R}J} \tensor \nu_2^{B^{C}CE_2} \tensor 
         \xi_{1:n_2}^{D_2} \tensor \frac{\id^{F}}{|F|} \mright)
         & \quad \le \quad \Dmax\mleft({\psi}^{RBC} \| ~ \sigma^{RBC} \mright) 
         + 5 \delta_2 \enspace,
    \end{align}
    and
    \begin{align}
    \label{eq:Dh-kappa2-product}
         \Dh^{\epsilon_2^2}\mleft( \left. \kappa_2^{B^{R}JB^{C}CE_2 D_2 F} 
         \right\|~ \kappa_2^{B^{R}J} \tensor \nu_2^{B^{C}CE_2} \tensor 
         \xi_{1:n_2}^{D_2} \tensor \frac{\id^{F}}{|F|} \mright)
         \quad \ge \quad \Dh^{\epsilon_2^4/4}\mleft({\psi}^{BC} \| ~ \sigma^{BC} \mright) - 1 \enspace.
 	\end{align}
 	\end{claim}
We prove the claim at the end of this section.

To redistribute registers~$B^C C$ in the state~$\psi_1$ with the desired
 	cost, Claim~\ref{claim-1} suggests that it would be sufficient for parties 
 	to transform their joint state~$\psi_1$ to~$\kappa_2$ through the unitary
 	operators~$ U_2 U_1$, then use the AJW protocol to redistribute 
 	registers~$B^C C E_2 D_2 F$, and finally, transform back~$\kappa_2$ to 
 	the state~$\psi_1$ by applying~$ U_1^{-1} U_2^{-1}$. However, 
 	in order to apply~$U_2 U_1$, one needs to have access to all the 
 	registers~$J B^C C$, but initially registers~$B^C C$ are with Alice and 
 	register~$J$ is with Bob. This problem can be resolved using the Uhlmann theorem, as in the GHZ example. Recall that~$\kappa_2^{RB^{R}J} = \psi_1^{RB^{R}J}$ as mentioned in Eq.~\eqref{eq:kappa2=psi1-RBRJ}. Therefore, by the Uhlmann Theorem, there exists an isometry~$V: \cH^{AB^{C}C} \rightarrow \cH^{A B^{C} C E_2 D_2 D_2' FF'}$ such that
 	\begin{equation}
 	\label{eq:dist-omega-kappa}
 	    V \ket{\psi_1}^{RAB^{R}JB^{C}C} \quad = \quad \ket{\kappa_2}^{RAB^{R}J B^{C}CE_2D_2D_2'FF'} \enspace.
 	\end{equation}
 	Notice that~$V$ only acts on registers~$A B^C C$ which are initially 
 	with Alice and so she can apply the isometry~$V$ locally  to 
 	transform~$\psi_1$ to~$\kappa_2$.

Now we have all the ingredients for the new state redistribution protocol. We describe the steps systematically below. Let
\begin{equation*}
\beta \quad \coloneqq \quad \Dmax\mleft(\psi^{RBC} \| ~ \sigma^{RBC} \mright) 
         + 5 \delta_2 \enspace,
\end{equation*}
and~$m \coloneqq \ceil{\tfrac{2^{\beta}}{\epsilon_2^{2}}}$, where~$\epsilon_2 \in (0,1)$.
Let~$S$ and~$T$ be quantum registers such that~$|S|=|T|= \abs{B^{C}CE_2D_2 F}$. Let~$\ket{\eta}^{ST}$ be a purification of~$\nu_2^{B^{C}CE_2} \tensor 
         \xi_{1:n_2}^{D_2} \tensor \frac{\id^{F}}{|F|}$ such that~$\eta^{T} = \nu_2^{B^{C}CE_2} \tensor  \xi_{1:n_2}^{D_2} \tensor \frac{\id^{F}}{|F|}$.

\textbf{The protocol.} In order to redistribute~$\ket{\psi}^{RABC}$, Alice and Bob implement the following steps.
\begin{enumerate}
\item Initially, Alice and Bob start in the state~$\ket{\psi}^{RABC}$, and share the quantum state~$\ket{\xi}^{D_1' D_1}$ and~$m$ copies of the state~$\ket{\eta}^{ST}$ in registers~$(S_i T_i : i \in [m])$.
Hence, the initial joint quantum state of Ref, Alice, and Bob is
\[
		\ket{\psi}^{RABC} \tensor \ket{\xi}^{D_1'D_1} \bigotimes_{i=1}^{m} \ket{\eta}^{S_i T_i} \enspace,
\]
such that register~$R$ is held by Ref, registers~$(AC D_1' S_1 \dotsc S_m)$ 
are held by Alice, and registers~$(B D_1 T_1 \dotsc T_m)$ are held by Bob. 

\suppress{
		~$\ket{\xi'}^{X'X}$,
		 $\ket{\xi_{a_2:n_2}}^{D_2'D_2}$ 
		and~$m \coloneqq \ceil{\tfrac{2^{\beta}}{\epsilon_2^{2}}}$ copies of the 
		state~$\ket{\eta}^{ST}$. Hence, the initial joint quantum state of Ref, Alice and Bob is
		\[
		\ket{\psi}^{RABC} \tensor \ket{\xi}^{D_1'D_1} \tensor \ket{\xi_{a_2:n_2}}^{D_2'D_2} \bigotimes_{i=1}^{m} 
		\ket{\eta}^{S_i T_i} \enspace,
		\]
		such that register~$R$ is held by Ref, registers~$(ACS_1\ldots S_m D_1'D_2')$ 
		are held by Alice and registers~$(BT_1 \ldots T_m D_1 D_2)$ are held by Bob.
}

		\item Alice and Bob pre-process their joint state via local transformations, without any communication. I.e., Bob
		applies the isometry~$U_{\rii,B} U_{\ri}$ on his registers, and Alice applies 
		the isometry~$U_{\rii,A}$ on her registers. This transforms their joint state on~$RABC D_1' D_1$  
		into a quantum state~$\psi_2^{R A B^R J B^C C D_1' D_1}$ which has purified distance at most~$\delta_1$ from~$\psi_1^{R A B^R J B^C C} \tensor \xi^{D_1' D_1} $, where the state~$\psi_1$ is as given by Eq.~\eqref{eq-psi1}. 
		
		At this point, the registers~$(A B^{C} C D_1')$ are with Alice, registers~$(B^{R} J D_1)$ are with Bob, and register~$(R)$ is with Ref. Registers~$(S_i T_i)$ are not touched in this step, and are shared as before. Registers~$D_1' D_1$ are not used after this point, and may be discarded.

\item
Alice and Bob perform the first part of reduction involving the coherent measurement and the decoupling of a classical-quantum state.
I.e., Alice applies the isometry~$V$ to the registers~$A B^C C$.
This transforms their joint state on registers~$R A B^R J B^C C$
into a quantum state~$\omega$ which has purified distance at most~$\delta_1$  from~$\ket{\kappa_2}^{R A B^{R} J B^{C} C E_2 D_2 D_2' F F'}$.

The registers~$(AB^{C}C E_2 D_2 D_2' FF')$ are with Alice, registers~$(B^{R} J)$ are with Bob, and register~$(R)$ is with Ref. Registers~$(S_i T_i)$ are not touched in this step, and are shared as before.
	
\item 
\label{step-ajw}
Alice and Bob run the AJW protocol to transfer the registers~$B^{C}C E_2 D_2 F$ to Bob, as described in Section~\ref{sec-ajw}. 
I.e., the two parties redistribute their registers assuming that their joint state is~$\ket{\kappa_2}^{RAB^{R}JB^{C}CE_2D_2D_2'FF'}$, with the registers held as above. For this, they use the~$m$ copies of the state~$\ket{\eta}^{ST}$ that were shared in registers~$(S_i T_i : i \in [m])$.

For the reader's convenience we include in Table~\ref{tab:ajw-corresp} the correspondence between the states and registers involved in the AJW protocol as presented in Section~\ref{sec-ajw} and those involved in the use of the protocol here.

\begin{table}[t]
    \centering
    \begin{tabular}{|l|c|c|}
    \hline
    & Section~\ref{sec-ajw} & Here \\
    \hline
    & & \\
    State to be redistributed (``input'') & $ \ket{\psi}^{RABC}$ & $ \ket{\kappa_2}^{RAB^{R}JB^{C}CE_2D_2D_2'FF' }$  \\
    & & \\
    Registers of input initially with Ref & $R$ & $R$ \\
    & & \\
    Registers of input initially with Alice & $A$ & $AB^{C}C E_2 D_2 D_2' FF' $ \\
    & & \\
    Registers of input initially with Bob & $B $ & $B^{R} J$ \\
    & & \\
    Registers to be transferred to Bob & $C$ & $B^{C}C E_2 D_2 F$ \\
    & & \\
    Smoothed state & $ \psi'^{RBC}$ & $\kappa_2^{R B^{R} J B^{C}C E_2 D_2 F}$ \\
    & & \\
    State used in application of Convex-Split & $\sigma^C$ & $\nu_2^{B^{C}CE_2} \tensor  \xi_{1:n_2}^{D_2} \tensor \frac{\id^{F}}{|F|}$ \\
    & & \\
    Initial shared entangled state & $ \bigotimes_{i = 1}^m \ket{\sigma}^{L_i C_i}$ & $ \bigotimes_{i = 1}^m \ket{\eta}^{S_i T_i} $ \\
    & & \\
    Registers of entangled state initially with Alice & $L_1 \dotsb L_m$ & $S_1 \dotsb S_m $ \\
    & & \\
    Registers of entangled state initially with Bob & $C_1 \dotsb C_m$ & $T_1 \dotsb T_m $ \\
    & & \\
    \hline
    \end{tabular}
    \caption{The correspondence between the states and registers in the AJW protocol as described in Section~\ref{sec-ajw} and those involved in the use of the AJW protocol here.}
    \label{tab:ajw-corresp}
\end{table}

At the end of the AJW protocol, the parties end up with a state~$\widehat{\omega}^{RAB^{R}JB^{C}CE_2D_2D_2'FF'}$ such 
that register~$(R)$ is held  with Ref,~$(AD_2'F')$ are held with Alice and~$(B^{R}JB^{C}CE_2D_2F)$ are held with Bob.
	
		\item
		\label{step-2nd-part}
		Bob completes the second part of reduction involving the coherent measurement and the decoupling of a classical-quantum state and reverses the first pre-processing step. I.e., he applies the operator~$(U_2U_1U_{\ri})^{-1}$ on registers~$B^{R}JB^{C}CE_2D_2F$.
		
		\item The output of the protocol is now the state in registers~$RABC$.
	\end{enumerate}
	
According to Theorem~\ref{thm-QSR-AJW}, the communication cost of this protocol is
	\begin{IEEEeqnarray*}{l}
	   \frac{1}{2} \left[ \Dmax \mleft( \left. \kappa_2^{RB^{R}JB^{C}CE_2 D_2 F} 
         \right\|~ \kappa_2^{RB^{R}J} \tensor \nu_2^{B^{C}CE_2} \tensor 
         \xi_{1:n_2}^{D_2} \tensor \frac{\id^{F}}{|F|} \mright) \right.\\ \qquad
         - \> \left. \Dh^{\epsilon_2^{2}}\mleft( \left. \kappa_2^{B^{R}JB^{C}CE_2 D_2 F} 
         \right\|~ \kappa_2^{B^{R}J} \tensor \nu_2^{B^{C}CE_2} \tensor 
         \xi_{1:n_2}^{D_2} \tensor \frac{\id^{F}}{|F|} \mright) \right]
         + \log \frac{1}{\epsilon_2^{2}} \\
    \end{IEEEeqnarray*}
    which is at most
    \begin{IEEEeqnarray*}{l}
         \quad \frac{1}{2} \left[\Dmax\mleft({\psi}^{RBC} \| ~ \sigma^{RBC} \mright) - \Dh^{\epsilon_2^4/4} \mleft({\psi}^{BC} \| ~ \sigma^{BC} \mright) \right]
         + 5 \, \delta_2 + \log \frac{1}{\epsilon_2^{2}} + 1 \enspace, 
	\end{IEEEeqnarray*}
	by Claim~\ref{claim-1}.
	
\textbf{Correctness of the protocol.} Let~$\phi$ be the final joint state of parties in the above protocol. We have
 	 \begin{IEEEeqnarray*}{rCl}
 	 \IEEEeqnarraymulticol{3}{l}{
 	 	\rP \left( \phi^{RAB C}, \; \psi^{RABC} \right) } \\ 
 	 	& \leq & \quad \rP\left( \phi^{RA B C E_2D_2 D_2'FF'},  ~ {\psi}^{RABC} \tensor \density{0}^{E_2} \tensor\xi_{a_2:n_2}^{D_2D_2'} \tensor \Psi^{FF'} \right) \\
 	 	& \le & \quad \rP\left( \widehat{\omega}^{RAB^{R}JB^{C}CE_2D_2D_2'FF'}, ~ \kappa_2^{RAB^{R}JB^{C}CE_2D_2D_2'FF'} \right) \\
 	 	& \leq &\quad \rP \left(\widehat{\omega}^{RAB^{R}JB^{C}CE_2D_2D_2'FF'}, ~ \omega^{RAB^{R}JB^{C}CE_2D_2D_2'FF'} \right) \\
 	 	&& \quad + \> \rP \left(\omega^{RAB^{R}J B^{C}CE_2D_2D_2'FF'}, ~ \kappa_2^{RAB^{R}JB^{C}CE_2D_2D_2'FF'} \right) \\
 	 	& \leq & \quad   9 \epsilon_2 + \delta_1  \enspace.
 	 \end{IEEEeqnarray*}
Here, the first and second inequalities follow from monotonicity of purified distance under quantum operations. In the first, we consider the extensions of the two states to a larger set of registers. In the second inequality, we consider the states by reversing the isometries in step~\ref{step-2nd-part} of the protocol. The third inequality is the Triangle Inequality for purified distance. The last inequality holds since~$\widehat{\omega} \in \sB^{9 \epsilon_2}(\omega)$ by Theorem~\ref{thm-QSR-AJW}, and~$\omega \in \sB^{\delta_1}(\kappa_2)$. 
 	 	
By the properties of the embezzlement protocol due to van Dam and Hayden~\cite{vDH03-embezzling} (see Eqs.~\eqref{eq:vDH-embezzlement} and~\eqref{eq:vDH-embezzlement-error}) and the protocol given by Corollary~\ref{cor-decoupling-C-Q-state}, we can make~$\delta_1$ 
and~$\delta_2$ arbitrarily small by choosing suitable
entangled states shared between Alice and Bob. (Note that this comes at the cost of shared entanglement with arbitrarily large local dimension.)
Hence, the statement of the theorem follows.

It only remains to prove Claim~\ref{claim-1}.

\begin{proofof}{Claim~\ref{claim-1}}
	Consider the states and operators defined in the description preceding the protocol.  Since register~$J$ is classical in both~$\kappa_1^{RB^{R}JB^{C}C}$ 
 	and~$\tau_1^{RB^{R}JB^{C}C}$ and~$U_2$ is read-only on~$J$, we have 
 	that~$\kappa_2^{RB^{R}J} = \tau_2^{RB^{R}J} = \tau_1^{RB^{R}J}$.
 	Therefore, we get
 	\begin{IEEEeqnarray*}{rCl}
         \IEEEeqnarraymulticol{3}{l}
 	     {\Dmax \mleft( \left. \kappa_2^{RB^{R}JB^{C}CE_2 D_2 F} 
         \right\|~ \kappa_2^{RB^{R}J} \tensor \nu_2^{B^{C}CE_2} \tensor 
         \xi_{1:n_2}^{D_2} \tensor \frac{\id^{F}}{|F|} \mright)} \\
         & \leq & \quad \Dmax \mleft( \left. \kappa_2^{RB^{R}JB^{C}CE_2 D_2 F}
         \right\|~ \tau_2^{RB^{R}JB^{C}CE_2 D_2}\tensor \frac{\id^{F}}{|F|}\mright) \\ \quad
         & & \quad + \> \Dmax \mleft( \left. \tau_2^{RB^{R}JB^{C}CE_2 D_2}
         \right\|~ \tau_2^{RB^{R}J} \tensor \nu_2^{B^{C}CE_2} \tensor 
         \xi_{1:n_2}^{D_2} \mright) \\ \quad
         & \le & \quad \Dmax\mleft({\psi}^{RBC} \| ~ \sigma^{RBC} \mright) 
         + \log (1+15\delta_2^2)  \enspace,
 	\end{IEEEeqnarray*}
 	where the last inequality is a consequence of 
 	Eq.~\eqref{eq:Dmax-tau2-decoupled-BcC} and the fact 
 	that~$\kappa_2^{RB^{R}JB^{C}CE_2 D_2 F}$ and $\tau_2^{RB^{R}JB^{C}CE_2 D_2 F}$
 	are obtained by the applying the same unitary transformation to~${\psi}^{RBC}$ and~$\sigma^{RBC}$, respectively.
 	The above equation implies Eq.~\eqref{eq:Dmax-kappa2-product} since~$\log_2(1+15x^2) \le 5x$ for all~$x \geq 0$.
 	
 	In the rest of the proof, we show that
  	\begin{IEEEeqnarray}{rCl}
 	    \IEEEeqnarraymulticol{3}{l}
         {\Dh^{\epsilon_2^2}\mleft( \left. \kappa_2^{B^{R}JB^{C}CE_2 D_2 F} 
         \right\|~\kappa_2^{B^{R}J} \tensor \nu_2^{B^{C}CE_2} \tensor 
         \xi_{1:n_2}^{D_2} \tensor \frac{\id^{F}}{|F|} \mright)} \nonumber \\ \quad 
         & \ge & \quad \Dh^{\epsilon_2^4/4}\mleft( \left. \kappa_2^{B^{R}JB^{C}CE_2 D_2 F}  \right\|~\tau_2^{B^{R}JB^{C}CE_2 D_2} \tensor \frac{\id^{F}}{|F|} \mright) - 1  \enspace. \label{eq:Dh-middle-step}
 	\end{IEEEeqnarray}
 	Then, Eq.~\eqref{eq:Dh-kappa2-product} follows  since~$\kappa_2^{RB^{R}JB^{C}CE_2 D_2 F}$ and $\tau_2^{RB^{R}JB^{C}CE_2 D_2 F}$
 	are obtained by the applying the same unitary transformation to~${\psi}^{RBC}$ and~$\sigma^{RBC}$, respectively.
 	Let
\[
 	\lambda \quad \coloneqq \quad \Dh^{\epsilon_2^4/4}\mleft( \left. \kappa_2^{B^{R}JB^{C}CE_2 D_2 F}  \right\|~\tau_2^{B^{R}JB^{C}CE_2 D_2F} \mright) \enspace, 
\]
 	and~$\Pi'$ be the POVM operator achieving~$\lambda$, i.e.,
 	\begin{IEEEeqnarray*}{rCl}
 	\trace \mleft[ \Pi' \kappa_2^{B^{R}JB^{C}CE_2 D_2 F} \mright] \quad \ge \quad 1- \frac{\epsilon_2^4}{4}
 	\end{IEEEeqnarray*}
 	and
 	\begin{equation*}
 	    \trace \mleft[ \Pi' \left(\tau_2^{B^{R}JB^{C}CE_2 D_2} \tensor \frac{\id^{F}}{|F|} \right) \mright] \quad = \quad 2^{-\lambda} \enspace.
 	\end{equation*}
     Recall that~$\kappa_2^{B^R J} = \tau_2^{B^R J} =\tau_1^{B^R J}$. So, Eq.~\eqref{eq:decoupled-substate-tau2} implies that
    \begin{equation}
    \label{eq:P-kappa2-product}
         \Pitilde \left( \kappa_2^{B^{R}J} \tensor \nu_2^{B^{C}CE_2} \tensor 
         \xi_{1:n_2}^{D_2} \right) \Pitilde \quad \preceq\quad  2 \; \tau_2^{B^{R}JB^{C}CE_2D_2} \enspace.
    \end{equation}
    Since~$\sigma^{RBC} \in \ME^{\epsilon_2^4/4, \psi}$, the state~$\kappa_2^{JB^{C}CE_2D_2}$ is~$(\epsilon_2^4/4)$-close to~$\tau_2^{JB^{C}CE_2D_2}$ in purified distance. This implies that  
    \begin{equation}
    \label{eq:Tr-P-kappa2}
        \trace \mleft[ \Pitilde \, \kappa_2^{B^{R}JB^{C}CE_2D_2F}  \mright] \quad \ge \quad \trace \mleft[ \Pitilde \, \tau_2^{B^{R}JB^{C}CE_2D_2F}  \mright] - \frac{\epsilon_2^4}{4} \quad = \quad 1 - \frac{\epsilon_2^4}{4} \enspace,
    \end{equation}
    using Theorem~\ref{prop-trnorm-povm}, Theorem~\ref{prop-FvdG}, and Eq.~\eqref{eq:trace-proj-tau2}. So, the Gentle Measurement lemma, Lemma~\ref{prop-gentl-meas}, implies that
    \begin{equation}
    \label{eq-trnorm-kappa_2-Pitilde}
        \trnorm{\frac{\Pitilde \, \kappa_2^{B^{R}JB^{C}CE_2D_2F} \Pitilde}{\trace \mleft[ \Pitilde \kappa_2^{B^{R}JB^{C}CE_2D_2F}  \mright]} - \kappa_2^{B^{R}JB^{C}CE_2D_2F} } \quad \le \quad \epsilon_2^2 \enspace.
    \end{equation}
    Define the POVM operator~$\Pi \coloneqq \Pitilde \, \Pi' \Pitilde$. By Eq.~\eqref{eq-trnorm-kappa_2-Pitilde}, Eq.~\eqref{eq:Tr-P-kappa2}, and Theorem~\ref{prop-trnorm-povm} we have
    \begin{IEEEeqnarray*}{rCl's}
        \trace \mleft[ \Pi \, \kappa_2^{B^{R}JB^{C}CE_2D_2F} \mright] \quad 
        & = & \quad  \trace \mleft[ \Pi' \Pitilde \, \kappa_2^{B^{R}JB^{C}CE_2D_2F} \Pitilde \mright] \\
        & \geq & \quad \left(1 - \frac{\epsilon_2^4}{4}\right) \left(\trace \mleft[ \Pi' \kappa_2^{B^{R}JB^{C}CE_2D_2F} \mright] - \frac{\epsilon_2^2}{2} \right) 
        & \\  
        & \geq & \quad 1-\epsilon_2^2 \enspace.
    \end{IEEEeqnarray*}
By Eq.~\eqref{eq:P-kappa2-product}, we get
    \begin{IEEEeqnarray*}{rCl}
     \trace \mleft[ \Pi \left( \kappa_2^{B^{R}J} \tensor \nu_2^{B^{C}CE_2} \tensor 
         \xi_{1:n_2}^{D_2} \tensor \frac{\id^{F}}{|F|} \right) \mright] \quad 
         & \leq & \quad 2  \trace \mleft[ \Pi' \left( \tau_2^{B^{R}JB^{C}CE_2D_2} \tensor \frac{\id^{F}}{|F|} \right) \mright] \\
         & = & \quad 2^{-\lambda +1} \enspace,
    \end{IEEEeqnarray*}
which implies Eq.~\eqref{eq:Dh-middle-step}, as desired.
\end{proofof}

\subsection{Asymptotic and i.i.d.\ analysis}
\label{sec-aiid-qsr}

We can obtain the asymptotic cost of redistributing copies of a state using the one-shot bound from the previous section. 
Suppose that the state~$\ket{\psi}^{R^n A^n B^n C^n} \coloneqq \Big( \ket{\psi}^{RABC} \Big)^{\tensor n}$ is shared between Alice~$(A^nC^n)$, Bob~$(B^n)$ and Ref~$(R^n)$ where~$R^{n}$,~$A^n$,~$B^{n}$, and~$C^{n}$ denote~$n$-fold tensor products of registers~$R$,~$A$,~$B$ and~$C$, respectively. Let~$\epsilon \coloneqq \epsilon_1 = \epsilon_2^4/4$. By Theorem~\ref{thm-main}, choosing~$\sigma^{R^n B^n C^n} \eqdef \psi'^{R^n B^n} \tensor \psi^{C^n}$, there exists an entanglement-assisted one-way protocol which outputs a state~$\phi^{R^n A^n B^n C^n} \in \sB^{14\epsilon^{1/4}} (\psi^{R^n A^n B^n C^n})$ with communication cost~$Q(n,\epsilon)$ bounded as
\begin{IEEEeqnarray*}{rCl}
    \IEEEeqnarraymulticol{3}{l}{Q(n, \epsilon)} \\
     &\le& \quad \frac{1}{2}  \inf_{\psi' \in \sB^{\epsilon}(\psi^{R^n B^n C^n})}  
     \left[\Dmax\mleft({\psi'}^{R^n B^n C^n} 
    \Big\| ~ \psi'^{R^n B^n} \tensor \psi^{C^n} \mright) - \Dh^{\epsilon} 
    \mleft({\psi'}^{B^n C^n} \Big\| ~ \psi'^{B^n} \tensor \psi^{C^n} \mright) \right]
    + \log \frac{1}{2 \sqrt{\epsilon}} \\
     & \le & \quad \frac{1}{2}  \inf_{\substack{\psi' \in \sB^{\epsilon}(\psi^{R^n B^n C^n}) \\ \psi'^{R^n B^n} = \psi^{R^n B^n}}}  
     \left[\Dmax\mleft({\psi'}^{R^n B^n C^n} 
    \Big\| ~ \psi^{R^n B^n} \tensor \psi^{C^n} \mright) - \Dh^{\epsilon} 
    \mleft({\psi'}^{B^n C^n} \Big\| ~ \psi^{B^n} \tensor \psi^{C^n} \mright) \right]
    + \log \frac{1}{2 \sqrt{\epsilon}} \\
     & \le & \quad \frac{1}{2}  \inf_{\substack{\psi' \in \sB^{\epsilon}(\psi^{R^n B^n C^n}) \\ \psi'^{R^n B^n} = \psi^{R^n B^n}}}  
     \left[\Dmax\mleft({\psi'}^{R^n B^n C^n} 
    \Big\| ~ \psi^{R^n B^n} \tensor \psi^{C^n} \mright)  - \Dh^{2\epsilon} 
    \mleft({\psi}^{B^nC^n} \Big\| ~ \psi^{B^n} \tensor \psi^{C^n} \mright) \right]
    + \log \frac{1}{2 \sqrt{\epsilon}} \\
     & \le & \quad  \frac{1}{2}    
     \left[\Dmax^{\epsilon/3}\mleft({\psi}^{R^n B^n C^n} 
    \Big\| ~ \psi^{R^n B^n} \tensor \psi^{C^n} \mright)  - \Dh^{2\epsilon} 
    \mleft({\psi}^{B^nC^n} \Big\| ~ \psi^{B^n} \tensor \psi^{C^n} \mright) \right]
    + \log \frac{1}{2 \sqrt{\epsilon}} + \log \frac{72+\epsilon^2}{\epsilon^2}\enspace,
\end{IEEEeqnarray*}
where the first inequality follows from Eq.\eqref{eq:main-cost}, the third inequality follows from the definition of Hypothesis testing entropy, and the last inequality follows from Theorem~\ref{prop-Dmax-psmooth} for the choice of ~$\epsilon,\delta \leftarrow \epsilon/3$,~$\rho^{AB} \leftarrow \psi^{R^n B^n C^n}$,~$\rho^{A} \leftarrow \psi^{R^n B^n}$ and~$\sigma^{B} \leftarrow \psi^{C^n}$. Therefore, using Theorem~\ref{prop-QAEP}, the asymptotic communication rate of redistributing~$n$ copies of a pure state~$\ket{\psi}^{RABC}$ is
\begin{equation*}
    \lim_{n \rightarrow \infty} \; \frac{1}{n} \, Q(n,\epsilon) \quad \le \quad 
    \frac{1}{2} \, \rI(R:C \,|\, B)_{\psi} \enspace.
\end{equation*}

\section{Conclusion and outlook}
\label{sec-outlook}

In this article, we revisited the task of one-shot quantum state redistribution, and introduced a new protocol achieving this task with communication cost
\begin{equation}
    \frac{1}{2} \min_{\psi' \in \sB^{\epsilon}(\psi^{RBC})} \min_{\sigma^{RBC}\in \ME^{\epsilon^2/4,\psi'}}\mleft[\Dmax \br{\psi'^{RBC}\|\sigma^{RBC}}-\Dh^{\epsilon}\br{\psi'^{BC}\|\sigma^{BC}}\mright]+ \Order \! \Big( \log\frac{1}{\epsilon} \Big) \enspace,
\end{equation}
with error parameter~$\epsilon$. This is the first result connecting the communication cost of state redistribution with Markov chains. It provides an operational interpretation for a one-shot representation of quantum conditional mutual information as explained in Sec~\ref{sec:QSR-Intro}. In the special case where $\psi^{RBC}$ is a quantum Markov chain, our protocol leads to near-zero communication which was not known for the previous protocols designed for arbitrary states. Moreover, the communication cost of our protocol is lower than that of all previously known one-shot protocols and we show that it achieves the optimal cost of~$ \tfrac{1}{2} \, \rI(R:C \, | \, B)$ in the asymptotic i.i.d.\ setting. Our protocol also achieves the near-optimal result of ref.~\cite{AJW17-classical-compression} in the case when~$\psi^{RBC}$ is classical.  

A question of interest is whether the communication cost of our one-shot protocol can be bounded with~$\rI(R : C \, | \, B)$. In the quantum communication complexity setting, such a bound would imply the possibility of compressing the communication of bounded-round quantum protocols to their information content. This would lead to a \emph{direct-sum theorem\/} for bounded-round quantum communication complexity~\cite{Touchette15-QIC}.

Another question that we have not addressed in this article is whether our bound is near-optimal. There are several known lower bounds in the literature for the communication cost of entanglement-assisted quantum state redistribution, such as in ref.~\cite[Proposition~6]{BCT16-state-redistribution} and ref.~\cite[Theorem~3.2, Eq.~(3.17)]{LWD16-converse-QSR}. However, it is not clear if our bound matches any of them. Obtaining a near-optimal bound for one-shot quantum state redistribution remains a major open question.

\bibliographystyle{plain}
\bibliography{bibl}

\end{document}